\definecolor{darkred}{rgb}{0.8,0.1,0.1}
\theoremstyle{plain}
\newtheorem{theo}{Theorem}[section]
\newtheorem{lem}[theo]{Lemma}
\newtheorem{propo}[theo]{Proposition}
\theoremstyle{definition}
\newtheorem{defi}[theo]{Definition}
\newenvironment{rem}
{\pushQED{\qed}\remm}
{\popQED\endremm}
\newenvironment{constr}
{\pushQED{\qed}\constrr}
{\popQED\endconstrr}
\numberwithin{equation}{section}
\def\nn{\nonumber}
\def\bbA{\mathbb{A}}
\def\bbK{\mathbb{K}}
\def\bbR{\mathbb{R}}
\def\bbN{\mathbb{N}}
\def\bbF{\mathbb{F}}
\def\hom{\underline{\mathrm{hom}}}
\def\id{\mathrm{id}}
\def\cl{\mathrm{cl}}
\def\1{I}
\def\oone{\mathbbm{1}}
\def\op{\mathrm{op}}
\def\Loc{\mathbf{Loc}}
\def\Set{\mathbf{Set}}
\def\Alg{\mathbf{Alg}}
\def\Vec{\mathbf{Vec}}
\def\TT{\mathbf{T}}
\def\QQ{\mathbf{Q}}
\def\Cat{\mathbf{Cat}}
\def\PsOp{\mathbf{PsOp}}
\def\Grpd{\mathbf{Grpd}}
\def\Op{\mathbf{Op}}
\def\AQFT{\mathbf{AQFT}}
\def\FQFT{\mathbf{FQFT}}
\def\RC{\mathbf{RC}}
\def\LB{\mathcal{LB}\mathsf{ord}}
\def\LBop{\mathscr{L}\!\mathscr{B}\mathsf{op}}
\def\AAA{\mathfrak{A}}
\def\BBB{\mathfrak{B}}
\def\GGG{\mathfrak{G}}
\def\FFF{\mathfrak{F}}
\def\O{\mathcal{O}}
\def\P{\mathcal{P}}
\def\scrO{\mathscr{O}}
\def\scrP{\mathscr{P}}
\def\colim{\mathrm{colim}}
\def\add{\mathrm{add}}
\def\sla{{\scriptscriptstyle \slash}}
\newcommand\und[1]{\underline{#1}}
\DeclareMathOperator*{\cmp}{\text{\raisebox{0.25ex}{\scalebox{0.7}{$\odot\,$}}}}
\def\sk{\vspace{2mm}}
\def\hto{\nrightarrow}
\let\@fnsymbol\@alph
\title{%
An equivalence theorem for algebraic and functorial QFT
}
\author{%
Severin Bunk$^{1,a}$, James MacManus$^{2,b}$\ and\ Alexander Schenkel$^{2,3,c}$\vspace{4mm}\\
{\small ${}^1$ Department of Physics, Astronomy and Mathematics, University of Hertfordshire,}\\ 
{\small College Lane, Hatfield, AL10 9AB, United Kingdom.}\vspace{2mm}\\
{\small ${}^2$ School of Mathematical Sciences, University of Nottingham,}\\
{\small University Park, Nottingham NG7 2RD, United Kingdom.}\vspace{2mm}\\
{\small ${}^3$ Dipartimento di Matematica, Universit{\`a} di Trento and INFN-TIFPA,}\\
{\small Via Sommarive 14, 38123 Povo (Trento), Italy.}\vspace{4mm}\\
{\small \begin{tabular}{ll}
Email: & ${}^a$~\href{mailto:s.bunk@herts.ac.uk}{\texttt{s.bunk@herts.ac.uk}}\\
& ${}^b$~\href{mailto:james.macmanus@nottingham.ac.uk}{\texttt{james.macmanus@nottingham.ac.uk}}\\
& ${}^c$~\href{mailto:alexander.schenkel@unitn.it}{\texttt{alexander.schenkel@unitn.it}}
\vspace{2mm}
\end{tabular}
}
}
\date{July 2026}
\begin{document}

\maketitle

\begin{abstract}
\noindent This paper develops a novel approach to functorial quantum field theories (FQFTs) in the context of Lorentzian geometry. The key challenge is that globally hyperbolic Lorentzian bordisms between two Cauchy surfaces cannot change the topology of the Cauchy surface. This is addressed and solved by introducing a more flexible concept of bordisms which provide morphisms from tuples of causally disjoint partial Cauchy surfaces to a later-in-time full Cauchy surface. They assemble into a globally hyperbolic Lorentzian bordism pseudo-operad, generalizing the geometric bordism pseudo-categories of Stolz and Teichner. The associated FQFTs are defined as pseudo-multifunctors into a symmetric monoidal category of unital associative algebras. The main result of this paper is an equivalence theorem between such globally hyperbolic Lorentzian FQFTs and algebraic quantum field theories (AQFTs), both subject to the time-slice axiom and a mild descent condition called additivity.
\end{abstract}
\vspace{-3mm}

\paragraph*{Keywords:} algebraic quantum field theory, functorial quantum field theory, Lorentzian geometry, bordisms, pseudo-operads
\vspace{-2mm}

\paragraph*{MSC 2020:} 81Txx, 18M60, 18N10, 53C50
\vspace{-2mm}

\renewcommand{\baselinestretch}{0.7}\normalsize
\tableofcontents
\renewcommand{\baselinestretch}{1.0}\normalsize


\section{\label{sec:intro}Introduction and summary}
Quantum field theory (QFT) is a fundamental pillar of 
modern theoretical physics, with extensive applications 
in high-energy physics, statistical physics, and exotic 
states of matter. Since its early days,
there has been a strong need and desire to develop
mathematical foundations for QFT in order to guide and
support the rapid developments in physics by identifying
the key principles underlying QFTs and providing rigorous
tools for their analysis. Over time this has led to 
a variety of mathematical axiomatizations 
of QFT, which differ crucially in their details and depend
on which of the many facets of QFT are considered 
to be most fundamental. Among the most prominent recent
approaches to mathematical QFT are the following: 
1.)~Algebraic QFT (AQFT) \cite{HaagKastler,BFV} 
assigns algebras of local quantum 
observables to suitable open subsets in a Lorentzian spacetime,
emphasizing the underlying Lorentzian geometry
of relativistic QFT and the role of operator algebras 
in quantum theory.
2.)~Functorial QFT (FQFT) \cite{Witten,Atiyah,Segal}
formalizes a concept of `time evolution' for the states or observables
of a QFT along bordisms between codimension-one hypersurfaces in `spacetime'.
Since this approach is used predominantly in 
the context of topological or conformal QFTs, the terms `spacetime'
and `time evolution' have to be interpreted with some care,
as they do not in general represent evolutions along the time dimension
of a Lorentzian manifold.
3.)~The concept of a prefactorization algebra (PFA) \cite{CG1,CG2}
captures the algebraic structures of local quantum observables
which are obtained via Batalin–Vilkovisky quantization. This approach
is very general, as it applies rather universally to 
many geometric scenarios (e.g.\ topological, complex, 
Riemannian and Lorentzian spacetimes), but it is currently only well
understood in the context of perturbative QFT.
\sk

Given the existence of multiple axiomatizations of QFT, it
is important to understand if and how the different
approaches are related to each other. 
This is not only conceptually 
valuable for the cohesiveness of mathematical QFT as a research area, 
but also offers excellent opportunities for a fruitful exchange 
of ideas and techniques across various research communities.
We will now briefly recall some of the existing comparison 
results between different axiomatizations of QFT,
focusing on the case of QFTs defined over (globally hyperbolic) Lorentzian
manifolds, which is also the context of our present paper.
The relationship between PFA and AQFT is by now relatively well understood.
The first result in this direction is due to Gwilliam and Rejzner \cite{GR1},
who observed that the construction of free (i.e.\ non-interacting) QFTs
in both frameworks admits a direct comparison. This was later generalized
to free (higher) gauge theories \cite{BMS} and 
examples of perturbatively interacting QFTs \cite{GR2}. An example-independent
equivalence theorem between suitable categories of PFAs and AQFTs was
proven in \cite{FAvsAQFT} for the case of a $1$-categorical target.
Significant steps towards an $\infty$-categorical generalization
of this result appeared recently in \cite{BCGS}.
However, a full proof at this level is still outstanding.
\sk

The relationship between FQFT and either AQFT or PFA is more subtle and to
the best of our knowledge less understood. One of the apparent difficulties
is that in FQFT one traditionally focuses on the state spaces
of a QFT, while AQFT and PFA emphasize observables, 
so one is tempted to explore their relationship by generalizing
the equivalence between the Schr\"odinger and Heisenberg 
pictures from quantum mechanics to QFT. Some of these aspects
were studied by Schreiber in \cite{Schreiber} and later by 
Johnson-Freyd in \cite{JohnsonFreyd}. See also 
\cite{Scheimbauer} for a construction of FQFTs from PFAs, 
though from a somewhat different perspective. However, in the context of
QFT on Lorentzian spacetimes, one does not expect such an 
equivalence between the Schr\"odinger and Heisenberg pictures
to exist. The reason for this is that the algebras of
quantum observables of a QFT admit multiple inequivalent representations,
i.e.\ the Stone-von Neumann uniqueness theorem from
quantum mechanics fails in QFT, which in general makes it impossible to single out
a distinguished state space. For accessible reviews which address these points see e.g.\ 
\cite{FewsterRejzner} and \cite{FV}. The inadequacy of state spaces
in Lorentzian geometric contexts motivates us to consider FQFTs which assign
algebras of observables to codimension $1$ \textit{spacelike} hypersurfaces
and implement their `time evolution' through bordisms.
This is conceptually compatible with the more traditional 
assignment of observable algebras in codimension $0$ in AQFT since, 
under the usual assumption of a well-posed time evolution (called the time-slice axiom),
data in codimension $0$ can be identified with data in (collar neighborhoods of) 
codimension $1$ \textit{spacelike} hypersurfaces.
A notion of functorial QFTs assigning state spaces can then be introduced in terms
of twisted or relative functorial field theories in the sense of \cite{StolzTeichner,FT},
which we will briefly discuss in our context in Section \ref{sec:twistedFFT}.
We have shown in our previous work \cite{FFTvAQFT} that 
every AQFT has an underlying FQFT which is defined on a suitable
globally hyperbolic Lorentzian bordism category and assigns observable algebras
to Cauchy surfaces. However, this construction is forgetful in the sense 
that the underlying FQFT does not capture the spatially local structure of the AQFT. The
reason for this lies in Lorentzian geometry \cite{BernalSanchez}: 
Every globally hyperbolic Lorentzian manifold is of the form $M\cong\bbR\times \Sigma$,
with $\bbR$ representing a time dimension and $\Sigma$ representing Cauchy surfaces,
which limits the available bordisms between two Cauchy surfaces. In particular,
globally hyperbolic Lorentzian bordisms between two Cauchy surfaces can neither change the topology
nor detect spatially local features in the Cauchy surface $\Sigma$.
\sk

The main aim of this paper is to develop and significantly improve
the concept of globally hyperbolic Lorentzian FQFTs from our previous 
work \cite{FFTvAQFT}, and to prove an equivalence theorem between AQFT and FQFT
in this context. The key idea behind our improvement can be explained straightforwardly
at an informal level, but its precise mathematical implementation is rather technical
and will occupy a large portion of the present paper. In contrast to \cite{FFTvAQFT} 
where we focused on bordisms between two full Cauchy surfaces, here we consider 
globally hyperbolic Lorentzian bordisms $N$ of the form
\begin{flalign}\label{eqn:bordpicture}
\begin{gathered}
\begin{tikzpicture}[scale=0.95]
\draw[thick, ->] (-2.6,-2) -- (-2.6,2) node[above] {{\footnotesize time}};
\draw[dotted,fill=gray!5] (-2.25,0) -- (0,2.25) -- (2.25,0) -- (0,-2.25) -- (-2.25,0);
\draw[thick] (-2.25,0) to [out=30,in=150] (2.25,0);
\draw[thick] (-1.5,-0.35) to [out=-10,in=190] (-1,-0.35);
\draw[thick] (-0.7,-0.35) to [out=-10,in=190] (0,-0.35);
\draw[thick] (0.9,-0.35) to [out=-10,in=190] (1.4,-0.35);
\draw (0.5,-0.15) node[below] {{\small $\cdots$}} ;
\draw (0,0.65) node[above] {{\footnotesize $\Sigma_1$}} ;
\draw (-1.20,-0.4) node[below] {{\footnotesize $\,\Sigma_{0_1}$}} ;
\draw (-0.3,-0.4) node[below] {{\footnotesize $\Sigma_{0_2}$}} ;
\draw (1.15,-0.4) node[below] {{\footnotesize $\Sigma_{0_n}\,$}} ;
\draw (0.75,2.25) node[below] {{\footnotesize $N$}} ;
\end{tikzpicture}
\end{gathered}
\end{flalign}
which go from a tuple $(\Sigma_{0_1},\dots,\Sigma_{0_n})$ of causally disjoint
partial Cauchy surfaces, for any non-negative integer $n\in\bbN_0$, 
to a later-in-time full Cauchy surface $\Sigma_1$.
Such partial $n$-to-$1$ bordisms clearly allow us to resolve spatially local features
since one can choose the individual partial Cauchy surfaces $\Sigma_{0_i}$ to be arbitrarily small.
Furthermore, they endow our globally hyperbolic Lorentzian FQFTs 
with a multiplicative structure which is similar
to the pair-of-pants bordisms in topological QFT and whose physical significance
in our Lorentzian geometric context is to encode a notion of relativistic causality, see
Remark \ref{rem:meaningofnto1}. The mathematical structure
encoding globally hyperbolic Lorentzian $n$-to-$1$ bordisms as in \eqref{eqn:bordpicture}
is not that of a (symmetric monoidal) category as in ordinary FQFT, but rather that of an operad.
More precisely, since geometric bordisms must be endowed with suitable collar regions in order
to allow for a well-defined gluing, see e.g.\ \cite{StolzTeichner} and \cite{FFTvAQFT},
they will form a \textit{pseudo-operad}, which is an operadic generalization
of the concept of pseudo-categories from \cite{PseudoCats}. We will review the basic theory 
of pseudo-operads and develop additional technology we require in that area 
in Appendix \ref{app:pseudooperads}.
\sk

We now explain our results in more detail by outlining the content of this paper.
In Section~\ref{sec:prelim}, we recall some basic aspects of Lorentzian geometry and AQFT.
From the many equivalent definitions of AQFTs available in the literature,
the most convenient one for our purposes is to define an AQFT as a multifunctor
\begin{flalign}
\AAA \,:\, \P_{\Loc_m^\perp} ~\longrightarrow~ \Alg_{\mathsf{uAs}}(\TT)
\end{flalign}
from a suitable operad 
\smash{$\P_{\Loc_m^\perp}$} that is constructed out of the category $\Loc_m^\perp$ of 
$m$-dimensional globally hyperbolic Lorentzian spacetimes (see Definition \ref{def:PLoc}) 
to the symmetric monoidal category $\Alg_{\mathsf{uAs}}(\TT)$ of unital associative algebras 
in a cocomplete closed symmetric monoidal category $\TT$. In Section \ref{sec:LBord},
we define our novel $m$-dimensional \textit{globally hyperbolic Lorentzian bordism 
pseudo-operad} $\LBop_m$ which describes $n$-to-$1$ bordisms as in \eqref{eqn:bordpicture}
(together with suitable collar regions) and their gluing. The precise definition
of this bordism pseudo-operad is technically rather challenging as a consequence 
of the peculiarities of globally hyperbolic Lorentzian geometry,
and it relies in particular on non-elementary geometric results
in this context that we will collect in Appendix \ref{app:Lorentzian},
see in particular Propositions \ref{prop:bordismgluingregions} and \ref{prop:pushout}.
We prove in Proposition \ref{prop:LBopfibrant} that the pseudo-operad $\LBop_m$ 
satisfies an operadic analogue of the fibrancy condition for pseudo-categories
from \cite{Shulman}, which leads to considerable simplifications when studying
its associated FQFTs.
\sk

In Section \ref{sec:FFTs}, we introduce a novel concept of globally hyperbolic
Lorentzian FQFTs which generalizes and considerably 
improves our previous definition in \cite{FFTvAQFT}. These are defined as
pseudo-multifunctors
\begin{flalign}
\FFF \,:\, \LBop_m~\longrightarrow~ \iota\big(\Alg_{\mathsf{uAs}}(\TT)\big)
\end{flalign}
from our globally hyperbolic Lorentzian bordism pseudo-operad $\LBop_m$
to a pseudo-operad $\iota\big(\Alg_{\mathsf{uAs}}(\TT)\big)$ which is constructed
canonically from the symmetric monoidal category of unital associative algebras.
(See Construction \ref{constr:iota} for the details.)
As a consequence of the fibrancy of the pseudo-operad $\LBop_m$
and Theorem \ref{theo:2adjunction}, there exists an equivalent,
but much simpler description of such FQFTs in terms of ordinary
multifunctors $\FFF: \tau(\LBop_m)\to \Alg_{\mathsf{uAs}}(\TT)$
from a certain truncation $\tau(\LBop_m)$ of the bordism 
pseudo-operad to an ordinary operad (see Construction \ref{constr:tau}).
It is important to stress that the truncated operad
$\tau(\LBop_m)$ remembers much of the rich geometric structure of the pseudo-operad $\LBop_m$.
In particular, its objects and operations keep track of the 
collar regions around Cauchy surfaces and bordisms, which are crucial 
for a well-defined operadic composition.
The truncation is analogous to how the usual bordism categories for TQFTs are 
constructed by taking diffeomorphism classes of bordisms in what would 
otherwise be some flavor of $(2,1)$-category (e.g.~the pseudo-categories of~\cite{StolzTeichner}).
\sk

In Section \ref{sec:equivalence}, we prove an equivalence theorem between
AQFT and FQFT which provides a significant improvement and completion of our earlier comparison
result in \cite{FFTvAQFT}, and thereby also justifies our improved axiomatics 
for globally hyperbolic Lorentzian FQFTs developed in this paper.
More precisely, we prove in Theorem \ref{theo:equivalence}
that there exists, for every spacetime dimension $m\in\bbN$, an equivalence
\begin{flalign}
\AQFT_m^{W,\add}\,\simeq\, \FQFT_m^{W,\add}
\end{flalign}
between the category $\AQFT_m^{W,\add}$ of AQFTs satisfying
the time-slice axiom and the additivity property 
(see Definitions \ref{def:AQFT} and \ref{def:AQFTadditivity})
and the category $\FQFT_m^{W,\add}$ of FQFTs satisfying the time-slice axiom
and the additivity property (see Definitions \ref{def:FFT} and \ref{def:FFTadditivity}).
The proof of this theorem relies on a combination of the
pseudo-operadic techniques we develop and their interaction with 
non-elementary technical results about globally hyperbolic Lorentzian manifolds.
The additional hypotheses of the time-slice axiom, encoding a well-defined
time evolution, and the additivity property, encoding a mild descent condition,
both seem to be crucial to establish our equivalence theorem.
We would like to emphasize that precisely the same hypotheses are also used in the 
equivalence theorem between AQFT and PFA from \cite{FAvsAQFT}, which indicates their 
relevance to relate AQFT with other axiomatizations of QFT.
To prove our equivalence theorem, we explicitly construct functors
between the categories of AQFTs and FQFTs, see Constructions \ref{constr:AQFTtoFQFT} 
and \ref{constr:FQFTtoAQFT}, which are then shown to be quasi-inverse to each other.
\sk

In Section \ref{sec:twistedFFT}, we briefly describe how the concept of twisted or relative FQFTs 
from \cite{StolzTeichner,FT} adapts to our Lorentzian geometric context and thereby provides
a notion of globally hyperbolic Lorentzian FQFTs assigning state spaces.
We will also give physical arguments why it will be difficult, if not impossible, 
to find physically relevant examples of such objects. This provides further justification as to why, in Lorentzian geometric contexts, it is more natural and useful
to consider our concept of algebra-valued FQFTs from Definition \ref{def:FFT}.
\sk

This paper includes two appendices. Appendix \ref{app:pseudooperads} develops the 
basic theory of pseudo-operads which is needed to define the
globally hyperbolic Lorentzian bordism pseudo-operad $\LBop_m$ from Section \ref{sec:LBord}
and its associated concept of FQFTs from Section \ref{sec:FFTs}.
Appendix \ref{app:Lorentzian} proves some non-elementary technical results of a Lorentzian 
geometric nature which are crucial for our developments in the main text.


\section{\label{sec:prelim}Preliminaries on algebraic QFTs}
In this section we recall some basic aspects of Lorentzian geometry
and algebraic quantum field theory (AQFT).
Excellent introductions to Lorentzian geometry can be found, for instance, in 
\cite{ONeill,BGP,Minguzzi}, and we also refer the reader to our 
previous paper \cite[Section 2.1]{FFTvAQFT} for a lightning review
of the most crucial definitions in our context.
Our first definition recalls the typical globally hyperbolic Lorentzian 
spacetime category used in the context of AQFT.
\begin{defi}\label{def:Loc}
For each positive integer $m\in\bbN$, we denote by $\Loc_m$ the 
category whose objects are all oriented and time-oriented globally hyperbolic
Lorentzian manifolds $M$ of dimension $m$ and whose morphisms are all
orientation and time-orientation preserving isometric open embeddings
$f:M\to N$ with causally convex image $f(M)\subseteq N$. The following
distinguished (tuples of) $\Loc_m$-morphisms will play a prominent role:
\begin{itemize}
\item[(a)] A $\Loc_m$-morphism $f:M\to N$ is called \textit{Cauchy}
if its image $f(M)\subseteq N$ contains a Cauchy surface of $N$.

\item[(b)] A cospan of $\Loc_m$-morphisms $f_1 : M_1\to N \leftarrow M_2 :f_2$ 
is called \textit{causally disjoint}, written as $f_1\perp f_2$, if there
exists no causal curve connecting the images $f_1(M_1)\subseteq N$ and $f_2(M_2)\subseteq N$.
\end{itemize}
\end{defi}

The concept of an AQFT over $\Loc_m$, which is sometimes also called an $m$-dimensional locally covariant AQFT,
has been proposed in the seminal work \cite{BFV} of Brunetti, Fredenhagen and Verch. 
See also \cite{FV} for an informative review.
More recently, it has been observed in \cite{AQFToperad} that such AQFTs admit an elegant and
powerful description in terms of algebras over a suitable operad which is constructed
from the category $\Loc_m$ and its causal disjointness relation $\perp$ from Definition \ref{def:Loc}.
For our purposes, the most convenient description of AQFTs is the equivalent 
one established in \cite[Theorem 2.9]{2AQFT}, whose focus is on the following
Lorentzian geometric prefactorization operad associated with Definition \ref{def:Loc}.
\begin{defi}\label{def:PLoc}
The \textit{prefactorization operad} $\P_{\Loc_m^\perp}$ 
is the (colored symmetric) operad which is defined by the following data:
\begin{itemize}
\item[(i)] The objects of the operad $\P_{\Loc_m^\perp}$ are the objects of the category $\Loc_m$.

\item[(ii)] For each non-negative integer $n\in\bbN_0$, the 
set of operations from an $n$-tuple of objects
$\und{M}=(M_1,\dots,M_n)\in\Loc_m^{\times n}$ to a single object
$N\in\Loc_m$ is given by
\begin{flalign}
\P_{\Loc_m^\perp}\big(\substack{N\\ \und{M}}\big)\,\coloneqq\,\bigg\{\und{f} = (f_1,\dots,f_n)\in \prod_{i=1}^n\Loc_m(M_i,N)\,:\, f_i\perp f_j\text{ for all }i\neq j\bigg\}\quad.
\end{flalign}
By our conventions, this means that there exists a unique $0$-ary
operation $\varnothing \to N$ from the empty tuple,
for all $N\in\Loc_m$, and for the $1$-ary operations we have that 
$\P_{\Loc_m^\perp}\big(\substack{N\\ M}\big)=\Loc_m(M,N)$ is 
the set of $\Loc_m$-morphisms, for all $M,N\in\Loc_m$.

\item[(iii)] Operadic composition of an $n$-ary operation
$\und{f}=(f_1,\dots,f_n): \und{M}\to N$ and 
an $n$-tuple $\und{\und{g}} = (\und{g}_1,\dots,\und{g}_n): \und{\und{L}}\to \und{M}$
of $k_i$-ary operations $\und{g}_i : \und{L}_i\to M_i$, for $i=1,\dots,n$, is given by 
the following compositions in the category $\Loc_m$
\begin{flalign}\label{eqn:PLocComp}
\und{f}\,\und{\und{g}}\,\coloneqq\,\big(f_1\,g_{11},\dots, f_1\,g_{1k_1},\dots,f_{n}\,g_{n1},\dots,f_n\,g_{nk_n}\big)\,:\,\und{\und{L}}~\longrightarrow~N\quad.
\end{flalign}
The identity operations $\oone_M\coloneqq\id_M\in \P_{\Loc_m^\perp}\big(\substack{M\\ M}\big)$
are given by the identity morphisms of the category $\Loc_m$.

\item[(iv)] For $\sigma \in S_n$ an element of the permutation group, we write $\und{M}\sigma = (M_{\sigma(1)}, \ldots, M_{\sigma(n)})$.
The permutation actions \smash{$\P_{\Loc_m^\perp}(\sigma):
\P_{\Loc_m^\perp}\big(\substack{N\\ \und{M}}\big)\to \P_{\Loc_m^\perp}\big(\substack{N\\ \und{M}\sigma}\big)$} are given by
\begin{flalign}
\P_{\Loc_m^\perp}(\sigma)(\und{f})\,\coloneqq\,\und{f}\sigma\,=\,
(f_{\sigma(1)},\dots,f_{\sigma(n)})\,:\,\und{M}\sigma~\longrightarrow~N\quad,
\end{flalign}
for all $\und{f}=(f_1,\dots,f_n) : \und{M}\to N$.
\end{itemize}
\end{defi}

By \cite[Theorem 2.9]{2AQFT}, one can describe AQFTs over $\Loc_m$ in terms of algebras
over the operad $\P_{\Loc_m^\perp}$ from Definition \ref{def:PLoc}, however it is important
to emphasize that such algebras must take values in a symmetric monoidal category of unital associative algebras.
To explain our notations, let us fix any cocomplete closed\footnote{Recall that a symmetric monoidal category $\TT$
is closed if the functor $(-)\otimes x : \TT\to \TT$ admits a right adjoint $\hom(x,-) :\TT\to \TT$,
for all $x\in \TT$. From this it follows that the monoidal product $(-)\otimes (-)$
preserves colimits in both entries, which is a property that will be used frequently in our proofs 
in Section \ref{sec:equivalence} below.} symmetric monoidal category $\TT$
and denote by $\Alg_{\mathsf{uAs}}(\TT)$ the category of unital associative algebras in $\TT$.
(In applications, one often considers the closed symmetric monoidal category $\TT=\Vec_\bbK$
of vector spaces over a field $\bbK$, but we do not have to restrict ourselves 
to this case.) 
The category of unital associative algebras $\Alg_{\mathsf{uAs}}(\TT)$ 
is symmetric monoidal with respect to the formation of tensor product algebras
$A\otimes B\in\Alg_{\mathsf{uAs}}(\TT)$ and the monoidal unit $I\in\Alg_{\mathsf{uAs}}(\TT)$
which is given by endowing the monoidal unit $I\in\TT$ with its canonical unital associative algebra structure.
Furthermore, it inherits cocompleteness from the underlying category $\TT$
and the forgetful functor $\Alg_{\mathsf{uAs}}(\TT)\to \TT$ both preserves and reflects
filtered colimits.
\sk

In the following definition, we make use of the standard fact
(see e.g.\ \cite{Mandell}) that every symmetric monoidal category
has an associated (colored symmetric) operad with the same objects, and whose operations are obtained 
by using the symmetric monoidal structure. 
In our case, this means that we can regard $\Alg_{\mathsf{uAs}}(\TT)$
as an operad with sets of operations given by
\begin{flalign}
\Alg_{\mathsf{uAs}}(\TT)\big(\substack{B \\ \und{A}}\big)
\, \coloneqq \,
\Alg_{\mathsf{uAs}}(\TT)\bigg(\bigotimes_{i=1}^n A_i,B\bigg) \quad ,
\end{flalign}
for all $\und{A} = (A_1,\dots,A_n)\in \Alg_{\mathsf{uAs}}(\TT)^{\times n}$ and $B\in \Alg_{\mathsf{uAs}}(\TT)$.
\begin{defi}\label{def:AQFT}
Fix any cocomplete closed symmetric monoidal category $\TT$.
\begin{itemize}
\item[(a)] The \textit{category of AQFTs over $\Loc_m$} is defined as the category
\begin{flalign}
\AQFT_m\,\coloneqq\, \Alg_{\P_{\Loc_m^\perp}}\big(\Alg_{\mathsf{uAs}}(\TT)\big)
\end{flalign}
of $\Alg_{\mathsf{uAs}}(\TT)$-valued algebras over the operad $\P_{\Loc_m^\perp}$
from Definition \ref{def:PLoc}.
This means that an AQFT is a multifunctor 
$\AAA : \P_{\Loc_m^\perp}\to \Alg_{\mathsf{uAs}}(\TT)$ and that a morphism between AQFTs
is a multinatural transformation $\zeta : \AAA\Rightarrow\BBB : \P_{\Loc_m^\perp}\to \Alg_{\mathsf{uAs}}(\TT)$.

\item[(b)] An object $\AAA \in \AQFT_m$ is said to satisfy the \textit{time-slice axiom}
if the multifunctor $\AAA : \P_{\Loc_m^\perp}\to \Alg_{\mathsf{uAs}}(\TT)$ sends every Cauchy morphism 
$f:M\to N$ in $\P_{\Loc_m^\perp}$ to an isomorphism
$\AAA(f) : \AAA(M)\stackrel{\cong}{\longrightarrow} \AAA(N)$ in $ \Alg_{\mathsf{uAs}}(\TT)$. We denote by
\begin{flalign}\label{eqn:AQFTW}
\AQFT_m^W\,\subseteq\, \AQFT_m
\end{flalign}
the full subcategory of all AQFTs satisfying the time-slice axiom.
\end{itemize}
\end{defi}

\begin{rem}\label{rem:AQFT}
The reader might be surprised that there is no reference to the Einstein causality
axiom in Definition \ref{def:AQFT}. This is due to the fact that Einstein causality 
follows by an Eckmann-Hilton argument from the structure of an 
$\Alg_{\mathsf{uAs}}(\TT)$-valued algebra over the operad $\P_{\Loc_m^\perp}$, see
\cite[Theorem 2.9]{2AQFT} for a detailed proof and also \cite[Section 2]{BSchapter}
for an intuitive argument. Furthermore, we would like to mention that the time-slice
axiom can be encoded structurally by a localization 
$L : \P_{\Loc_m^\perp}\to \P_{\Loc_m^\perp}[W^{-1}]$ of the operad
from Definition \ref{def:PLoc} at the set $W$ of Cauchy morphisms.
However, this more abstract point of view is not needed in our paper.
\end{rem}

Our main equivalence theorem in Section \ref{sec:equivalence} 
will hold true only for a certain class of AQFTs which satisfy a 
mild descent (i.e.\ local-to-global) condition that is known as 
the \textit{additivity property}, see e.g.\ \cite{FAvsAQFT}. Loosely speaking,
the additivity property demands that the algebra $\AAA(M)\in \Alg_{\mathsf{uAs}}(\TT)$ 
which is assigned to any spacetime $M\in\Loc_m$ is generated
in a suitable sense by the algebras $\AAA(U)\in \Alg_{\mathsf{uAs}}(\TT)$
which are assigned to all \textit{relatively compact} causally convex opens $U\subseteq M$,
i.e.\ the closure $\mathrm{cl}_M(U)\subseteq M$ of such subsets is compact.
The additivity property can be formalized as follows: For each object $M\in\Loc_m$, let us denote by $\RC_M$ 
the category whose objects are all relatively compact causally convex opens $U\subseteq M$
and whose morphisms are subset inclusions $U\subseteq U^\prime$.
Observe that there
exists a functor $\RC_M\to \Loc_m$ which assigns to each
object $(U\subseteq M)\in \RC_M$ the object $U\in\Loc_m$ (obtained by
endowing $U\subseteq M$ with the restricted orientation, time-orientation and metric)
and to each subset inclusion $U\subseteq U^\prime$ in $\RC_M$
the corresponding inclusion $\Loc_m$-morphism $\iota_U^{U^\prime} : U\to U^\prime$.
This implies that each $\AAA\in\AQFT_m$ can be restricted
along the multifunctor $\RC_M\to \Loc_m \to \P_{\Loc_m^\perp}$ 
to a functor $\AAA\vert : \RC_M\to \Alg_{\mathsf{uAs}}(\TT)$
on the category of relatively compact causally convex opens in $M$.
\begin{defi}\label{def:AQFTadditivity}
An object $\AAA \in \AQFT_m$ is called \textit{additive}
if the canonical $\Alg_{\mathsf{uAs}}(\TT)$-morphism
\begin{flalign}
\colim\Big(\AAA\vert : \RC_M\to \Alg_{\mathsf{uAs}}(\TT)\Big)~\stackrel{\cong}{\longrightarrow}~\AAA(M)
\end{flalign}
is an isomorphism in $\Alg_{\mathsf{uAs}}(\TT)$, for all $M\in\Loc_m$. We denote by
\begin{subequations}
\begin{flalign}
\AQFT_m^{\add}\,\subseteq \,\AQFT_m
\end{flalign}
the full subcategory of all additive AQFTs and by
\begin{flalign}
\AQFT_m^{W,\add}\,\subseteq \, \AQFT_m
\end{flalign}
\end{subequations}
the full subcategory of all additive AQFTs which also satisfy the time-slice axiom.
\end{defi}

\begin{rem}
It is shown in \cite[Lemma 2.10]{FAvsAQFT} that the category $\RC_M$ 
is filtered, for all $M\in\Loc_m$. 
Since the forgetful functor
$\Alg_{\mathsf{uAs}}(\TT)\to \TT$ preserves and reflects
filtered colimits, one can deduce additivity by verifying the 
simpler condition that 
\begin{flalign}
\colim\Big(\AAA\vert : \RC_M\to \TT\Big)~\stackrel{\cong}{\longrightarrow}~\AAA(M)
\end{flalign}
is an isomorphism in $\TT$, for all $M\in\Loc_m$.
\end{rem}


\section{\label{sec:LBord}The globally hyperbolic Lorentzian bordism pseudo-operad}
In this section we develop a non-trivial (and as we shall see 
in Section \ref{sec:equivalence}, particularly fruitful) extension of 
the $m$-dimensional globally hyperbolic Lorentzian 
bordism pseudo-category $\LB_m$ from our previous
work \cite{FFTvAQFT} to a pseudo-operad 
in the sense of Definition \ref{def:pseudooperad}. 
The resulting $m$-dimensional \textit{globally hyperbolic Lorentzian bordism 
pseudo-operad} $\LBop_m$ introduced below successfully
resolves the following main limitations of conventional bordisms 
in the globally hyperbolic Lorentzian context:
By a fundamental result in Lorentzian geometry \cite{BernalSanchez}, 
every globally hyperbolic Lorentzian spacetime $N\in\Loc_m$ has 
an underlying manifold that is diffeomorphic  $N\cong \bbR\times \Sigma$ to a cylinder.
Informally speaking, the real line $\bbR$ can be thought of as
representing a choice of time dimension,
and $\Sigma$ represents the spatial dimensions. 
The conventional type of bordisms appearing in \cite{FFTvAQFT} are described 
by globally hyperbolic Lorentzian spacetimes 
$N\in\Loc_m$. They go from a Cauchy surface $\Sigma_0\subset N$
to a later-in-time Cauchy surface $\Sigma_1\subset N$. 
It is important to stress that such bordisms encode only a concept of time evolution between Cauchy surfaces,
but are unable to capture local features in the spatial dimensions,
which are a crucial aspect of quantum field theory. 
Furthermore, global hyperbolicity forces the bordisms to be of cylindrical shape
and thereby prevents the existence of topology-changing bordisms,
which in other geometric contexts such as
\cite{StolzTeichner} capture multiplicative structures 
of the quantum field theory.
\sk

Our main idea to resolve these shortcomings of globally 
hyperbolic Lorentzian bordisms can be explained at an informal level as follows:
We will introduce below a generalized type of bordisms which are represented
by globally hyperbolic Lorentzian spacetimes $N\in\Loc_m$ and go 
from an $n$-tuple $\und{\Sigma_0} = (\Sigma_{0_1},\dots,\Sigma_{0_n})$
of causally disjoint partial Cauchy surfaces $\Sigma_{0_i}\subset N$
to a later-in-time full Cauchy surface $\Sigma_1\subset N$. See also
\eqref{eqn:bordismpicture} below for a pictorial illustration.
The relaxation from full Cauchy surfaces to partial ones allows us to capture
spatially local phenomena because one can choose the $\Sigma_{0_i}\subset N$ 
to be arbitrarily small. Furthermore, considering $n$-tuples
$\und{\Sigma_0} = (\Sigma_{0_1},\dots,\Sigma_{0_n})$
of causally disjoint partial Cauchy surfaces, for an arbitrary non-negative integer $n\in\bbN_0$, 
introduces similar multiplicative structures in our globally hyperbolic Lorentzian context 
which in other geometric contexts are encoded by topology-changing bordisms.
These multiplicative structures have a crucial physical significance
in our Lorentzian geometric context as they encode a notion of relativistic causality, see
Remark \ref{rem:meaningofnto1} for the details.
\sk

The aim of the remainder of this section is to make the above ideas 
mathematically precise. To that end, we will now define the $m$-dimensional
globally hyperbolic Lorentzian bordism pseudo-operad $\LBop_m\in\PsOp$ 
by listing the required data for a pseudo-operad from Definition~\ref{def:pseudooperad}.
\paragraph{The groupoid $(\LBop_m)_0$:}
The groupoid of objects $(\LBop_m)_0$ of the pseudo-operad $\LBop_m$ 
coincides with the groupoid of objects of the pseudo-category $\LB_m$ from \cite{FFTvAQFT}.
For completeness, let us briefly recall the definition of this groupoid:
\begin{description}
\item[\und{$\mathsf{Obj}$:}] An object in $(\LBop_m)_0$ is a pair $(M,\Sigma)$
consisting of an object $M\in\Loc_m$ and a Cauchy surface $\Sigma\subset M$.
We interpret such data as Cauchy surfaces $\Sigma$ with globally hyperbolic 
Lorentzian collar regions $M$ and we visualize them by pictures of the form
\begin{flalign}
\begin{gathered}
\begin{tikzpicture}[scale=0.9]
\draw[thick, ->] (-1.45,-0.8) -- (-1.45,0.8) node[above] {{\footnotesize time}};
\draw[dotted,fill=gray!5] (-1,0) -- (0,0.6) -- (1,0) -- (0,-0.6) -- (-1,0);
\draw (0,0.8) node {{\footnotesize $M$}}; 
\draw[thick] (-1,0) .. controls (-0.4,0.1) .. (0,0) .. controls (0.4,-0.1) .. (1,0) 
node[pos=0.2, above] {{\footnotesize $\Sigma$}};
\end{tikzpicture}
\end{gathered}\qquad.
\end{flalign}

\item[\und{$\mathsf{Mor}$:}] A morphism in $(\LBop_m)_0$
is a germ of local Cauchy morphisms $[U,g]: (M,\Sigma)\to (M^\prime,\Sigma^\prime)$.
In more detail, a morphism is an equivalence class of pairs $(U,g)$ consisting
of a causally convex open subset $U\subseteq M$ which contains the Cauchy surface $\Sigma\subset U$
and a Cauchy morphism $g:U\to M^\prime$ in $\Loc_m$ satisfying $g(\Sigma)=\Sigma^\prime$.
Two pairs $(U,g)$ and $(\widetilde{U},\widetilde{g})$ 
are equivalent if and only if there exists a causally convex open subset $\widehat{U}\subseteq U\cap \widetilde{U}\subseteq M$
which contains the Cauchy surface $\Sigma\subset \widehat{U}$, such that the restrictions
$g\vert_{\widehat{U}} = \widetilde{g}\vert_{\widehat{U}}$ coincide.
Such morphisms provide identifications between objects which are represented 
by isomorphic Cauchy surfaces with locally isomorphic collar regions,
and we visualize them by pictures of the form
\begin{flalign}
\begin{gathered}
\begin{tikzpicture}[scale=0.9]
\draw[dotted,fill=gray!5] (-1,0) -- (0,1) -- (1,0) -- (0,-1) -- (-1,0);
\draw[dotted,fill=gray!20] (-1,0) -- (0,0.5) -- (1,0) -- (0,-0.5) -- (-1,0);
\draw (0,1.2) node {{\footnotesize $M$}}; 
\draw[thick] (-1,0) .. controls (-0.4,0.1) .. (0,0) .. controls (0.4,-0.1) .. (1,0) 
node[pos=0.2, above] {{\footnotesize $\Sigma$}};
\draw[->,thick] (1.8,0) -- (1.2,0) node [midway, above] {{\footnotesize $\subseteq$}};
\draw[dotted,fill=gray!20] (2,0) -- (3,0.5) -- (4,0) -- (3,-0.5) -- (2,0);
\draw (3,0.7) node {{\footnotesize $U$}}; 
\draw[thick] (2,0) .. controls (2.6,0.1) .. (3,0) .. controls (3.4,-0.1) .. (4,0) 
node[pos=0.2, above] {{\footnotesize $\Sigma$}};
\draw[->,thick] (4.2,0) -- (4.8,0) node [midway, above] {{\footnotesize $g$}};
\draw[dotted,fill=gray!5] (5,0) -- (6,0.75) -- (7,0) -- (6,-0.75) -- (5,0);
\draw[dotted,fill=gray!20] (5,0) -- (6,0.5) -- (7,0) -- (6,-0.5) -- (5,0);
\draw (6,0.95) node {{\footnotesize $M^\prime$}}; 
\draw[thick] (5,0) .. controls (5.6,0.1) .. (6,0) .. controls (6.4,-0.1) .. (7,0) 
node[pos=0.2, above] {{\footnotesize $\Sigma^\prime$}};
\end{tikzpicture}
\end{gathered}\qquad.
\end{flalign}
\end{description}
The identity morphisms in the groupoid $(\LBop_m)_0$
are given by $[M,\id_M] : (M,\Sigma)\to(M,\Sigma)$. The composition of
two morphisms $[U,g] : (M,\Sigma)\to (M^\prime,\Sigma^\prime)$
and $[U^\prime,g^\prime] :(M^\prime,\Sigma^\prime)\to (M^{\prime\prime},\Sigma^{\prime\prime}) $
in $(\LBop_m)_0$ is given by factorizing any choice of representatives according to
\begin{subequations}
\begin{flalign}
\begin{gathered}
\xymatrix{
M &\ar[l]_-{\subseteq} U \ar[r]^-{g}& M^\prime & \ar[l]_-{\subseteq} U^\prime \ar[r]^-{g^\prime}& M^{\prime\prime}\\
&\ar[u]^-{\subseteq}g^{-1}(U^\prime) \ar@{-->}[rru]_-{g}&&&
}
\end{gathered}
\end{flalign}
and defining the composite morphism by
\begin{flalign}
[U^\prime,g^\prime]\,[U,g]\,\coloneqq\, \big[g^{-1}(U^\prime),g^\prime\,g\big]\,:\, (M,\Sigma)~\longrightarrow~(M^{\prime\prime},\Sigma^{\prime\prime})\quad.
\end{flalign}
\end{subequations}
Note that every morphism $[U,g] : (M,\Sigma)\to (M^\prime,\Sigma^\prime)$
in $(\LBop_m)_0$ is invertible with inverse given explicitly by
$[g(U),g^{-1}] : (M^\prime,\Sigma^\prime)\to (M,\Sigma)$.

\paragraph{The groupoids $(\LBop_m)_1^n$:}
The groupoids of $n$-ary operations $(\LBop_m)_1^n$ of the pseudo-operad $\LBop_m$ 
provide a non-trivial generalization of the $1$-ary globally hyperbolic Lorentzian 
bordisms between Cauchy surfaces in the pseudo-category $\LB_m$ from \cite{FFTvAQFT},
taking into account spatial locality and multiplicative structures.
Their precise definition is as follows:
\begin{description}
\item[\und{$\mathsf{Obj}$:}] An object in $(\LBop_m)_1^n$ is a tuple
$(N,\und{\iota_0},\iota_1) : (\und{M_0},\und{\Sigma_0})\hto (M_1,\Sigma_1)$ 
consisting of an object $N\in\Loc_m$ and zig-zags
\begin{flalign}\label{eqn:bordismzigzag}
\xymatrix{
\und{M_0} & \ar[l]_-{\subseteq} \und{V_0}  \ar[r]^-{\und{\iota_0}} & N &\ar[l]_-{\iota_1} V_1 \ar[r]^-{\subseteq} & M_1   
}
\end{flalign}
of operations in the operad $\P_{\Loc_m^\perp}$ from Definition \ref{def:PLoc},
where $V_{0_i}\subseteq M_{0_i}$ is a causally convex open subset which
contains the Cauchy surface $\Sigma_{0_i}\subset V_{0_i}$, for all
$i=1,\dots,n$, and $V_1\subseteq M_1$ is a causally convex open subset
which contains the Cauchy surface $\Sigma_1\subset V_1$.
The $1$-ary operation $\iota_1 : V_1\to N$ in $\P_{\Loc_m^\perp}$ is required to be a Cauchy morphism,
but the $n$-ary operation $\und{\iota_0}: \und{V_0}\to N$ in $\P_{\Loc_m^\perp}$ is allowed to be general.
(In particular, in the case of $n=1$, the $1$-ary operation $\iota_0: V_0\to N$ in $\P_{\Loc_m^\perp}$ 
does not necessarily have to be a Cauchy morphism.) The images in 
$N$ of the Cauchy surfaces must satisfy the following condition:
\begin{itemize}
\item \underline{\textit{Cauchy case:}} In the case where $n=1$ and the $1$-ary operation $\iota_0: V_0\to N$ 
in $\P_{\Loc_m^\perp}$ is a Cauchy morphism, we require that
\begin{flalign}\label{eqn:bordismsurfacesI}
\iota_0(\Sigma_0) \,\subset\,  J^-_N\big(\iota_1(\Sigma_1)\big)
\end{flalign}
is contained in the causal past in $N$ of $\iota_1(\Sigma_1)$.

\item \underline{\textit{Non-Cauchy case:}} In all other cases, we require that the closure
\begin{flalign}\label{eqn:bordismsurfacesII}
\mathrm{cl}_N\bigg(\bigcup_{i=1}^n\iota_{0_i}(\Sigma_{0_i})\bigg)\,\subset\, I^-_N\big(\iota_1(\Sigma_1)\big)
\end{flalign}
is contained in the chronological past in $N$ of $\iota_1(\Sigma_1)$.
\end{itemize}
Recall that the chronological past $I_N^-(S)\subseteq N$ of a subset $S\subseteq N$
consists of all points in $N$ which can be reached by past-pointing time-like curves
emanating from $S$, while the causal past $J_N^-(S)\subseteq N$ consists of $S\subseteq N$ 
and all points in $N$ which can be reached by past-pointing causal 
curves emanating from $S$. 
Note in particular that the condition \eqref{eqn:bordismsurfacesI} allows the images
in $N$ of different Cauchy surfaces to intersect, as long as their causal order is maintained,
while the stronger condition in \eqref{eqn:bordismsurfacesII} forbids such intersections.
Our reasons for this asymmetric treatment of the Cauchy and non-Cauchy cases
are of a technical Lorentzian geometric origin and they 
will become clearer when we define the operadic compositions in
the pseudo-operad $\LBop_m$, see Remark \ref{rem:CauchyvsnonCauchy} below.
\sk

The geometric interpretation of the $1$-ary operations 
$(N,\iota_0,\iota_1) : (M_0,\Sigma_0)\hto (M_1,\Sigma_1)$ 
with $\iota_0: V_0\to N$  a Cauchy morphism is given by globally hyperbolic Lorentzian bordisms
\begin{subequations}\label{eqn:bordismpicture}
\begin{flalign}
\begin{gathered}\begin{tikzpicture}[scale=0.9]
\draw[dotted,fill=gray!5] (-1,0) -- (0,0.75) -- (1,0) -- (0,-0.75) -- (-1,0);
\draw[dotted,fill=gray!20] (-1,0) -- (0,-0.2) -- (1,0) -- (0,-0.75) -- (-1,0);
\draw (0,0.95) node {{\footnotesize $M_0$}}; 
\draw[thick] (-1,0) .. controls (-0.4,-0.3) .. (0,-0.35) .. controls (0.4,-0.3) .. (1,0);
\draw[->,thick] (1.8,0) -- (1.2,0) node [midway, above] {{\footnotesize $\subseteq$}};
\draw[dotted,fill=gray!20] (2,0) -- (3,-0.2) -- (4,0) -- (3,-0.75) -- (2,0);
\draw (3,0.1) node {{\footnotesize $V_0$}}; 
\draw[thick] (2,0) .. controls (2.6,-0.3) .. (3,-0.35) .. controls (3.4,-0.3) .. (4,0);
\draw[->,thick] (4.2,0) -- (4.8,0) node [midway, above] {{\footnotesize $\iota_0$}};
\draw[dotted,fill=gray!5] (5,0) -- (6,1) -- (7,0) -- (6,-1) -- (5,0);
\draw[dotted,fill=gray!20] (5,0) -- (6,-0.2) -- (7,0) -- (6,-0.75) -- (5,0);
\draw[dotted,fill=gray!20] (5,0) -- (6,0.2) -- (7,0) -- (6,0.75) -- (5,0);
\draw (6,1.2) node {{\footnotesize $N$}}; 
\draw[thick] (5,0) .. controls (5.6,-0.3) .. (6,-0.35) .. controls (6.4,-0.3) .. (7,0);
\draw[thick] (5,0) .. controls (5.6,0.3) .. (6,0.35) .. controls (6.4,0.3) .. (7,0);
\draw[->,thick] (7.8,0) -- (7.2,0) node [midway, above] {{\footnotesize $\iota_1$}};
\draw[dotted,fill=gray!20] (8,0) -- (9,0.2) -- (10,0) -- (9,0.75) -- (8,0);
\draw (9,-0.1) node {{\footnotesize $V_1$}}; 
\draw[thick] (8,0) .. controls (8.6,0.3) .. (9,0.35) .. controls (9.4,0.3) .. (10,0);
\draw[->,thick] (10.2,0) -- (10.8,0) node [midway, above] {{\footnotesize $\subseteq$}};
\draw[dotted,fill=gray!5] (11,0) -- (12,0.75) -- (13,0) -- (12,-0.75) -- (11,0);
\draw[dotted,fill=gray!20] (11,0) -- (12,0.2) -- (13,0) -- (12,0.75) -- (11,0);
\draw (12,0.95) node {{\footnotesize $M_1$}}; 
\draw[thick] (11,0) .. controls (11.6,0.3) .. (12,0.35) .. controls (12.4,0.3) .. (13,0);
\end{tikzpicture}
\end{gathered}
\end{flalign}
between two full Cauchy surfaces. Note that these are precisely the bordisms 
described by the pseudo-category $\LB_m$ from \cite{FFTvAQFT}. The additional $n$-ary
operations $(N,\und{\iota_0},\iota_1) : (\und{M_0},\und{\Sigma_0})\hto (M_1,\Sigma_1)$ in $\LBop_m$
can be interpreted geometrically in terms of globally hyperbolic Lorentzian bordisms
\begin{flalign}
\begin{gathered}\begin{tikzpicture}[scale=0.9]
\draw[dotted,fill=gray!5] (-1,0) -- (-0.75,0.25) -- (-0.5,0) -- (-0.75,-0.25) -- (-1,0);
\draw[dotted,fill=gray!20] (-1,0) -- (-0.75,0.15) -- (-0.5,0) -- (-0.75,-0.15) -- (-1,0);
\draw (-0.75,0.45) node {{\footnotesize $M_{0_1}$}}; 
\draw[thick] (-1,0) .. controls (-0.75,-0.05) .. (-0.5,0);
\draw[dotted,fill=gray!5] (0.3,0) -- (0.55,0.25) -- (0.8,0) -- (0.55,-0.25) -- (0.3,0);
\draw[dotted,fill=gray!20] (0.3,0) -- (0.55,0.15) -- (0.8,0) -- (0.55,-0.15) -- (0.3,0);
\draw (0.55,0.45) node {{\footnotesize $M_{0_n}$}};
\draw[thick] (0.3,0) .. controls (0.55,-0.05) .. (0.8,0);
\draw (-0.1,0) node {{\footnotesize $\cdots$}}; 
\draw[->,thick] (1.8,0) -- (1.2,0) node [midway, above] {{\footnotesize $\subseteq$}};
\draw[dotted,fill=gray!20] (2.1,0) -- (2.35,0.15) -- (2.6,0) -- (2.35,-0.15) -- (2.1,0);
\draw (2.35,0.45) node {{\footnotesize $V_{0_1}$}}; 
\draw[thick] (2.1,0) .. controls (2.35,-0.05) .. (2.6,0);
\draw[dotted,fill=gray!20] (3.4,0) -- (3.65,0.15) -- (3.9,0) -- (3.65,-0.15) -- (3.4,0);
\draw (3.65,0.45) node {{\footnotesize $V_{0_n}$}};
\draw[thick] (3.4,0) .. controls (3.65,-0.05) .. (3.9,0);
\draw (3,0) node {{\footnotesize $\cdots$}}; 
\draw[->,thick] (4.2,0) -- (4.8,0) node [midway, above] {{\footnotesize $\und{\iota_0}$}};
\draw[dotted,fill=gray!5] (5,0) -- (6,1) -- (7,0) -- (6,-1) -- (5,0);
\draw[dotted,fill=gray!20] (5.25,-0.15) -- (5.5,0) -- (5.75,-0.15) -- (5.5,-0.3) -- (5.25,-0.15);
\draw[thick] (5.25,-0.15) .. controls (5.5,-0.2) .. (5.75,-0.15);
\draw[dotted,fill=gray!20] (6.25,-0.15) -- (6.5,0) -- (6.75,-0.15) -- (6.5,-0.3) -- (6.25,-0.15);
\draw[thick] (6.25,-0.15) .. controls (6.5,-0.2) .. (6.75,-0.15);
\draw (6.05,-0.15) node {{\footnotesize $\cdots$}}; 
\draw[dotted,fill=gray!20] (5,0) -- (6,0.2) -- (7,0) -- (6,0.75) -- (5,0);
\draw (6,1.2) node {{\footnotesize $N$}}; 
\draw[thick] (5,0) .. controls (5.6,0.3) .. (6,0.35) .. controls (6.4,0.3) .. (7,0);
\draw[->,thick] (7.8,0) -- (7.2,0) node [midway, above] {{\footnotesize $\iota_1$}};
\draw[dotted,fill=gray!20] (8,0) -- (9,0.2) -- (10,0) -- (9,0.75) -- (8,0);
\draw (9,-0.1) node {{\footnotesize $V_1$}}; 
\draw[thick] (8,0) .. controls (8.6,0.3) .. (9,0.35) .. controls (9.4,0.3) .. (10,0);
\draw[->,thick] (10.2,0) -- (10.8,0) node [midway, above] {{\footnotesize $\subseteq$}};
\draw[dotted,fill=gray!5] (11,0) -- (12,0.75) -- (13,0) -- (12,-0.75) -- (11,0);
\draw[dotted,fill=gray!20] (11,0) -- (12,0.2) -- (13,0) -- (12,0.75) -- (11,0);
\draw (12,0.95) node {{\footnotesize $M_1$}}; 
\draw[thick] (11,0) .. controls (11.6,0.3) .. (12,0.35) .. controls (12.4,0.3) .. (13,0);
\end{tikzpicture}
\end{gathered}
\end{flalign}
\end{subequations}
from an $n$-tuple of causally disjoint partial Cauchy surfaces to a full Cauchy surface.

\item[\und{$\mathsf{Mor}$:}] A morphism in $(\LBop_m)_1^n$ is a germ of local Cauchy morphisms
\begin{flalign}\label{eqn:LBop2cell}
\begin{gathered}
\xymatrix@R=0.25em@C=0.25em{
 (\und{M_0^\prime},\und{\Sigma_0^\prime}) \ar[rr]|{\sla}^{(N^\prime,\und{\iota_0^\prime},\iota_1^\prime)}&&  (M_1^\prime,\Sigma_1^\prime)\\
 &{\scriptstyle [Z,f]}~\rotatebox[origin=c]{90}{$\Longrightarrow$}~~~&\\
 (\und{M_0},\und{\Sigma_0}) \ar[rr]|{\sla}_{(N,\und{\iota_0},\iota_1)}&&  (M_1,\Sigma_1)
}
\end{gathered}\quad.
\end{flalign}
In more detail, a morphism is an equivalence class of pairs $(Z,f)$ 
consisting of a causally convex open subset $Z\subseteq N$ which 
contains within $N$ the images of the Cauchy surfaces\footnote{Note that,
as a consequence of causal convexity of $Z\subseteq N$, 
this condition implies that the intermediate causal region 
$\bigcup_{i=1}^n J_N^+\big(\iota_{0_i}(\Sigma_{0_i})\big) \cap 
J_N^-\big(\iota_1(\Sigma_1)\big)\subseteq Z$ is also contained in $Z$.}
$\bigcup_{i=1}^n \iota_{0_i}(\Sigma_{0_i}) \cup \iota_1(\Sigma_1)\subset Z$
and a Cauchy morphism $f:Z\to N^\prime$ in $\Loc_m$ satisfying 
$f\big(\iota_{0_i}(\Sigma_{0_i})\big)=\iota^\prime_{0_i}(\Sigma^\prime_{0_i})$, for all $i=1,\dots,n$,
and $f\big(\iota_1(\Sigma_1)\big) = \iota^\prime_1(\Sigma^\prime_1)$.
Two pairs $(Z,f)$ and $(\widetilde{Z},\widetilde{f})$ 
are equivalent if and only if there exists a causally convex open $\widehat{Z}\subseteq Z\cap \widetilde{Z}\subseteq N$
which contains the images in $N$ of the Cauchy surfaces 
$\bigcup_{i=1}^n \iota_{0_i}(\Sigma_{0_i}) \cup \iota_1(\Sigma_1)\subset \widehat{Z}$,
such that the restrictions
$f\vert_{\widehat{Z}} = \widetilde{f}\vert_{\widehat{Z}}$ coincide.
\end{description}
The identity morphisms and compositions in the groupoid
of $n$-ary operations $(\LBop_m)_1^n$ are analogous to those in the groupoid of objects $(\LBop_m)_0$,
which we have spelled out above.

\paragraph{Source and target functors:} The source functor is defined by
\begin{subequations}\label{eqn:sourcefunctor}
\begin{flalign}
 s^n\,:\,  (\LBop_m)_1^n~&\longrightarrow~ (\LBop_m)_0^{\times n}\quad,\\
\nn \Big((N,\und{\iota_0},\iota_1) : (\und{M_0},\und{\Sigma_0})\hto (M_1,\Sigma_1)\Big)~&\longmapsto~(\und{M_0},\und{\Sigma_0})\quad,\\
\nn \begin{gathered}
\xymatrix@R=0.25em@C=0.25em{
 (\und{M_0^\prime},\und{\Sigma_0^\prime}) \ar[rr]|{\sla}^{(N^\prime,\und{\iota_0^\prime},\iota_1^\prime)}&&  (M_1^\prime,\Sigma_1^\prime)\\
 &{\scriptstyle [Z,f]}~\rotatebox[origin=c]{90}{$\Longrightarrow$}~~~&\\
 (\und{M_0},\und{\Sigma_0}) \ar[rr]|{\sla}_{(N,\und{\iota_0},\iota_1)}&&  (M_1,\Sigma_1)
}
\end{gathered} ~&\longmapsto~
\begin{gathered}
\xymatrix@R=1em@C=0.25em{
 (\und{M_0^\prime},\und{\Sigma_0^\prime}) \\
    ~\\
 \ar[uu]^-{\big[\und{\iota_0^{-1}} \big( f^{-1} \big(\und{\iota_0^\prime}(\und{V_0^\prime})\big)\big),\, 
 \und{\iota_0^{\prime -1}} \,f\, \und{\iota_0} \big]} 
 (\und{M_0},\und{\Sigma_0})
}
\end{gathered}\quad,
\end{flalign}
where the $n$-tuple of $(\LBop_m)_0$-morphisms is obtained component-wise by the factorization
\begin{flalign}
\resizebox{.9\hsize}{!}{$
\begin{gathered}
\xymatrix@C=2em{
M_{0_i} &
\ar[l]_-{\subseteq} V_{0_i} \ar[r]^-{\iota_{0_i}} & 
N &
\ar[l]_-{\subseteq} Z \ar[r]^-{f} & 
N^\prime &
\ar[l]_-{\iota_{0_i}^\prime} \ar[dl]_-{\cong}^-{\iota_{0_i}^\prime} V_{0_i}^\prime \ar[r]^-{\subseteq} & 
M_{0_i}^\prime\\
  &  
\ar[u]^-{\subseteq} \iota_{0_i}^{-1}\big(f^{-1}\big(\iota_{0_i}^\prime(V_{0_i}^\prime)\big)\big)  
\ar@{-->}[rr]_-{\iota_{0_i}} & 
  &   
\ar[u]^-{\subseteq} f^{-1}\big(\iota_{0_i}^\prime(V_{0_i}^\prime)\big) \ar@{-->}[r]_-{f} &   
\ar[u]^-{\subseteq} \iota_{0_i}^\prime(V_{0_i}^\prime)  &            &
}
\end{gathered}\quad,$}
\end{flalign}
\end{subequations}
for all $i=1,\dots,n$.
The target functor is defined similarly by
\begin{subequations}\label{eqn:targetfunctor}
\begin{flalign}
t^n\,:\,  (\LBop_m)_1^n~&\longrightarrow~ (\LBop_m)_0\quad,\\
\nn \Big((N,\und{\iota_0},\iota_1) : (\und{M_0},\und{\Sigma_0})\hto (M_1,\Sigma_1)\Big)~&\longmapsto~(M_1,\Sigma_1)\quad,\\
\nn \begin{gathered}
\xymatrix@R=0.25em@C=0.25em{
 (\und{M_0^\prime},\und{\Sigma_0^\prime}) \ar[rr]|{\sla}^{(N^\prime,\und{\iota_0^\prime},\iota_1^\prime)}&&  (M_1^\prime,\Sigma_1^\prime)\\
 &{\scriptstyle [Z,f]}~\rotatebox[origin=c]{90}{$\Longrightarrow$}~~~&\\
 (\und{M_0},\und{\Sigma_0}) \ar[rr]|{\sla}_{(N,\und{\iota_0},\iota_1)}&&  (M_1,\Sigma_1)
}
\end{gathered} ~&\longmapsto~
\begin{gathered}
\xymatrix@R=1.25em@C=0.25em{
 (M_1^\prime,\Sigma_1^\prime) \\
    ~\\
 \ar[uu]^-{\big[\iota_1^{-1} \big( f^{-1} \big(\iota_1^\prime(V_1^\prime)\big)\big), \iota_1^{\prime -1} \,f\, \iota_1 \big]} (M_1,\Sigma_1)
}
\end{gathered}\quad,
\end{flalign}
where the $(\LBop_m)_0$-morphism is obtained by the factorization
\begin{flalign}
\begin{gathered}
\xymatrix@C=2em{
M_1 &
\ar[l]_-{\subseteq} V_1 \ar[r]^-{\iota_1} & 
N &
\ar[l]_-{\subseteq} Z \ar[r]^-{f} & 
N^\prime & 
\ar[l]_-{\iota_1^\prime} \ar[dl]_-{\cong}^-{\iota_1^\prime} V_1^\prime \ar[r]^-{\subseteq} & 
M_1^\prime\\
  &  
\ar[u]^-{\subseteq} \iota_1^{-1}\big(f^{-1}\big(\iota_1^\prime(V_1^\prime)\big)\big)   \ar@{-->}[rr]_-{\iota_1} & 
  &   
\ar[u]^-{\subseteq} f^{-1}\big(\iota_1^\prime(V_1^\prime)\big) \ar@{-->}[r]_-{f} &   
\ar[u]^-{\subseteq} \iota_1^\prime(V_1^\prime)  &            &
}
\end{gathered}\quad.
\end{flalign}
\end{subequations}

\paragraph{Operadic compositions:} To define the operadic composition functors
\begin{flalign}\label{eqn:bordismcmp}
\cmp\,:\, (\LBop_m)_1^{n}\times_{(\LBop_m)_{0}^{\times n}} (\LBop_m)_1^{\und{k}} 
~\longrightarrow~(\LBop_m)_1^{\Sigma\und{k}}\quad,
\end{flalign}
let us consider any $n$-ary operation $(N_1,\und{\iota_{10}},\iota_{11}) : 
(\und{M_{1}},\und{\Sigma_{1}})\hto (M_{2},\Sigma_{2})$ in $(\LBop_m)_1^{n}$ and
any $n$-tuple $(\und{N_0},\und{\und{\iota_{00}}},\und{\iota_{01}}) : 
(\und{\und{M_{0}}},\und{\und{\Sigma_{0}}})\hto (\und{M_{1}},\und{\Sigma_{1}})$
of $k_i$-ary operations $(N_{0_i},\und{\iota_{00_i}},\iota_{01_i}) : 
(\und{M_{0_i}},\und{\Sigma_{0_i}})\hto (M_{1_i},\Sigma_{1_i})$ in $(\LBop_m)_1^{k_i}$,
for $i=1,\dots,n$. Recall from \eqref{eqn:bordismzigzag} that these operations
are represented by zig-zags
\begin{flalign}\label{eqn:bordismzigzagcomposite} 
\xymatrix{
\und{\und{M_0}} & \ar[l]_-{\subseteq} \und{\und{V_{00}}}  \ar[r]^-{\und{\und{\iota_{00}}}} &
\und{N_{0}}
&\ar[l]_-{\und{\iota_{01}}} \und{V_{01}} \ar[r]^-{\subseteq} & \und{M_{1}}
& \ar[l]_-{\subseteq} \und{V_{10}}  \ar[r]^-{\und{\iota_{10}}} & N_1 
&\ar[l]_-{\iota_{11}} V_{11} \ar[r]^-{\subseteq} & M_2
}
\end{flalign}
of operations in the operad $\P_{\Loc_m^\perp}$.
\sk

To explain the key ideas, let us start with an informal discussion 
of the operadic composition $(N_1,\und{\iota_{10}},\iota_{11}) 
\cmp(\und{N_0},\und{\und{\iota_{00}}},\und{\iota_{01}})$ of these operations in $\LBop_m$.
Roughly speaking, this composition will be given by the 
gluing (i.e.\ a pushout) of bordisms, but in order to make this well-defined 
one has to remove certain parts of the collar regions of the individual bordisms
which would otherwise obstruct the glued bordism from being an object in $\Loc_m$.
For this purpose, we define the subset
\begin{subequations}\label{eqn:overhangingremoved}
\begin{flalign}\label{eqn:overhangingremoved1}
N_1^+\,\coloneqq\, \Bigg(N_1 \setminus \mathrm{cl}_{N_1}\bigg(\bigcup_{i=1}^n J^-_{N_1}\big(\iota_{10_i}(\Sigma_{1_i})\big)\bigg)\Bigg) \cup 
\bigcup_{i=1}^n \iota_{10_i}\big(V_{01_i}\cap V_{10_i}\big)\,\subseteq\,N_1\quad,
\end{flalign}
where $\mathrm{cl}_{N_1}$ denotes the closure of subsets in $N_1$, and the subsets
\begin{flalign}\label{eqn:overhangingremoved2}
N_{0_i}^-\,\coloneqq\, J_{N_{0_i}}^-\Big(\iota_{01_i}\big(V_{01_i}\cap V_{10_i}\big)\Big)\,\subseteq\, N_{0_i}\quad,
\end{flalign}
\end{subequations}
for all $i=1,\dots,n$. Observe that the `later' bordism $N_1$ gets a part of its past collar region
removed, while the `earlier' bordisms $N_{0_i}$ get a part of their future collar regions removed. 
The removed parts are determined by the causal structure and the intersections
$V_{01_i}\cap V_{10_i} \subseteq M_{1_i}$ of the collar regions of the intermediate Cauchy surfaces
$\Sigma_{1_i}\subset M_{1_i}$. A graphical illustration of the subsets in
\eqref{eqn:overhangingremoved} is given by
\begin{flalign}\label{eqn:bordismremoved}
\begin{gathered}
\begin{tikzpicture}[scale=1]
\draw[dotted,fill=gray!5] (9.75,0) -- (11,1.25) -- (12.25,0) -- (11,-1.25) -- (9.75,0);
\draw[dotted,fill=gray!20] (9.75,0) -- (11,0.5) -- (12.25,0) -- (11,-0.5) -- (9.75,0);
\draw (11,1.5) node {{\footnotesize $N_{0_i}^-$}}; 
\draw[thick] (9.75,0) .. controls (11,-0.1) .. (12.25,0);
\draw[black, fill=darkred!50] 
(9.75,0) -- (11,0.5) -- (12.25,0) -- (11,1.25) -- (9.75,0);
\end{tikzpicture}
\qquad\qquad
\begin{tikzpicture}[scale=1]
\draw[dotted,fill=gray!5] (4.75,0) -- (6,1.25) -- (7.25,0) -- (6,-1.25) -- (4.75,0);
\draw[dotted,fill=gray!20] (5.25,0) -- (5.5,0.15) -- (5.75,0) -- (5.5,-0.15) -- (5.25,0);
\draw[thick] (5.25,0) .. controls (5.5,-0.05) .. (5.75,0);
\draw[dotted,fill=gray!20] (6.25,0) -- (6.5,0.15) -- (6.75,0) -- (6.5,-0.15) -- (6.25,0);
\draw[thick] (6.25,0) .. controls (6.5,-0.05) .. (6.75,0);
\draw (6,1.5) node {{\footnotesize $N_1^+$}}; 
\draw[black, fill=darkred!50] 
(5,-0.25) -- (5.25,0) -- (5.5,-0.15) -- (5.75,0) -- (6,-0.25) -- (6.25,0) -- (6.5,-0.15)-- (6.75,0) -- (7,-0.25) -- (6,-1.25) -- (5,-0.25);
\end{tikzpicture}
\end{gathered}\qquad,
\end{flalign}
where the red parts are removed. From these pictures it looks plausible
that the bordism $N_1^+$ can be glued with the family of bordisms 
$\und{N_{0}^-} = (N_{0_1}^-,\dots,N_{0_n}^-)$ along their common dark gray regions, 
because all `overhanging' parts of the collar regions 
of the individual bordisms which would obstruct the gluing have been removed.
\begin{rem}\label{rem:CauchyvsnonCauchy}
We can now explain the origin of the asymmetry between the conditions
for Cauchy bordisms in \eqref{eqn:bordismsurfacesI} and for non-Cauchy bordisms in \eqref{eqn:bordismsurfacesII}.
The more restrictive condition in the non-Cauchy case \eqref{eqn:bordismsurfacesII} is 
needed to ensure that the subset $N_1^+\subseteq N_1$ in \eqref{eqn:overhangingremoved1} 
contains the image $\iota_{11}(\Sigma_2)\subset N_1$ of the `later' full Cauchy surface. 
Indeed, if one would weaken this condition by dropping
the closures or by using instead of the \textit{chronological} past $I_N^-$
the \textit{causal} past $J_N^-$ as in \eqref{eqn:bordismsurfacesI}, 
the image $\iota_{11}(\Sigma_2)\subset N_1$ of the `later' full Cauchy surface
may intersect the subset which is removed in \eqref{eqn:overhangingremoved1} 
(in particular, in this case the earlier partial Cauchy surfaces and the 
later Cauchy surface would be allowed to intersect).
That, however, would imply that $\iota_{11}(\Sigma_2)\not\subset N_1^+\subseteq N_1$.
Pictorially as on the right-hand side in
\eqref{eqn:bordismremoved}, the problematic parts are the tips at the top
of the red-colored region, which may intersect non-trivially with 
the image $\iota_{11}(\Sigma_2)\subset N_1$ of the `later' full Cauchy surface.
We would like to stress that this issue does not arise for $n=1$
and $\iota_{10} :V_{10}\to N_1$ a Cauchy morphism, because in this case
the subset $N_{1}^+\subseteq N_1$ in \eqref{eqn:overhangingremoved1} simplifies
to $N_1^+ = J_{N_1}^+\big(\iota_{10}(V_{01}\cap V_{10})\big)\subseteq N_1$
(note that this is similar to \eqref{eqn:overhangingremoved2} with past and future exchanged)
and it contains the image $\iota_{11}(\Sigma_2)\subset N_1$ of the `later' full Cauchy surface 
also under the weaker conditions in \eqref{eqn:bordismsurfacesI}. Note that these weaker conditions 
in the Cauchy case are required for the pseudo-operad $\LBop_m$ to admit operadic units, 
see the corresponding paragraph below.
\end{rem}

Let us now make the above gluing construction precise.
As a first step, we prove in 
Proposition~\ref{prop:bordismgluingregions} that the subsets in \eqref{eqn:overhangingremoved} 
are causally convex and open, so that they inherit the structure of objects in $\Loc_m$,
and that they contain the images of the Cauchy surfaces.
Next, we concatenate as in \eqref{eqn:bordismzigzagcomposite} the zig-zags of operations in $\P_{\Loc_m^\perp}$
representing the individual bordisms and factorize the result according to
\begin{flalign}\label{eqn:bordismcomposition}
\begin{gathered}
\xymatrix@R=1.5em@C=1em{
\und{\und{M_0}} & \ar[l]_-{\subseteq} \und{\und{V_{00}}} \ar[r]^-{\und{\und{\iota_{00}}}} & \und{N_0}
& \ar[l]_-{\und{\iota_{01}}} \und{V_{01}} \ar[r]^-{\subseteq} & \und{M_1} &\ar[l]_-{\subseteq}\und{V_{10}}
\ar[r]^-{\und{\iota_{10}}} & N_1  & \ar[l]_-{\iota_{11}} V_{11} \ar[r]^-{\subseteq} & M_2\\
&\ar@/_3pc/[dddrrr]_-{\iota_{-}\,\und{\und{\iota_{00}}}}\ar[u]^-{\subseteq} \und{\und{\iota_{00}^{-1}}}(\und{N_0^-})
\ar[dr]^-{\und{\und{\iota_{00}}}}&&& \ar[dll]_-{\und{\iota_{01}}}\ar[lu]^-{\subseteq}\und{V_{01}\cap V_{10}} 
\ar[d]^-{\subseteq}\ar[ru]_-{\subseteq}
\ar[drr]^-{\und{\iota_{10}}}&&& \iota_{11}^{-1}(N_1^+)  \ar@/^3pc/[dddlll]^-{\iota_+\,\iota_{11}}\ar[dl]_-{\iota_{11}}\ar[u]_-{\subseteq}&\\
&&\ar[dr]_-{\subseteq} \und{N_0^-}\ar[uu]_-{\subseteq}&& 
\ar[dl]_-{\iota_{01}} V_{01}\cap V_{10}\ar[dr]^-{\iota_{10}} &&\ar@{=}[ld] N_1^+ \ar[uu]^-{\subseteq}&&\\
&&&\ar@{-->}[dr]_-{\iota_-} N_0^- & & \ar@{-->}[dl]^-{\iota_+}N_1^+&&&\\
&&&& N_0^- \sqcup_{V_{01}\cap V_{10}}^{~}  N_1^+&&&&
}
\end{gathered}\quad.
\end{flalign}
In the bottom part of this commutative diagram of operations in $\P_{\Loc_m^\perp}$, 
we use the condensed notations 
$V_{01}\cap V_{10} \coloneqq \bigsqcup_{i=1}^n (V_{01_i}\cap V_{10_i})$
and $N_0^-\coloneqq \bigsqcup_{i=1}^n N_{0_i}^-$
for the disjoint unions of the $n$-tuples $\und{V_{01}\cap V_{10}}$ 
and $\und{N_0^-}$ of objects in $\Loc_m$. We show in Proposition \ref{prop:pushout} 
that the pushout $N_0^- \sqcup_{V_{01}\cap V_{10}}^{~}  N_1^+$ exists as an object in $\Loc_m$.
The composite $\Sigma\und{k}$-ary operation in $\LBop_m$ is then 
defined by the outer zig-zags in the diagram
\eqref{eqn:bordismcomposition}, i.e.
\begin{flalign}\label{eqn:bordismcomp}
(N_1,\und{\iota_{10}},\iota_{11}) 
\cmp(\und{N_0},\und{\und{\iota_{00}}},\und{\iota_{01}})\,\coloneqq\,
\Big(N_0^- \sqcup_{V_{01}\cap V_{10}}^{~}  N_1^+ ,\,\iota_{-}\,\und{\und{\iota_{00}}},\, \iota_{+}\,\iota_{11} \Big) \,:\,
(\und{\und{M_0}},\und{\und{\Sigma_0}})\,\hto\, (M_2,\Sigma_2)\quad.
\end{flalign}

Next, we define the operadic composition functor \eqref{eqn:bordismcmp} on $2$-cells.
Given two operadically composable (tuples of) $2$-cells
\begin{flalign}
\begin{gathered}
\xymatrix@R=0.25em@C=0.25em{
 (\und{\und{M_0^\prime}},\und{\und{\Sigma_0^\prime}}) \ar[rr]|{\sla}^{(\und{N_0^\prime},\und{\und{\iota_{00}^\prime}},\und{\iota_{01}^\prime})}&&  (\und{M_1^\prime},\und{\Sigma_1^\prime})
\ar[rr]|{\sla}^{(N_1^\prime,\und{\iota_{10}^\prime},\iota_{11}^\prime)}   &&  (M_2^\prime,\Sigma_2^\prime)\\
 &{\scriptstyle [\und{Z_0},\und{f_0}]}~\rotatebox[origin=c]{90}{$\Longrightarrow$}~~~& &{\scriptstyle [Z_1,f_1]}~\rotatebox[origin=c]{90}{$\Longrightarrow$}~~~&\\
\ar[uu]^-{s^{\und{k}}[\und{Z_0},\und{f_0}]} (\und{\und{M_0}},\und{\und{\Sigma_0}}) 
\ar[rr]|{\sla}_{(\und{N_0},\und{\und{\iota_{00}}},\und{\iota_{01}})}&&  \ar[uu] (\und{M_1},\und{\Sigma_1}) 
\ar[rr]|{\sla}_{(N_1,\und{\iota_{10}},\iota_{11})} &&  (M_2,\Sigma_2) \ar[uu]_-{t^n[Z_1,f_1]}
}
\end{gathered}\quad,
\end{flalign}
i.e.\ $t^{\und{k}}[\und{Z_0},\und{f_0}] = s^n[Z_1,f_1]$ in $(\LBop_m)_0^{\times n}$, 
there exists, by definition of source \eqref{eqn:sourcefunctor} 
and target \eqref{eqn:targetfunctor}, a family of causally convex open subsets 
$U_i\subseteq V_{01_i}\cap V_{10_i}\subseteq M_{1_i}$ 
containing the Cauchy surfaces $\Sigma_{1_i}\subset U_i$,
such that $f_i^\cap \coloneqq \iota_{01_i}^{\prime -1} \,f_{0_i}\, \iota_{01_i}\big\vert_{U_i} =
\iota_{10_i}^{\prime -1} \,f_{1}\, \iota_{10_i}\big\vert_{U_i}$, for all $i=1,\dots,n$.
We define the subsets
\begin{subequations}
\begin{flalign}
Z_i^- \,&\coloneqq\, N_{0_i}^{-}\cap f_{0_i}^{-1}(N_{0_i}^{\prime -})\,\subseteq\,N_{0_i}\quad,\\
Z^+\,&\coloneqq\,  N_1^+ \cap f_1^{-1}(N_1^{\prime +})\,\subseteq N_1\quad,\\
Z_i^\cap\,&\coloneqq\, \iota_{01_i}^{-1}(Z_i^-)\cap \iota_{10_i}^{-1}(Z^+) \cap U_i\,\subseteq M_{1_i}\quad,
\end{flalign}
\end{subequations}
for all $i=1,\dots,n$. Denoting again disjoint unions by the condensed notations
$Z^- \coloneqq \bigsqcup_{i=1}^n Z_i^-$ and $Z^\cap \coloneqq\bigsqcup_{i=1}^n Z_i^\cap $, we obtain the commutative diagram
\begin{subequations}
\begin{flalign}
\begin{gathered}
\xymatrix{
N_0^{\prime -} &
\ar[l]_-{i_{01}^\prime} V_{01}^\prime \cap V_{10}^\prime  \ar[r]^-{i_{10}^\prime}&
N_1^{\prime +}\\
\ar[d]_-{\subseteq} Z^- \ar[u]^-{f_0}& 
\ar[d]_-{\subseteq} \ar[l]_-{i_{01}}   Z^{\cap} \ar[r]^-{i_{10}} \ar[u]^-{f^\cap}& 
\ar[d]^-{\subseteq} Z^+ \ar[u]_-{f_1}\\
N_0^- & \ar[l]^-{i_{01}} V_{01}\cap V_{10}  \ar[r]_-{i_{10}}& 
N_1^+
}
\end{gathered}
\end{flalign}
of Cauchy morphisms, which induces morphisms
\begin{flalign}
\xymatrix@C=3.5em{
N_0^{-}\sqcup_{V_{01}\cap V_{10}}^{~} N_1^+ &\ar[l]_-{\subseteq}
Z^-\sqcup_{Z^\cap}^{~} Z^+ \ar[r]^-{f_0\sqcup_{f^\cap}^{} f_1} &
N_0^{\prime -} \sqcup_{V_{01}^\prime \cap V_{10}^\prime}^{~} N_1^{\prime +}
}
\end{flalign}
\end{subequations}
between the pushouts entering the operadic compositions in \eqref{eqn:bordismcomposition}.
The operadic composite of $2$-cells 
\begin{subequations}\label{eqn:composed2cell}
\begin{flalign}
[Z_1,f_1]\cmp[\und{Z_0},\und{f_0}] \,:\, (N_1,\und{\iota_{10}},\iota_{11})\cmp(\und{N_0},\und{\und{\iota_{00}}},\und{\iota_{01}})~\Longrightarrow~
(N_1^\prime,\und{\iota_{10}^\prime},\iota_{11}^\prime)\cmp(\und{N_0^\prime},\und{\und{\iota_{00}^\prime}},\und{\iota_{01}^\prime})
\end{flalign}
is then defined by
\begin{flalign}
[Z_1,f_1]\cmp[\und{Z_0},\und{f_0}]\,\coloneqq\, \big[Z^-\sqcup_{Z^\cap}^{~} Z^+, f_0\sqcup_{f^\cap}^{} f_1\big]\quad.
\end{flalign}
\end{subequations}

\paragraph{Operadic units:} We define the operadic unit functor by
\begin{flalign}\label{eqn:identitybordism}
u\,:\, (\LBop_m)_0~&\longrightarrow~(\LBop_m)_1^1\quad,\\
\nn (M,\Sigma)~&\longmapsto~ \Big((M,\id_M,\id_M) : (M,\Sigma)\hto(M,\Sigma)\Big)\quad,\\
\nn \begin{gathered}
\xymatrix@R=1.25em@C=0.25em{
 (M^\prime,\Sigma^\prime) \\
    ~\\
 \ar[uu]^-{[U,g]} (M,\Sigma)
}
\end{gathered}~&\longmapsto~
\begin{gathered}
\xymatrix@R=0.25em@C=0.5em{
 (M^\prime,\Sigma^\prime) \ar[rr]|{\sla}^{(M^\prime,\id_{M^\prime},\id_{M^\prime})}&&  (M^\prime,\Sigma^\prime)\\
 &{\scriptstyle [U,g]}~\rotatebox[origin=c]{90}{$\Longrightarrow$}~~~&\\
 (M,\Sigma) \ar[rr]|{\sla}_{(M,\id_M,\id_M)}&&  (M,\Sigma)
}
\end{gathered}
\quad,
\end{flalign}
where $(M,\id_M,\id_M) : (M,\Sigma)\hto(M,\Sigma)$ is the identity bordism from the full Cauchy surface
$\Sigma$ to itself. Note that this bordism satisfies our condition in \eqref{eqn:bordismsurfacesI} 
for the Cauchy case.

\paragraph{Permutation actions:} We define the permutation actions by
\begin{flalign}\label{eqn:permutationbordisms}
(\LBop_m)_1^n(\sigma)\,:\, (\LBop_m)_1^n~&\longrightarrow~(\LBop_m)_1^n\quad,\\
\nn \Big((N,\und{\iota_0},\iota_1) : (\und{M_0},\und{\Sigma_0})\hto (M_1,\Sigma_1)\Big)~&\longmapsto~
\nn \Big((N,\und{\iota_0}\sigma,\iota_1) : (\und{M_0}\sigma,\und{\Sigma_0}\sigma)\hto (M_1,\Sigma_1)\Big)\quad,\\
\nn \begin{gathered}
\xymatrix@R=0.25em@C=0.25em{
 (\und{M_0^\prime},\und{\Sigma_0^\prime}) \ar[rr]|{\sla}^{(N^\prime,\und{\iota_0^\prime},\iota_1^\prime)}&&  (M_1^\prime,\Sigma_1^\prime)\\
 &{\scriptstyle [Z,f]}~\rotatebox[origin=c]{90}{$\Longrightarrow$}~~~&\\
 (\und{M_0},\und{\Sigma_0}) \ar[rr]|{\sla}_{(N,\und{\iota_0},\iota_1)}&&  (M_1,\Sigma_1)
}
\end{gathered} ~&\longmapsto~
\begin{gathered}
\xymatrix@R=0.25em@C=0.25em{
 (\und{M_0^\prime}\sigma,\und{\Sigma_0^\prime}\sigma) \ar[rr]|{\sla}^{(N^\prime,\und{\iota_0^\prime}\sigma,\iota_1^\prime)}&&  (M_1^\prime,\Sigma_1^\prime)\\
 &{\scriptstyle [Z,f]}~\rotatebox[origin=c]{90}{$\Longrightarrow$}~~~&\\
 (\und{M_0}\sigma,\und{\Sigma_0}\sigma) \ar[rr]|{\sla}_{(N,\und{\iota_0}\sigma,\iota_1)}&&  (M_1,\Sigma_1)
}
\end{gathered}\quad,
\end{flalign}
for all $\sigma\in S_n$, where we recall that the right $S_n$-action on $n$-tuples
is given by permuting the tuple, e.g.\ $\und{M_0}\sigma = (M_{0_{\sigma(1)}},\dots,M_{0_{\sigma(n)}})$
and $\und{\iota_0}\sigma = (\iota_{0_{\sigma(1)}},\dots, \iota_{0_{\sigma(n)}})$.

\paragraph{Coherence isomorphisms:}
Operadic compositions of bordisms in $\LBop_m$ are \textit{not} strictly associative and unital 
due to potential mismatches of the gluing regions $\und{V_{01}\cap V_{10}}\subseteq \und{M_{1}}$ 
and the collar regions $\und{\und{i_{00}^{-1}}}(\und{N_0^-})\subseteq \und{\und{M_0}}$ 
and $i_{11}^{-1}(N_1^+)\subseteq M_2$
in the defining diagram \eqref{eqn:bordismcomposition}, as well as due to the fact that
pushouts $N_0^-\sqcup_{V_{01}\cap V_{10}}^{~} N_1^+$ are only defined uniquely up to canonical isomorphism.
Note that this is completely analogous to the case of the globally hyperbolic
Lorentzian bordism pseudo-category $\LB_m$ in \cite{FFTvAQFT} and the 
geometric bordism pseudo-categories in \cite{StolzTeichner}.
One can show that these potential mismatches lead to canonical globular $2$-cell isomorphisms 
between the corresponding bordisms, which define the associator and the unitor coherence isomorphisms 
$(\mathsf{a},\mathsf{l},\mathsf{r})$ of the pseudo-operad $\LBop_m$.
We refer the reader to \cite{FFTvAQFT} for the technical details, which generalize directly
from pseudo-categories to our present case of pseudo-operads.
\begin{rem}\label{rem:LBordsubcat}
The globally hyperbolic Lorentzian bordism pseudo-operad $\LBop_m$ defined above has an underlying
pseudo-category which is formed by its $1$-ary operations. It is important
to emphasize that this pseudo-category does \textit{not} coincide with 
the globally hyperbolic Lorentzian bordism pseudo-category $\LB_m$ from \cite{FFTvAQFT}:
These two pseudo-categories contain the same objects, but the 
horizontal morphisms in $\LB_m$ describe only Cauchy bordisms, i.e.\ $1$-ary operations
$(N,\iota_0,\iota_1) : (M_0,\Sigma_0)\hto (M_1,\Sigma_1)$ with $\iota_0 : V_0\to N$
a Cauchy morphism, while the pseudo-operad $\LBop_m$
contains also $1$-ary operations with $\iota_0 : V_0\to N$ non-Cauchy.
Nevertheless, the bordism pseudo-category from \cite{FFTvAQFT}
is a non-full sub-pseudo-category $\LB_m\subseteq \LBop_m$ of our bordism pseudo-operad.
\end{rem}

\begin{propo}\label{prop:LBopfibrant}
For each $m\in\bbN$, the $m$-dimensional
globally hyperbolic Lorentzian bordism pseudo-operad defined above is
fibrant in the sense of Definition \ref{def:fibrant}, 
i.e.\ $\LBop_m\in\PsOp^{\mathrm{fib}}$. A choice of companion
for a vertical morphism $[U,g]: (M,\Sigma)\to (M^\prime,\Sigma^\prime)$
is given by the $1$-ary operation 
$(M^\prime,g,\id_{M^\prime}) : (M,\Sigma)\hto (M^\prime,\Sigma^\prime)$
which is defined by the zig-zags of Cauchy morphisms
\begin{flalign}
\xymatrix{
M &\ar[l]_-{\subseteq} U \ar[r]^-{g}& M^\prime & \ar[l]_-{\,\id_{M^\prime}} M^\prime \ar[r]^-{=}& M^\prime 
}\quad,
\end{flalign}
together with the $2$-cells
\begin{flalign}
\begin{gathered}
\xymatrix@R=0.25em@C=0.25em{
 (M^\prime,\Sigma^\prime) \ar[rr]|{\sla}^{(M^\prime,\id_{M^\prime},\id_{M^\prime})} &&  (M^\prime,\Sigma^\prime) \\
 &{\scriptstyle [M^\prime,\id_{M^\prime}]}~\rotatebox[origin=c]{90}{$\Longrightarrow$}~~&\\
\ar[uu]^-{[U,g]} (M,\Sigma) \ar[rr]|{\sla}_{(M^\prime,g,\id_{M^\prime})}&&  \ar@{=}[uu] (M^\prime,\Sigma^\prime) 
}
\end{gathered}
\qquad\text{and}\qquad
\begin{gathered}
\xymatrix@R=0.25em@C=0.25em{
 (M,\Sigma) \ar[rr]|{\sla}^{(M^\prime,g,\id_{M^\prime})} &&  (M^\prime,\Sigma^\prime) \\
 &{\scriptstyle [U,g]}~\rotatebox[origin=c]{90}{$\Longrightarrow$}~~&\\
\ar@{=}[uu] (M,\Sigma) \ar[rr]|{\sla}_{(M,\id_M,\id_M)}&&  \ar[uu]_-{[U,g]}  (M,\Sigma) 
}
\end{gathered}\quad.
\end{flalign}
\end{propo}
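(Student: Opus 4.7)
The plan is to verify that the proposed $1$-ary operation together with the two displayed $2$-cells constitutes a companion pair for the vertical morphism $[U,g]$ in the sense of Definition~\ref{def:fibrant}. Since Definition~\ref{def:fibrant} is the operadic analogue of Shulman's fibrancy, the verification naturally splits into three parts: establishing well-definedness of the candidate horizontal operation, well-definedness of the two binding $2$-cells, and the two companion axioms.

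I would first check that $(M',g,\id_{M'}):(M,\Sigma)\hto(M',\Sigma')$ is a legitimate object of $(\LBop_m)_1^1$. With collar choices $V_0 = U$ and $V_1 = M'$, the defining zig-zag $M \xleftarrow{\subseteq} U \xrightarrow{g} M' \xleftarrow{\id_{M'}} M' \xrightarrow{=} M'$ is a diagram of operations in $\P_{\Loc_m^\perp}$ in which both $\id_{M'}$ and $g$ are Cauchy (the latter by assumption on the germ $[U,g]$), so this is a $1$-ary operation of Cauchy type. The causal containment condition \eqref{eqn:bordismsurfacesI} reduces to $g(\Sigma) = \Sigma' \subset J_{M'}^-(\Sigma')$, which is immediate. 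The two binding $2$-cells $[M',\id_{M'}]$ and $[U,g]$ admit the displayed representatives on $M'\subseteq M'$ and $U\subseteq M$; a direct inspection of the source and target bordisms of each $2$-cell confirms that these germs preserve the required Cauchy surface images and that the causal orderings are consistent with the required inclusions between collar regions.

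The heart of the proof is the verification of the two companion axioms. The vertical axiom asks that an appropriate pasting of the two $2$-cells in the direction of vertical composition in $(\LBop_m)_1^1$ reproduces the identity $2$-cell on the horizontal identity bordisms, with vertical sides $[U,g]$; by the definition of composition of germs in $(\LBop_m)_0$, this composite pastes $\id_{M'}$ and $g$ on the common representative $U \subseteq M'$ to yield exactly the germ $[U,g]$, as required. The horizontal axiom asks that the operadic composition of the two $2$-cells, interpreted through the pushout construction \eqref{eqn:bordismcomposition}, equal the identity $2$-cell on the companion $(M',g,\id_{M'})$. This requires identifying the cut-off pieces $N_1^+$ and $N_0^-$ from \eqref{eqn:overhangingremoved} for both operadic composites, computing the resulting pushouts, and confirming that the induced Cauchy map between them restricts to $\id_{M'}$ on a neighbourhood of $\Sigma'$, hence represents the identity germ.

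The main obstacle I expect is the horizontal axiom, since it requires bookkeeping the gluing regions $V_{01}\cap V_{10}$ and the removed collar subsets for the two operadic composites that enter the pasting. Concretely, one composite takes the form $(M',\id_{M'},\id_{M'})\cmp(M',g,\id_{M'})$ and the other $(M',g,\id_{M'})\cmp(M,\id_M,\id_M)$; both should be canonically isomorphic to $(M',g,\id_{M'})$ via the unitor coherence isomorphisms of $\LBop_m$, and the induced morphism between the two pushouts must be shown to agree with this canonical identification on a germ. Once causal convexity of the restricted subsets and the tracking of $J^{\pm}$ through the pushout are handled — calculations of the same flavour as those behind Proposition~\ref{prop:pushout} and Proposition~\ref{prop:bordismgluingregions} — the companion axioms reduce to the coherence data of $\LBop_m$ and the proposition follows.
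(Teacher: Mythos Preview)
Your outline is correct in spirit and would eventually yield a proof, but the paper takes a much shorter route that you have overlooked. The key observation is that the candidate companion $(M',g,\id_{M'})$ and both binding $2$-cells consist entirely of Cauchy morphisms, so they live in the sub-pseudo-category $\LB_m\subseteq\LBop_m$ of Remark~\ref{rem:LBordsubcat}. Moreover, the companion axioms \eqref{eqn:companionidentities} involve only vertical composition of $2$-cells and operadic composition with unit bordisms, all of which stay inside the $1$-ary Cauchy sector and hence inside $\LB_m$. But fibrancy of the pseudo-category $\LB_m$, with precisely this same companion data, was already established in \cite[Proposition~3.3]{FFTvAQFT}. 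The paper therefore simply invokes that result, and the proof is two sentences long.

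Your direct verification would work, but it redoes computations that were already carried out in the earlier paper. In particular, the pushout bookkeeping you anticipate for the horizontal axiom~\eqref{eqn:companionidentities2} --- tracking $N_0^-$, $N_1^+$, and the gluing region through \eqref{eqn:bordismcomposition} for the two composites $(M',\id_{M'},\id_{M'})\cmp(M',g,\id_{M'})$ and $(M',g,\id_{M'})\cmp(M,\id_M,\id_M)$ --- is exactly what the unitor coherences of $\LB_m$ encode, and that work was done once and for all in \cite{FFTvAQFT}. A minor imprecision: in your discussion of the vertical axiom you refer to the target as ``the identity $2$-cell on the horizontal identity bordisms''; what is actually required is $u([U,g])$, which has nontrivial vertical sides $[U,g]$ (you do note this in the same sentence, so the slip is only terminological).
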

\begin{proof}
Observe that our candidate for the companion is given by
a horizontal morphism and $2$-cells in the sub-pseudo-category $\LB_m\subseteq \LBop_m$
because it consists only of Cauchy morphisms, see also Remark \ref{rem:LBordsubcat}.
It coincides with the companion used in \cite[Proposition 3.3]{FFTvAQFT}
to prove fibrancy of the pseudo-category $\LB_m$, hence our claim
follows from this result.
\end{proof}


\section{\label{sec:FFTs}Globally hyperbolic Lorentzian functorial QFTs}
In this section we define a concept of globally hyperbolic Lorentzian functorial QFTs (FQFTs)
which is richer than the one previously introduced in \cite{FFTvAQFT}.
This increased richness comes from replacing the globally hyperbolic Lorentzian
bordism pseudo-category $\LB_m$ from \cite{FFTvAQFT} with our novel 
bordism pseudo-operad $\LBop_m$ from Section \ref{sec:LBord}.
Loosely speaking, an $m$-dimensional globally hyperbolic Lorentzian FQFT is a pseudo-multifunctor
$\FFF : \LBop_m\to \mathscr{Q}$ in the sense of Definition \ref{def:pseudomultifunctor}
to a suitable target pseudo-operad $\mathscr{Q}$. In the context of this paper,
we will choose the target pseudo-operad as follows:
Choosing any cocomplete closed symmetric monoidal category $\TT$, we 
form the symmetric monoidal category $\Alg_{\mathsf{uAs}}(\TT)$ of 
unital associative algebras in $\TT$ and consider its underlying operad, 
which we denote by the same symbol.
We define our target pseudo-operad \smash{$\mathscr{Q} = \iota\big(\Alg_{\mathsf{uAs}}(\TT)\big)$} by applying the
inclusion $2$-functor $\iota: \Op^{(2,1)}\to \PsOp^{\mathrm{fib}}$ from Construction~\ref{constr:iota}.
This choice is motivated by the choice of target for AQFTs (see Definition~\ref{def:AQFT})
and it will be crucial for proving our FQFT/AQFT equivalence theorem in Section \ref{sec:equivalence}.
In Section \ref{sec:twistedFFT} below, we will briefly discuss how a notion of
functorial QFTs assigning state spaces can be introduced in our 
context as twisted or relative functorial QFTs \cite{StolzTeichner,FT}.
Using the $2$-adjunction from Theorem \ref{theo:2adjunction}, we obtain an equivalent but simpler
description of such FQFTs in terms of ordinary multifunctors $\FFF : \tau(\LBop_m)\to\Alg_{\mathsf{uAs}}(\TT)$
from the operad given by applying to $\LBop_m$ the truncation $2$-functor $\tau:\PsOp^{\mathrm{fib}}\to  \Op^{(2,1)}$ 
from Construction \ref{constr:tau}. It is important to stress that the truncated operad
$\tau(\LBop_m)$ remembers much of the rich geometric structure of the pseudo-operad $\LBop_m$ 
from Section \ref{sec:LBord}. In particular, its objects
$(M,\Sigma)\in \tau(\LBop_m)$ and operations $[N,\und{\iota_0},\iota_1]: 
(\und{M_0},\und{\Sigma_0})\to (M_1,\Sigma_1)$ keep track of the 
collar regions around Cauchy surfaces and bordisms, which are crucial 
for a well-defined operadic composition. 
\begin{defi}\label{def:FFT}
Fix any cocomplete closed symmetric monoidal category $\TT$.
\begin{itemize}
\item[(a)] The \textit{category of $m$-dimensional globally hyperbolic Lorentzian FQFTs} is defined as the category
\begin{flalign}
\FQFT_m\,\coloneqq\, \Alg_{\tau(\LBop_m)}\big(\Alg_{\mathsf{uAs}}(\TT)\big)
\end{flalign}
of $\Alg_{\mathsf{uAs}}(\TT)$-valued algebras over the operad $\tau(\LBop_m)$
which is obtained by applying the truncation $2$-functor $\tau:\PsOp^{\mathrm{fib}}\to  \Op^{(2,1)}$ 
from Construction \ref{constr:tau} to the $m$-dimensional globally hyperbolic Lorentzian bordism pseudo-operad
$\LBop_m$ from Section \ref{sec:LBord}.
This means that an FQFT is a multifunctor 
$\FFF : \tau(\LBop_m) \to \Alg_{\mathsf{uAs}}(\TT)$ and that a morphism between FQFTs
is a multinatural transformation $\zeta : \FFF\Rightarrow\GGG : \tau(\LBop_m)\to \Alg_{\mathsf{uAs}}(\TT)$.

\item[(b)] An object $\FFF\in \FQFT_m$ is said to satisfy the \textit{time-slice axiom}
if the multifunctor $\FFF : \tau(\LBop_m) \to \Alg_{\mathsf{uAs}}(\TT)$ sends every Cauchy bordism
in $\tau(\LBop_m)$, i.e.\ every $1$-ary operation $[N,\iota_0,\iota_1]:(M_0,\Sigma_0)\to (M_1,\Sigma_1)$
such that $\iota_0$ is a Cauchy morphism, to an isomorphism
$\FFF([N,\iota_0,\iota_1]): \FFF(M_0,\Sigma_0) \stackrel{\cong}{\longrightarrow} \FFF(M_1,\Sigma_1) $
in $\Alg_{\mathsf{uAs}}(\TT)$. We denote by
\begin{flalign}\label{eqn:FQFTW}
\FQFT_m^W\,\subseteq\, \FQFT_m
\end{flalign}
the full subcategory of all FQFTs satisfying the time-slice axiom.
\end{itemize}
\end{defi}

\begin{rem}\label{rem:meaningofnto1}
We would like to highlight that the 
$(n\geq 2)$-ary bordisms $[N,\und{\iota_0},\iota_1] : (\und{M_0},\und{\Sigma_0})\to (M_1,\Sigma_1)$
in $\tau(\LBop_m)$ are of direct physical significance as they endow every globally hyperbolic
Lorentzian FQFT $\FFF\in \FQFT_m$ with a structure that implies
relativistic causality through an Eckmann-Hilton-type argument.
To illustrate this important point, let us consider concretely the case of a $2$-ary bordism to which $\FFF\in \FQFT_m$
assigns an $\Alg_{\mathsf{uAs}}(\TT)$-morphism
\begin{flalign}
\FFF([N,\und{\iota_0},\iota_1])\,:\, \FFF(M_{0_1},\Sigma_{0_1})\otimes \FFF(M_{0_2},\Sigma_{0_2})~
\longrightarrow~\FFF(M_1,\Sigma_1)~,~~
a_{0_1}\otimes a_{0_2}~\longmapsto~a_{0_1}\bullet a_{0_2}
\end{flalign}
from a tensor product algebra. Denoting by $\oone_{0_i}\in \FFF(M_{0_i},\Sigma_{0_i})$ the units 
of these algebras, we can use the $\Alg_{\mathsf{uAs}}(\TT)$-morphism property to 
write the element $a_{0_1}\bullet a_{0_2}\in \FFF(M_1,\Sigma_1)$ in two different ways, namely
\begin{subequations}
\begin{flalign}
\nn a_{0_1}\bullet a_{0_2}\,&=\,(a_{0_1}\,\oone_{0_1})\bullet(\oone_{0_2}\,a_{0_2})\,=\,
(a_{0_1}\bullet \oone_{0_2})\,(\oone_{0_1}\bullet a_{0_2})\\
\,&=\, \FFF([N,\iota_{0_1},\iota_1])(a_{0_1})~\FFF([N,\iota_{0_2},\iota_1])(a_{0_2})
\end{flalign}
and
\begin{flalign}
\nn a_{0_1}\bullet a_{0_2}\,&=\,(\oone_{0_1}\,a_{0_1})\bullet(a_{0_2}\,\oone_{0_2})\,=\,
(\oone_{0_1}\bullet a_{0_2})\,(a_{0_1}\bullet \oone_{0_2})\\
\,&=\, \FFF([N,\iota_{0_2},\iota_1])(a_{0_2})~\FFF([N,\iota_{0_1},\iota_1])(a_{0_1})\quad.
\end{flalign}
\end{subequations}
This simple Eckmann-Hilton-type argument implies the following notion of relativistic causality
for every globally hyperbolic Lorentzian FQFT $\FFF\in \FQFT_m$: 
Observables $a_{0_i}\in \FFF(M_{0_i},\Sigma_{0_i})$
localized on \textit{causally disjoint} partial Cauchy surfaces $\iota_{0_i}(\Sigma_{0_i})\subset N$
evolve along the bordisms $[N,\iota_{0_i},\iota_1]:(M_{0_i},\Sigma_{0_i})\to (M_1,\Sigma_1)$
to \textit{commuting} observables $\FFF([N,\iota_{0_i},\iota_1])(a_{0_i})\in \FFF(M_1,\Sigma_1)$ 
on the later-in-time full Cauchy surface $\iota_1(\Sigma_{1})\subset N$.
\end{rem}

\begin{rem}\label{rem:FFT}
We would like to note that only the
\textit{invertible} morphisms in $\FQFT_m$ admit a pseudo-operadic interpretation
in terms of our $2$-adjunction $\tau : \PsOp^{\mathrm{fib}}\rightleftarrows \Op^{(2,1)} : \iota$
from Theorem \ref{theo:2adjunction} because $\Op^{(2,1)}$ is the $(2,1)$-category of operads, multifunctors
and multinatural \textit{isomorphisms}. We believe that the non-invertible
morphisms in $\FQFT_m$ correspond to a pseudo-operadic variant of the (horizontal) 
pseudo-natural transformations from \cite{PseudoCats}.
However, generalizing the $2$-adjunction in Theorem \ref{theo:2adjunction} to 
this context goes beyond the scope of this paper, which is why we decided 
instead to add the non-invertible multinatural transformations in Definition \ref{def:FFT} by hand.
\end{rem}

For proving our FQFT/AQFT equivalence theorem in Section \ref{sec:equivalence}, 
we require an analogue for FQFTs of the additivity property for AQFTs from Definition \ref{def:AQFTadditivity}.
Since the key component of an object $(M,\Sigma)\in\tau(\LBop_m)$ is the Cauchy surface
$\Sigma\subset M$, and not its collar region $M\in\Loc_m$, it would make
sense to formulate an additivity property for FQFTs
in terms of pairs $(U,S)$ consisting of a relatively compact causally convex open subset
$U\subseteq M$ and a Cauchy surface $S\subset U$ which is contained in the given Cauchy surface
$S\subseteq \Sigma$. However, due to the appearance of the chronological past in
the condition \eqref{eqn:bordismsurfacesII} for operations of the bordism
pseudo-operad $\LBop_m$ in the non-Cauchy case, we cannot directly 
consider subsets $S\subseteq \Sigma$ of the given Cauchy surface,
but we have to move them to the chronological past of $\Sigma$. 
Since the size of the collar region is inessential, we can then also require
that $U\subseteq I^-_M(\Sigma) $ lies in the chronological past of $\Sigma$. 
This leads to the following
definition for an analogue of the category $\RC_M$ in the context of FQFT.
\begin{defi}\label{def:RCMSigma}
For each object $(M,\Sigma)\in\tau(\LBop_m)$, we denote by
$\RC_{(M,\Sigma)}$ the category whose objects are all pairs
$(U,S)$ consisting of a causally convex open subset
$U\subseteq I^-_M(\Sigma)$ which is relatively compact in $I^-_M(\Sigma)$,
i.e.\ $\mathrm{cl}_{I^-_M(\Sigma)}(U)\subseteq I^-_M(\Sigma)$ is compact,
and a Cauchy surface $S\subset U$.\footnote{Note that these properties 
imply that $U\subseteq M$ is also relatively compact in $M$ and that
the condition $\cl_M(S)= \cl_{I^-_M(\Sigma)}(S) \subset I^-_M(\Sigma)$ 
from \eqref{eqn:bordismsurfacesII} holds true.}
Given two objects $(U,S)$ and $(U^\prime,S^\prime)$,
there exists a unique morphism $(U,S)\to (U^\prime,S^\prime)$ in $\RC_{(M,\Sigma)}$
if $U\subseteq U^\prime$ and the relevant case of the conditions in 
\eqref{eqn:bordismsurfacesI} and \eqref{eqn:bordismsurfacesII} holds true.
This means that if $U\subseteq U^\prime$ is Cauchy we require
that $S \subset J^-_{U^\prime}(S^\prime)$ and otherwise we require that
$\mathrm{cl}_{U^\prime}(S)\subset I^-_{U^\prime}(S^\prime)$.
\end{defi}

This category has been designed such that there exists
a multifunctor $\RC_{(M,\Sigma)}\to \tau(\LBop_m)$ which
assigns to each object $(U,S)\in \RC_{(M,\Sigma)}$ the object $\big(U,S\big)\in \tau(\LBop_m)$
and to each morphism $(U,S)\to (U^\prime,S^\prime)$ in $\RC_{(M,\Sigma)}$ the $1$-ary operation 
$\big[U^\prime,\iota_{U}^{U^\prime},\id_{U^\prime}\big] : 
(U,S)\to (U^\prime,S^\prime)$ in $\tau(\LBop_m)$ which is defined by the zig-zags (see \eqref{eqn:bordismzigzag})
\begin{flalign}\label{eqn:zigzagDM}
\xymatrix{
U & \ar[l]_-{=} U  \ar[r]^-{\iota_{U}^{U^\prime}} & 
U^\prime  &\ar[l]_-{\id_{U^\prime}}U^\prime\ar[r]^-{=} & U^\prime
}\quad,
\end{flalign}
where we recall that  $\iota_{U}^{U^\prime}$ denotes the inclusion
$\Loc_m$-morphism. Note that this operation is well-defined
as a consequence of Definition~\ref{def:RCMSigma} and that 
functoriality follows from the definition  \eqref{eqn:bordismcomposition} of
horizontal composition in $\LBop_m$.
This implies that each $\FFF\in \FQFT_m$ can be restricted along the multifunctor
$\RC_{(M,\Sigma)}\to \tau(\LBop_m)$ to a functor
$\FFF\vert :\RC_{(M,\Sigma)}\to \Alg_{\mathsf{uAs}}(\TT) $
on the category in Definition \ref{def:RCMSigma}.
\begin{defi}\label{def:FFTadditivity}
An object $\FFF \in \FQFT_m$ is called \textit{additive}
if the canonical $\Alg_{\mathsf{uAs}}(\TT)$-morphism
\begin{flalign}
\colim\Big(\FFF\vert : \RC_{(M,\Sigma)}\to \Alg_{\mathsf{uAs}}(\TT)\Big)~\stackrel{\cong}{\longrightarrow}~\FFF(M,\Sigma)
\end{flalign}
is an isomorphism in $\Alg_{\mathsf{uAs}}(\TT)$, for all 
$(M,\Sigma)\in\tau(\LBop_m)$. We denote by
\begin{subequations}\label{eqn:FQFTadd}
\begin{flalign}
\FQFT_m^{\add}\,\subseteq \,\FQFT_m
\end{flalign}
the full subcategory of all additive FQFTs and by
\begin{flalign}
\FQFT_m^{W,\add}\,\subseteq \, \FQFT_m
\end{flalign}
\end{subequations}
the full subcategory of all additive FQFTs which also satisfy the time-slice axiom.
\end{defi}

\begin{rem}
We show in Proposition \ref{prop:RCMSigmafiltered} that the category 
$\RC_{(M,\Sigma)}$ from Definition \ref{def:RCMSigma} is filtered, for all 
$(M,\Sigma)\in\tau(\LBop_m)$.
Since the forgetful functor $\Alg_{\mathsf{uAs}}(\TT)\to \TT$ preserves and reflects
filtered colimits, one can deduce additivity by verifying the 
simpler condition that 
\begin{flalign}
\colim\Big(\FFF\vert : \RC_{(M,\Sigma)}\to \TT\Big)~\stackrel{\cong}{\longrightarrow}~\FFF(M,\Sigma)
\end{flalign}
is an isomorphism in $\TT$, for all 
$(M,\Sigma)\in\tau(\LBop_m)$.
\end{rem}


\section{\label{sec:equivalence}The equivalence theorem}
The aim of this section is to prove an 
equivalence theorem between AQFTs and FQFTs in our globally hyperbolic Lorentzian context. 
More specifically, for any spacetime 
dimension $m\in\bbN$, we will exhibit an equivalence 
\begin{flalign}
\AQFT_m^{W,\add}\,\simeq\, \FQFT_m^{W,\add}
\end{flalign}
between the category $\AQFT_m^{W,\add}$ of additive AQFTs satisfying
the time-slice axiom (see Definitions \ref{def:AQFT} and \ref{def:AQFTadditivity})
and the category $\FQFT_m^{W,\add}$ of additive FQFTs satisfying the time-slice axiom
(see Definitions \ref{def:FFT} and \ref{def:FFTadditivity}). We would like 
to emphasize that our proof relies heavily on both the time-slice axiom and the additivity property, 
and so we do not expect that either of these assumptions can be removed from our hypotheses.
It is interesting to note that the equivalence theorem between AQFTs and time-orderable
prefactorization algebras (tPFAs) in \cite{FAvsAQFT} uses the same time-slice and additivity assumptions,
which indicates that both are crucial in order to relate AQFT to other axiomatic frameworks
for quantum field theory.
\sk

Let us start with presenting a functorial construction from AQFTs to FQFTs,
upgrading the one in \cite[Construction 4.9]{FFTvAQFT} to our operadic context.
This construction crucially relies on the time-slice axiom for AQFTs, 
but it does not require the additivity property.
\begin{constr}\label{constr:AQFTtoFQFT}
We will define a functor
\begin{flalign}\label{eqn:bbFfunctor}
\bbF\,:\,\AQFT_m^{W}~\longrightarrow~\FQFT_m
\end{flalign}
from the category of AQFTs satisfying the time-slice axiom (see Definition \ref{def:AQFT})
to the category of FQFTs (see Definition \ref{def:FFT}). 
\sk

For every object $\AAA\in\AQFT_m^W$,
i.e.\ a multifunctor $\AAA : \P_{\Loc_m^\perp}\to \Alg_{\mathsf{uAs}}(\TT)$
sending Cauchy morphisms in $\P_{\Loc_m^\perp}$ to isomorphisms in $\Alg_{\mathsf{uAs}}(\TT)$,
we have to define an object $\bbF_{\AAA}\in \FQFT_m$, i.e.\ a multifunctor
\begin{subequations}\label{eqn:bbFAAA}
\begin{flalign}\label{eqn:bbFAAA1}
\bbF_{\AAA} \,:\, \tau(\LBop_m)~\longrightarrow~ \Alg_{\mathsf{uAs}}(\TT)\quad.
\end{flalign}
To an object $(M,\Sigma)\in \tau(\LBop_m)$, we assign the algebra
\begin{flalign}\label{eqn:bbFAAA2}
\bbF_{\AAA}(M,\Sigma)\,\coloneqq\, \AAA(M)\,\in\,\Alg_{\mathsf{uAs}}(\TT)
\end{flalign}
obtained by forgetting the Cauchy surface $\Sigma\subset M$. To an $n$-ary operation
$[N,\und{\iota_0},\iota_1] : (\und{M_0},\und{\Sigma_0})\to (M_1,\Sigma_1)$
in $\tau(\LBop_m)$, which we recall is represented by zig-zags \eqref{eqn:bordismzigzag}
of operations in the operad $\P_{\Loc_m^\perp}$, we assign the $\Alg_{\mathsf{uAs}}(\TT)$-morphism
defined by
\begin{flalign}\label{eqn:bbFAAA3}
\begin{gathered}
\xymatrix{
\ar@{=}[d]\bbF_{\AAA}(\und{M_0},\und{\Sigma_0})\ar@{-->}[rrrr]^-{\bbF_{\AAA}([N,\und{\iota_0},\iota_1] )} 
~&~ ~&~ ~&~ ~&~ \bbF_{\AAA}(M_1,\Sigma_1)\ar@{=}[d]\\
\AAA(\und{M_0})~&~ \ar[l]_-{\cong} \AAA(\und{V_0}) \ar[r]_-{\AAA(\und{\iota_0})}~&~ \AAA(N) ~&~
\ar[l]_-{\cong}^{\AAA(\iota_1)}\AAA(V_1) \ar[r]^-{\cong}~&~\AAA(M_1)
}
\end{gathered}\quad,
\end{flalign}
\end{subequations}
where $\AAA(\und{M_0})= \bigotimes_{i=1}^n\AAA(M_{0_i})\in \Alg_{\mathsf{uAs}}(\TT)$
and $\AAA(\und{V_0}) = \bigotimes_{i=1}^n\AAA(V_{0_i})\in \Alg_{\mathsf{uAs}}(\TT)$
denote the tensor product algebras, and $\Alg_{\mathsf{uAs}}(\TT)$-isomorphisms 
resulting from the time-slice axiom for $\AAA\in\AQFT_m^W$ are indicated by $\cong$. 
By a similar argument as in \cite[Construction 4.9]{FFTvAQFT},
one easily verifies that \eqref{eqn:bbFAAA3} does not depend
on the choice of representative $(N,\und{\iota_0},\iota_1)$ for the equivalence class 
$[N,\und{\iota_0},\iota_1] : (\und{M_0},\und{\Sigma_0})\to (M_1,\Sigma_1)$.
To check that the assignment \eqref{eqn:bbFAAA} defines a multifunctor,
we observe that it clearly preserves the identities \eqref{eqn:identitybordism} 
and it is equivariant with respect to the permutation actions \eqref{eqn:permutationbordisms}.
The preservation of operadic compositions is shown by applying $\AAA\in \AQFT_m^W$
to the commutative diagram \eqref{eqn:bordismcomposition} of operations in the operad
$\P_{\Loc_m^\perp}$.
\sk

For every $\AQFT_m^W$-morphism $\zeta : \AAA\Rightarrow\BBB:\P_{\Loc_m^\perp}\to  \Alg_{\mathsf{uAs}}(\TT)$,
i.e.\ a multinatural transformation with components $\zeta_M : \AAA(M)\to \BBB(M)$, for all
$M\in \P_{\Loc_m^\perp}$, we have to define an $\FQFT_m$-morphism $\bbF_{\zeta} : 
\bbF_{\AAA}\Rightarrow\bbF_{\BBB}:\tau(\LBop_m)\to  \Alg_{\mathsf{uAs}}(\TT)$, 
i.e.\ a multinatural transformation with components $(\bbF_\zeta)_{(M,\Sigma)}: \bbF_\AAA(M,\Sigma)
\to \bbF_{\BBB}(M,\Sigma)$, for all $(M,\Sigma)\in \tau(\LBop_m)$. We define
\begin{flalign}\label{eqn:bbFzeta}
\begin{gathered}
\xymatrix{
\ar@{=}[d]\bbF_{\AAA}(M,\Sigma) \ar@{-->}[rr]^-{(\bbF_\zeta)_{(M,\Sigma)}}~&~~&~\bbF_{\BBB}(M,\Sigma)\ar@{=}[d]\\
\AAA(M)\ar[rr]_-{\zeta_M}~&~ ~&~\BBB(M)
}
\end{gathered}\quad,
\end{flalign}
for all $(M,\Sigma)\in \tau(\LBop_m)$, to be independent of the Cauchy surface $\Sigma\subset M$.
Multinaturality of these components, for all 
$n$-ary operations $[N,\und{\iota_0},\iota_1] : (\und{M_0},\und{\Sigma_0})\to (M_1,\Sigma_1)$
in $\tau(\LBop_m)$, is easily verified by combining \eqref{eqn:bbFAAA3} and \eqref{eqn:bbFzeta}.
\sk

Functoriality as in \eqref{eqn:bbFfunctor} of the above constructions is evident.
\end{constr}

\begin{lem}\label{lem:bbFrestriction}
The functor \eqref{eqn:bbFfunctor} factorizes through the full subcategory
$\FQFT_m^W\subseteq \FQFT_m$ of FQFTs satisfying the time-slice axiom from Definition \ref{def:FFT}.
Furthermore, it restricts to a functor
\begin{flalign}
\bbF\,:\, \AQFT_m^{W,\add}~\longrightarrow~\FQFT_m^{W,\add}
\end{flalign}
between the categories of additive theories satisfying the time-slice axiom 
from Definitions~\ref{def:AQFTadditivity} and~\ref{def:FFTadditivity}, respectively.
\end{lem}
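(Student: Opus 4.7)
The plan is to handle the two claims separately. For the time-slice axiom preservation, I would simply unfold the defining formula \eqref{eqn:bbFAAA3} for $\bbF_\AAA([N,\iota_0,\iota_1])$ on a Cauchy bordism (so $n=1$ and $\iota_0\colon V_0\to N$ is a Cauchy morphism). In this case all four $\Loc_m$-morphisms in the zig-zag \eqref{eqn:bordismzigzag} are Cauchy, hence $\AAA$ sends each to an isomorphism in $\Alg_{\mathsf{uAs}}(\TT)$ and their composite $\bbF_\AAA([N,\iota_0,\iota_1])$ is invertible. This shows $\bbF_\AAA\in\FQFT_m^W$.

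For the additivity claim, fix $\AAA\in\AQFT_m^{W,\add}$ and $(M,\Sigma)\in\tau(\LBop_m)$, and introduce the auxiliary spacetime $M^-\coloneqq I^-_M(\Sigma)$, viewed as an object of $\Loc_m$ (its global hyperbolicity is a standard consequence of the Bernal--S\'anchez splitting that I would defer to Appendix~\ref{app:Lorentzian}). The proof then proceeds in three steps. Step 1: use the time-slice axiom to show that the inclusion $\iota^M_{M^-}\colon M^-\to M$ is sent by $\AAA$ to an isomorphism. Although $\iota^M_{M^-}$ is itself not Cauchy, I would factor it as $M^-\hookrightarrow M^-\cup V\hookrightarrow M$ through a small causally convex Cauchy neighborhood $V\subset M$ of $\Sigma$ obtained by thickening $\Sigma$ slightly to the future via Bernal--S\'anchez. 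The second inclusion is Cauchy since $M^-\cup V\supset\Sigma$, while the first is Cauchy since a slightly past-shifted Cauchy surface of $M^-\cup V$ lies inside $M^-$; two applications of the time-slice axiom then give the claim. Step 2: apply additivity of $\AAA$ to $M^-\in\Loc_m$ to obtain the isomorphism $\colim_{V\in\RC_{M^-}}\AAA(V)\cong\AAA(M^-)$.

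Step 3: show that the forgetful functor $q\colon\RC_{(M,\Sigma)}\to\RC_{M^-}$, $(U,S)\mapsto U$, is final, which by general colimit calculus gives
\begin{flalign}
\colim_{(U,S)\in\RC_{(M,\Sigma)}}\AAA(U)\,\cong\,\colim_{V\in\RC_{M^-}}\AAA(V)\quad.
\end{flalign}
For the requisite non-emptiness and connectedness of the comma categories $V\downarrow q$, note that any $V\in\RC_{M^-}$ is itself globally hyperbolic and admits a Cauchy surface $S_V$, so $(V,S_V)\in\RC_{(M,\Sigma)}$ provides an object sitting over $V$; connectedness then follows from the filteredness of $\RC_{(M,\Sigma)}$ established in Proposition~\ref{prop:RCMSigmafiltered}, since any common upper bound $(U_3,S_3)$ of two given objects automatically satisfies $V\subseteq U_3$ and thus remains over $V$.

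Combining Steps 1--3 and tracing through \eqref{eqn:bbFAAA3}, which sends the canonical bordism $[M,\iota^M_U,\id_M]\colon(U,S)\to(M,\Sigma)$ in $\tau(\LBop_m)$ to the factorization $\AAA(\iota^M_U)=\AAA(\iota^M_{M^-})\circ\AAA(\iota^{M^-}_U)$, one checks that the canonical cocone map $\colim\bbF_\AAA\vert\to\bbF_\AAA(M,\Sigma)$ agrees with the composite $\colim\bbF_\AAA\vert\xrightarrow{\cong}\AAA(M^-)\xrightarrow{\cong}\AAA(M)$ and is therefore an isomorphism. I expect the main obstacle to be Step 1, specifically the Lorentzian geometric input that $I^-_M(\Sigma)$ is indeed an object of $\Loc_m$ and that a suitable causally convex Cauchy neighborhood $V$ of $\Sigma$ can be produced; these are purely Lorentzian facts deferable to Appendix~\ref{app:Lorentzian}, after which the remaining steps reduce to careful bookkeeping of filtered colimits and commutative diagrams.
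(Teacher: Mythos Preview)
Your proposal is correct and follows essentially the same route as the paper's proof: factor $\bbF_\AAA\vert$ through the forgetful functor $\RC_{(M,\Sigma)}\to\RC_{I^-_M(\Sigma)}$, show this functor is final (via Proposition~\ref{prop:RCMSigmafiltered}), apply additivity of $\AAA$ on $I^-_M(\Sigma)$, and use the time-slice axiom for the inclusion $I^-_M(\Sigma)\hookrightarrow M$. The one discrepancy is your Step~1: you claim $\iota^M_{M^-}$ is not Cauchy and construct a two-step factorization, but in fact $I^-_M(\Sigma)\subseteq M$ \emph{does} contain a Cauchy surface of $M$ (this is Lemma~\ref{lem:surfaces} with $n=1$), so the inclusion is Cauchy and the paper simply applies the time-slice axiom directly---your workaround is valid but unnecessary.
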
 
\begin{proof}
The first claim is obvious: For every Cauchy bordism $[N,\iota_0,\iota_1] :(M_0,\Sigma_0)\to(M_1,\Sigma_1)$ 
in $\tau(\LBop_m)$ both $\iota_0$ and $\iota_1$ are Cauchy morphisms in $\P_{\Loc_m^\perp}$,
hence all arrows in \eqref{eqn:bbFAAA3} are isomorphisms in $\Alg_{\mathsf{uAs}}(\TT)$.
\sk

Let us now prove the second claim. Given any additive $\AAA\in \AQFT_m^{W,\add}$,
we have to show that $\bbF_{\AAA}\in \FQFT_m^{W}$ defined in \eqref{eqn:bbFAAA}
satisfies the additivity property from Definition \ref{def:FFTadditivity}. 
Directly from the definitions in \eqref{eqn:bbFAAA}, one observes that 
the restricted functor $\bbF_{\AAA}\vert : \RC_{(M,\Sigma)}\to \Alg_{\mathsf{uAs}}(\TT)$
factorizes through the forgetful functor
\begin{flalign}\label{eqn:forgetfulTMP}
\RC_{(M,\Sigma)}~&\longrightarrow~\RC_{I_M^-(\Sigma)}\quad,\\
\nn (U,S)~&\longmapsto~U\quad,\\
\nn \big((U,S)\to(U^\prime,S^\prime) \big)~&\longmapsto~\big(U\subseteq U^\prime\big)
\end{flalign}
in the sense that we have a commutative diagram
\begin{flalign}
\begin{gathered}
\xymatrix{
\ar[dr]\RC_{(M,\Sigma)} \ar[rr]^-{\bbF_{\AAA}\vert}~&~ ~&~ \Alg_{\mathsf{uAs}}(\TT)\\
~&~ \RC_{I^-_M(\Sigma)}\ar[ru]_-{\AAA\vert}~&~
}
\end{gathered}
\end{flalign}
of categories and functors, for all objects $(M,\Sigma)\in\tau(\LBop_m)$. The fact that $\bbF_{\AAA}\vert$ 
is the pullback of the restricted AQFT $\AAA\vert$ along the forgetful functor induces a morphism
between the corresponding colimits which fits into the commutative diagram
\begin{flalign}\label{eqn:additivitydiagramTMP}
\begin{gathered}
\xymatrix{
\ar[d] \colim\Big(\bbF_{\AAA}\vert : \RC_{(M,\Sigma)}\to\Alg_{\mathsf{uAs}}(\TT)\Big) \ar[r]~&~
\bbF_{\AAA}(M,\Sigma) \,=\,\AAA(M) \\
\colim\Big(\AAA\vert : \RC_{I^-_M(\Sigma)}\to\Alg_{\mathsf{uAs}}(\TT)\Big) \ar[r]_-{\cong}~&~\AAA(I^-_M(\Sigma))
\ar[u]^-{\cong}_-{\AAA\big(\iota_{I^-_M(\Sigma)}^{M}\big)}
}
\end{gathered}
\end{flalign}
in $\Alg_{\mathsf{uAs}}(\TT)$. The bottom horizontal arrow in this diagram is an isomorphism
because $\AAA$ is, by hypothesis, additive and the right vertical arrow is an isomorphism
as a consequence of the time-slice axiom for $\AAA$. Using the result from Proposition \ref{prop:RCMSigmafiltered}
that the category $\RC_{(M,\Sigma)}$ is filtered, one easily checks that 
the forgetful functor \eqref{eqn:forgetfulTMP} is final, hence the left vertical arrow
in \eqref{eqn:additivitydiagramTMP} is an isomorphism too. This implies that the top horizontal
arrow is an isomorphism, for every $(M,\Sigma)\in\tau(\LBop_m)$, hence
$\bbF_{\AAA}\in \FQFT_m^{W,\add}$ is additive.
\end{proof}

We will now present a functorial construction from FQFTs to AQFTs,
which turns out to be more difficult
and, in contrast to Construction~\ref{constr:AQFTtoFQFT}, uses additivity. 
Our construction is inspired by, and non-trivially 
extends, the proof of essential surjectivity in \cite[Theorem 4.11]{FFTvAQFT}.
Let us start with some preparations which allow us to present our construction
more concisely. 
\begin{defi}\label{def:SigmaMcategory}
Given any object $M\in\P_{\Loc_m^\perp}$, we denote by
$\mathbf{\Sigma}_M$ the category whose objects are all Cauchy surfaces
$\Sigma\subset M$ and there exists a unique morphism $\Sigma\to \Sigma^\prime$
if and only if $\Sigma\subset J^-_M(\Sigma^\prime)$ lies in the causal past of $\Sigma^\prime$.
This category is filtered as a consequence of Lemma \ref{lem:surfaces}.
\end{defi}

Note that there exists a multifunctor $\mathbf{\Sigma}_M\to \tau(\LBop_m)$ which assigns
to $\Sigma\subset M$ the object $(M,\Sigma)\in\tau(\LBop_m)$ and to
a morphism $\Sigma\to \Sigma^\prime$ the equivalence class of bordisms
$[M,\id_M,\id_M] : (M,\Sigma)\to (M,\Sigma^\prime)$ represented by the identity zig-zags.
This allows us to restrict any $\FFF\in \FQFT_m$ to a functor
$\FFF\vert : \mathbf{\Sigma}_M\to \Alg_{\mathsf{uAs}}(\TT)$. Assigning to
$\FFF\in \FQFT_m$ an AQFT requires in particular the specification
of algebras $\bbA_{\FFF}(M)\in\Alg_{\mathsf{uAs}}(\TT)$, for all $M\in\P_{\Loc_m^\perp}$,
which do not depend on the choice of a Cauchy surface for $M$. A natural candidate 
is given by taking the colimit
\begin{subequations}\label{eqn:SigmaMcolimit}
\begin{flalign}\label{eqn:SigmaMcolimit1}
\bbA_{\FFF}(M)\,\coloneqq\, \colim\Big(\FFF\vert : \mathbf{\Sigma}_M\to \Alg_{\mathsf{uAs}}(\TT) \Big)
\end{flalign}
over the category $\mathbf{\Sigma}_M$ of Cauchy surfaces from Definition \ref{def:SigmaMcategory}.
It is shown in the proof of \cite[Theorem 4.11]{FFTvAQFT} that these
algebras can naturally be endowed with actions
of Cauchy morphisms $f:M\to N$ in $\P_{\Loc_m^\perp}$ since, given any Cauchy surface
$\Sigma\subset M$ of $M$, the image $f(\Sigma)\subset N$ is a Cauchy surface of $N$.
Explicitly, these $\Alg_{\mathsf{uAs}}(\TT)$-morphisms 
are defined via the universal property of the colimit by
\begin{flalign}\label{eqn:SigmaMcolimit2}
\begin{gathered}
\xymatrix@C=3em{
\bbA_{\FFF}(M) \ar@{-->}[rr]^-{\bbA_{\FFF}(f)}~&&~\bbA_{\FFF}(N)\\
\ar[u]^-{\iota_\Sigma}\FFF(M,\Sigma)\ar[rr]_-{\FFF([N,f,\id_N])}~&&~\FFF(N,f(\Sigma))\ar[u]_-{\iota_{f(\Sigma)}}
}
\end{gathered}\quad,
\end{flalign}
\end{subequations}
for all $\Sigma\in\mathbf{\Sigma}_M$.
\sk

Unfortunately, this construction does not extend to 
general $n$-ary operations $\und{f} : \und{M}\to N$
in the operad $\P_{\Loc_m^\perp}$. The reason being that, given any family of Cauchy surfaces
$\Sigma_i \subset M_i$ with $i=1,\dots,n$, in general there does not exist an extension
of the image $\bigcup_{i=1}^n f_i(\Sigma_i)\subset N$ to a Cauchy surface of $N$.
In order to circumvent these issues, we will
require $\FFF\in \FQFT_m^{\add}$ to be additive in the sense of Definition \ref{def:FFTadditivity}
and use the extension results from \cite[Proposition 3.6]{BernalSanchez2} for achronal \textit{compact}
subsets to Cauchy surfaces. 
\begin{constr}\label{constr:FQFTtoAQFT}
We will define a functor
\begin{flalign}\label{eqn:bbAfunctor}
\bbA\,:\,\FQFT_m^{\add}~\longrightarrow~\AQFT_m
\end{flalign}
from the category of additive FQFTs (see Definition \ref{def:FFTadditivity})
to the category of AQFTs (see Definition \ref{def:AQFT}). 
\sk

For every object $\FFF\in\FQFT_m^{\add}$, i.e.\ a multifunctor
$\FFF : \tau(\LBop_m)\to\Alg_{\mathsf{uAs}}(\TT)$ satisfying
the additivity property from Definition \ref{def:FFTadditivity},
we have to define an object $\bbA_{\FFF}\in \AQFT_m$, i.e.\ a multifunctor
\begin{subequations}\label{eqn:bbAFFF}
\begin{flalign}\label{eqn:bbAFFF1}
\bbA_{\FFF}\,:\,\P_{\Loc_m^\perp}~\longrightarrow~\Alg_{\mathsf{uAs}}(\TT)\quad.
\end{flalign}
To an object $M\in \P_{\Loc_m^\perp}$, we assign the algebra from \eqref{eqn:SigmaMcolimit1}, i.e.\
\begin{flalign}\label{eqn:bbAFFF2}
\bbA_{\FFF}(M)\,\coloneqq\, \colim\Big(\FFF\vert: \mathbf{\Sigma}_M \to \Alg_{\mathsf{uAs}}(\TT)\Big)\quad,
\end{flalign}
where $\mathbf{\Sigma}_M$ is the category of Cauchy surfaces from Definition \ref{def:SigmaMcategory}.
To an $n$-ary operation $\und{f} : \und{M}\to N$ in $\P_{\Loc_m^\perp}$, we assign 
the $\Alg_{\mathsf{uAs}}(\TT)$-morphism defined via the universal property of the colimit and additivity
of $\FFF$ (see Definition \ref{def:FFTadditivity}) by
\begin{flalign}\label{eqn:bbAFFF3}
\begin{gathered}
\xymatrix@C=3em{
\bbA_{\FFF}(\und{M}) \ar@{-->}[rr]^-{\bbA_{\FFF}(\und{f})}~&&~ \bbA_{\FFF}(N)\\
\ar[u]^-{\iota_{\und{\Sigma}}} \FFF(\und{M},\und{\Sigma})\ar@{-->}[rr]~&&~ \FFF(N,\Sigma^\prime)\ar[u]_-{\iota_{\Sigma^\prime}}\\
\ar[u]^-{\FFF([\und{M},\iota_{\und{U}}^{\und{M}},\id_{\und{M}}])}
\FFF(\und{U},\und{S})\ar[rru]_-{~~~~~~\FFF([N,\und{f}\vert,\id_N])}~&&~
}
\end{gathered}\quad,
\end{flalign}
\end{subequations}
for all $\und{\Sigma}\in\mathbf{\Sigma}_{\und{M}} = \prod_{i=1}^n\mathbf{\Sigma}_{M_i}$
and $(\und{U},\und{S})\in \RC_{(\und{M},\und{\Sigma})} = \prod_{i=1}^n\RC_{(M_i,\Sigma_i)}$,
where $\und{f}\vert : \und{U}\to N$ denotes the restriction of $\und{f} : \und{M}\to N$
and $\Sigma^\prime\in \mathbf{\Sigma}_N$ is any choice of Cauchy surface of $N$
such that $\mathrm{cl}_N\big(\bigcup_{i=1}^n f_i(S_i)\big)\subset I^-_N(\Sigma^\prime)$.
(This implies that the relevant condition in \eqref{eqn:bordismsurfacesI} and \eqref{eqn:bordismsurfacesII} 
holds true, hence $[N,\und{f}\vert,\id_N]: (\und{U},\und{S})\to (N,\Sigma^\prime)$ defines
an $n$-ary operation in $\tau(\LBop_m)$.)
The existence of such a Cauchy surface is guaranteed by the fact that the subset
$\mathrm{cl}_N\big(\bigcup_{i=1}^n f_i(S_i)\big)\subset N$ is achronal and 
compact, as a consequence of the relative compactness of $U_i\subseteq M_i$ for all $i$,
and therefore extends by \cite[Proposition 3.6]{BernalSanchez2} to a Cauchy surface
of $N$. We can then choose for $\Sigma^\prime\subset N$ any (chronologically) 
later Cauchy surface, which exists by Lemma \ref{lem:surfaces}. Note that the definition in \eqref{eqn:bbAFFF3}
does not depend on the choice of such a $\Sigma^\prime$ since the category $\mathbf{\Sigma}_N$ is filtered.
Additionally, we note that the algebras 
$\bbA_\FFF(\und{M}) = \bigotimes_{i = 1}^n \bbA_\FFF(M_i)\in \Alg_{\mathsf{uAs}}(\TT)$ 
and $\FFF(\und{M},\und{\Sigma}) = \bigotimes_{i=1}^n \FFF(M_i,\Sigma_i)\in \Alg_{\mathsf{uAs}}(\TT)$ 
in \eqref{eqn:bbAFFF3} are tensor products of 
colimits over, respectively, the filtered categories $\mathbf{\Sigma}_{M_i}$
and $\RC_{(M_i,\Sigma_i)}$, for $i=1,\dots,n$.
Since the forgetful functor $\Alg_{\mathsf{uAs}}(\TT) \to \TT$ preserves and reflects 
filtered colimits, and the monoidal product in $\TT$ preserves colimits (since $\TT$ is closed monoidal), 
it follows that $\bbA_\FFF(\und{M})$ can be expressed as a colimit over the product category 
$\mathbf{\Sigma}_{\und{M}} = \prod_{i=1}^n\mathbf{\Sigma}_{M_i}$
and $\FFF(\und{M},\und{\Sigma})$ as a colimit over the product category 
$\RC_{(\und{M},\und{\Sigma})} = \prod_{i=1}^n\RC_{(M_i,\Sigma_i)}$.
This implies that the components $\iota_{\und{\Sigma}}$ and
$\iota_{(\und{U}, \und{S})}$ in \eqref{eqn:bbAFFF3} are indeed labeled by the product category.
\sk

Multifunctoriality of the assignment \eqref{eqn:bbAFFF} is straightforward 
to check. The key observation is that, for every two composable operations
$\und{\und{g}}: \und{\und{L}}\to \und{M}$ and $\und{f}: \und{M} \to N$
in $\P_{\Loc_m^{\perp}}$, one obtains directly from \eqref{eqn:bordismcomposition} 
the composition identity
\begin{flalign}
\begin{gathered}
\xymatrix@C=3em{
\ar[drr]_-{[N,\und{f}\,\und{\und{g}},\id_{N}]~~~~} 
(\und{\und{L}},\und{\und{\Sigma_L}}) \ar[rr]^-{[\und{M},\und{\und{g}},\id_{\und{M}}]}
~&&~ (\und{M},\und{\Sigma_M}) \ar[d]^-{[N,\und{f},\id_N]}\\
~&&~ (N,\Sigma_N)
}
\end{gathered}
\end{flalign}
in $\tau(\LBop_m)$ for the associated equivalence classes of bordisms as in \eqref{eqn:bbAFFF3}.
\sk

For every $\FQFT_m^\add$-morphism $\zeta: \FFF\Rightarrow \GGG: \tau(\LBop_m)\to \Alg_{\mathsf{uAs}}(\TT)$,
i.e.\ a multinatural transformation with components $\zeta_{(M,\Sigma)} :\FFF(M,\Sigma)\to \GGG(M,\Sigma)$,
for all $(M,\Sigma)\in\tau(\LBop_m)$, we have to define an $\AQFT_m$-morphism
$\bbA_{\zeta} : \bbA_{\FFF}\Rightarrow\bbA_{\GGG} : \P_{\Loc_m^\perp} \to\Alg_{\mathsf{uAs}}(\TT)$,
i.e.\ a multinatural transformation with components $(\bbA_{\zeta})_M : \bbA_{\FFF}(M)\to \bbA_{\GGG}(M)$,
for all $M\in\P_{\Loc_m^\perp}$. Restricting $\zeta$ to a natural transformation
$\zeta\vert: \FFF\vert\Rightarrow\GGG\vert : \mathbf{\Sigma}_M\to \Alg_{\mathsf{uAs}}(\TT)$, we define
\begin{flalign}\label{eqn:bbAzeta}
\begin{gathered}
\xymatrix{
\bbA_\FFF(M) \ar@{-->}[rr]^-{(\bbA_\zeta)_M}~&&~\bbA_\GGG(M)\\
\ar@{=}[u]\colim(\FFF\vert)\ar[rr]_-{\colim(\zeta\vert)}~&&~ \colim(\GGG\vert)\ar@{=}[u]
}
\end{gathered}\quad,
\end{flalign}
for all $M\in\P_{\Loc_m^\perp}$. Multinaturality of these components,
for all $n$-ary operations $\und{f} : \und{M}\to N$
in $\P_{\Loc_m^\perp}$, is easily verified by combining \eqref{eqn:bbAFFF3} and \eqref{eqn:bbAzeta}.
\sk

Functoriality as in \eqref{eqn:bbAfunctor} of the above constructions is evident.
\end{constr}

\begin{rem}\label{rem:bbAsimplification}
We would like to emphasize that, for a general $n$-ary operation 
$\und{f}:\und{M}\to N$ in $\P_{\Loc_m^\perp}$,
the middle horizontal dashed arrow in \eqref{eqn:bbAFFF3} does \textit{not}
admit a description in terms of an $n$-ary 
bordism $[N,\und{f},\id_N]:(\und{M},\und{\Sigma})\to (N,\Sigma^\prime)$
in $\tau(\LBop_m)$ since there might not exist a Cauchy surface $\Sigma^\prime\subset N$
such that the relevant condition in \eqref{eqn:bordismsurfacesI} and \eqref{eqn:bordismsurfacesII}
is satisfied. However, there exist special cases in which such 
Cauchy surfaces $\Sigma^\prime\subset N$ can be found.
This includes the case where the image of $\und{f}:\und{M}\to N$ is
relatively compact, which we have leveraged in our Construction \ref{constr:FQFTtoAQFT},
and also the case where $f:M\to N$ is a Cauchy morphism. In those situations,
the bottom triangle in \eqref{eqn:bbAFFF3}, which uses additivity, is not required,
and diagram~\eqref{eqn:bbAFFF3} simplifies to
\begin{flalign}\label{eqn:bbAsimplification}
\begin{gathered}
\xymatrix@C=3em{
\bbA_{\FFF}(\und{M}) \ar@{-->}[rr]^-{\bbA_{\FFF}(\und{f})}~&&~ \bbA_{\FFF}(N)\\
\ar[u]^-{\iota_{\und{\Sigma}}} \FFF(\und{M},\und{\Sigma})\ar[rr]_-{\FFF([N,\und{f},\id_N])}~&&~ 
\FFF(N,\Sigma^\prime)\ar[u]_-{\iota_{\Sigma^\prime}}
}
\end{gathered}\quad,
\end{flalign}
where $\Sigma^\prime\subset N$ is any choice of Cauchy surface such that 
the relevant condition in \eqref{eqn:bordismsurfacesI} and \eqref{eqn:bordismsurfacesII} is satisfied.
In particular, this implies that our general Construction \ref{constr:FQFTtoAQFT}
specializes to our previous treatment of Cauchy morphisms in \cite[Theorem 4.11]{FFTvAQFT},
see also \eqref{eqn:SigmaMcolimit2} above.
\end{rem}

\begin{lem}\label{lem:bbArestriction}
The functor \eqref{eqn:bbAfunctor} factorizes through the full subcategory
$\AQFT_m^{\add}\subseteq \AQFT_m$ of additive AQFTs from Definition \ref{def:AQFTadditivity}.
Furthermore, it restricts to a functor
\begin{flalign}
\bbA\,:\, \FQFT_m^{W,\add}~\longrightarrow~\AQFT_m^{W,\add}
\end{flalign}
between the categories of additive theories satisfying the time-slice axiom
from Definitions~\ref{def:AQFTadditivity} and~\ref{def:FFTadditivity}, respectively.
\end{lem}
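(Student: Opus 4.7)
The plan splits naturally into two parts: time-slice preservation and additivity. For the time-slice axiom, fix a Cauchy morphism $f:M\to N$ in $\P_{\Loc_m^\perp}$. Since $f$ is Cauchy, the image $f(\Sigma)$ of any Cauchy surface $\Sigma\subset M$ is a Cauchy surface of $N$, so Remark \ref{rem:bbAsimplification} applies and $\bbA_{\FFF}(f)$ is induced on colimits by the natural transformation $\FFF(M,\Sigma)\to \FFF(N,f(\Sigma))$ over $\Sigma\in\mathbf{\Sigma}_M$, whose components are given by the Cauchy bordism $[N,f,\id_N]$ and are therefore isomorphisms by the time-slice axiom for $\FFF$. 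It then suffices to verify that the functor $f_*:\mathbf{\Sigma}_M\to\mathbf{\Sigma}_N$, $\Sigma\mapsto f(\Sigma)$, is cofinal: given $\Sigma'\in\mathbf{\Sigma}_N$, Lemma \ref{lem:surfaces} combined with the fact that $f(M)$ contains a Cauchy surface of $N$ produces some $\Sigma\in\mathbf{\Sigma}_M$ with $f(\Sigma)\subset J^-_N(\Sigma')$, and connectedness of the relevant slice categories follows from filteredness of $\mathbf{\Sigma}_N$. The induced map on colimits is then an isomorphism, establishing the time-slice axiom for $\bbA_{\FFF}$.

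For additivity, fix $M\in\Loc_m$. Combining the definition of $\bbA_{\FFF}(M)$ with additivity of $\FFF$ at each Cauchy surface yields
\begin{equation*}
\bbA_{\FFF}(M)\,\cong\,\colim_{\Sigma\in\mathbf{\Sigma}_M}\colim_{(U,S)\in\RC_{(M,\Sigma)}}\FFF(U,S)\,\cong\,\colim_{\mathcal{T}}\FFF(U,S)\quad,
\end{equation*}
where $\mathcal{T}$ denotes the Grothendieck construction of the functor $\mathbf{\Sigma}_M\to\Cat$, $\Sigma\mapsto\RC_{(M,\Sigma)}$; this assignment is well-defined because $\Sigma\to\Sigma'$ in $\mathbf{\Sigma}_M$ implies $I^-_M(\Sigma)\subseteq I^-_M(\Sigma')$, and $\mathrm{cl}_M(U)$ being compact ensures that $(U,S)\in\RC_{(M,\Sigma')}$ whenever $(U,S)\in\RC_{(M,\Sigma)}$. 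Analogously, letting $\mathcal{Y}$ denote the category of pairs $(U,S)$ with $U\in\RC_M$ and $S\in\mathbf{\Sigma}_U$, with morphisms modeled on Definition \ref{def:RCMSigma}, a standard cofinality argument identifies
\begin{equation*}
\colim_{U\in\RC_M}\bbA_{\FFF}(U)\,=\,\colim_{U\in\RC_M}\colim_{S\in\mathbf{\Sigma}_U}\FFF(U,S)\,\cong\,\colim_{\mathcal{Y}}\FFF(U,S)\quad.
\end{equation*}
The canonical additivity comparison map for $\bbA_{\FFF}$ is then induced by the forgetful functor $\pi:\mathcal{T}\to\mathcal{Y}$, $(\Sigma,U,S)\mapsto(U,S)$, so the claim reduces to showing that $\pi$ is final.

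The main obstacle is the finality of $\pi$, which is of a Lorentzian-geometric nature. Non-emptiness of the slice over $(U,S)\in\mathcal{Y}$: relative compactness of $U$ in $M$ implies $\mathrm{cl}_M(S)$ is compact and achronal, and by \cite[Proposition 3.6]{BernalSanchez2} it extends to a Cauchy surface of $M$; Lemma \ref{lem:surfaces} then supplies a sufficiently late $\Sigma\in\mathbf{\Sigma}_M$ with $U\subseteq I^-_M(\Sigma)$, producing a preimage $(\Sigma,U,S)\in\mathcal{T}$ above $(U,S)$. For connectedness, two lifts $\Sigma_1,\Sigma_2\in\mathbf{\Sigma}_M$ are dominated via filteredness of $\mathbf{\Sigma}_M$ by a common later $\Sigma_3$, which can be further pushed to satisfy $U\subseteq I^-_M(\Sigma_3)$, yielding the required zigzag; the general connectedness statement for slices over arbitrary morphisms in $\mathcal{Y}$ reduces by filteredness of $\mathcal{Y}$ to the same geometric lifting argument. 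Once finality of $\pi$ is established, the identifications above combine to give additivity of $\bbA_{\FFF}$.
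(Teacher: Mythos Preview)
Your time-slice argument contains a genuine error: the functor $f_*:\mathbf{\Sigma}_M\to\mathbf{\Sigma}_N$ is in general \emph{not} final. For a counterexample, take $N$ to be Minkowski space, $M=f(M)$ the time-bounded slab $(-1,1)\times\bbR^{m-1}\subset N$, and $\Sigma'=\{5\}\times\bbR^{m-1}$; then no $f(\Sigma)$ lies to the future of $\Sigma'$, so the comma category $\Sigma'/f_*$ is empty. (Your stated verification, producing $\Sigma$ with $f(\Sigma)\subset J^-_N(\Sigma')$, checks the wrong comma category and also fails, for the time-reversed reason.) Fortunately, finality is unnecessary here: since $\FFF$ satisfies time-slice, every transition map in the filtered diagrams $\FFF\vert:\mathbf{\Sigma}_M\to\Alg_{\mathsf{uAs}}(\TT)$ and $\FFF\vert:\mathbf{\Sigma}_N\to\Alg_{\mathsf{uAs}}(\TT)$ is an isomorphism, so every colimit leg $\iota_\Sigma$ is an isomorphism. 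In the commuting square of Remark~\ref{rem:bbAsimplification} three of the four arrows are then isomorphisms, whence so is $\bbA_\FFF(f)$. This is the paper's argument.

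Your additivity argument is essentially the paper's: both reduce to finality of the forgetful functor from $\mathcal{T}=\int_{\mathbf{\Sigma}_M}\RC$ to $\mathcal{Y}$ (the paper's $\QQ_M$), and your sketch of that finality is correct. One step you gloss over is the identification $\colim_{\RC_M}\bbA_\FFF\cong\colim_{\mathcal{Y}}\FFF$: writing the left side as $\colim_{U}\colim_{S\in\mathbf{\Sigma}_U}\FFF(U,S)$ does not directly yield a Grothendieck construction, since $U\mapsto\mathbf{\Sigma}_U$ is not functorial on $\RC_M$ (a Cauchy surface of $U$ need not be one of a larger $U'$). The paper resolves this by first reexpressing $\bbA_\FFF(V)\cong\colim_{\QQ_V}\FFF$ for each $V\in\RC_M$ (using the same finality, with $V$ in place of $M$), after which $V\mapsto\QQ_V$ \emph{is} a functor and a second, easy finality $\int_{\RC_M}\QQ\to\QQ_M$ completes the proof.
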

\begin{proof}
To prove the initial claim, we first observe that, by
combining \eqref{eqn:bbAFFF2} and additivity of $\FFF$ (see Definition \ref{def:FFTadditivity}),
one can write
\begin{flalign}
\bbA_\FFF(M)\,\cong\, \colim_{\Sigma\in\mathbf{\Sigma}_M}^{}\colim_{(U,S)\in\RC_{(M,\Sigma)}}^{}\FFF(U,S)
\end{flalign}
as an iterated colimit. This iterated colimit can be identified canonically with a single colimit
\begin{flalign}\label{eqn:colimitGrothendieck}
\bbA_\FFF(M)\,\cong\, \colim\bigg(\FFF\vert: \int_{\mathbf{\Sigma}_M}\RC \to \Alg_{\mathsf{uAs}}(\TT)\bigg) 
\end{flalign}
over the Grothendieck construction
$\int_{\mathbf{\Sigma}_M}\!\RC$ of the functor $\RC : \mathbf{\Sigma}_M\to \Cat$
which assigns to $\Sigma\in \mathbf{\Sigma}_M$ the category $\RC_{(M,\Sigma)}$
from Definition \ref{def:RCMSigma} and to a $\mathbf{\Sigma}_M$-morphism
$\Sigma\to \Sigma^\prime$ the obvious inclusion functor $\RC_{(M,\Sigma)}\to \RC_{(M,\Sigma^\prime)}$.
Explicitly, the relevant Grothendieck construction is given by the category
\begin{flalign}\label{eqn:Grothendieck}
\int_{\mathbf{\Sigma}_M}\!\RC\,=\,\begin{cases}
\text{$\mathsf{Obj}$:} & ~(\Sigma,(U,S)) \text{ with }\Sigma\in\mathbf{\Sigma}_M\text{ and }(U,S)\in\RC_{(M,\Sigma)}
\\[4pt]
\text{$\mathsf{Mor}$:} & ~\exists! : (\Sigma,(U,S))\to (\Sigma^\prime,(U^\prime,S^\prime)) \text{ iff }
\Sigma\to \Sigma^\prime \text{ in }\mathbf{\Sigma}_M \\
&~ \text{ and }(U,S)\to (U^\prime,S^\prime) \text{ in }\RC_{(M,\Sigma^\prime)}
\end{cases}
\end{flalign}
and the functor $\FFF\vert : \int_{\mathbf{\Sigma}_M}\!\RC \to\Alg_{\mathsf{uAs}}(\TT)$
acts on objects as $(\Sigma,(U,S))\mapsto \FFF(U,S)$ and on morphisms as 
$\big((\Sigma,(U,S))\to (\Sigma^\prime,(U^\prime,S^\prime))\big) \mapsto \big(
\FFF([U^\prime,\iota_U^{U^\prime},\id_{U^\prime}]):\FFF(U,S)\to \FFF(U^\prime,S^\prime)\big)$.
Since the latter functor is insensitive to the Cauchy surfaces $\Sigma\subset M$,
we can work out a simplified model for the colimit in \eqref{eqn:colimitGrothendieck}.
For this we introduce the category
\begin{flalign}
\QQ_M\,\coloneqq\, \begin{cases}
\text{$\mathsf{Obj}$:} & ~(U,S) \text{ with }U\in\RC_M\text{ and }S\in\mathbf{\Sigma}_U
\\[4pt]
\text{$\mathsf{Mor}$:} & ~\exists! : (U,S)\to (U^\prime,S^\prime) \text{ iff }
U\to U^\prime \text{ in }\RC_M \\
&~ \text{ and } S \subset J^-_{U^\prime}(S^\prime) \text{ for $U\to U^\prime$ Cauchy or } 
\mathrm{cl}_{U^\prime}(S)\subset I^-_{U^\prime}(S^\prime) \text{ else }
\end{cases}\quad,
\end{flalign}
which by design receives the forgetful functor $\int_{\mathbf{\Sigma}_M}\!\RC\to \QQ_M\,,~(\Sigma,(U,S))\mapsto (U,S)$.
One easily shows that this forgetful functor is final (see Lemma~\ref{lem:finality})
and that $\FFF\vert : \int_{\mathbf{\Sigma}_M}\!\RC \to\Alg_{\mathsf{uAs}}(\TT)$ 
factorizes through $\int_{\mathbf{\Sigma}_M}\!\RC\to \QQ_M$.
Hence, we obtain yet another isomorphic description
\begin{flalign}
\bbA_\FFF(M)\,\cong\, \colim\Big(\FFF\vert: \QQ_M \to \Alg_{\mathsf{uAs}}(\TT)\Big) 
\end{flalign}
of the algebra $\bbA_\FFF(M)$ in  \eqref{eqn:bbAFFF2}.
\sk

The advantage of this isomorphic description is that 
there exists a functor $\QQ : \RC_M \to \Cat$ which assigns
to $V\in \RC_M$ the category $\QQ_V$ and to an $\RC_M$-morphism 
$V\to V^\prime$ the obvious inclusion functor
$\QQ_V\to \QQ_{V^\prime}$. Denoting by $\int_{\RC_M}\QQ$ the associated
Grothendieck construction, we can identify the colimit in the
additivity condition (see Definition \ref{def:AQFTadditivity})
for $\bbA_{\FFF}$ with
\begin{flalign}
\colim\Big(\bbA_{\FFF}\vert: \RC_M\to \Alg_{\mathsf{uAs}}(\TT)\Big)\,\cong\,
\colim\bigg(\bbA_{\FFF}\vert: \int_{\RC_M}\QQ\to \Alg_{\mathsf{uAs}}(\TT)\bigg)\quad.
\end{flalign}
Additivity then follows from the fact that the functor
$\int_{\RC_M}\QQ\to \QQ_M\,,~(V,(U,S))\mapsto (U,S)$ is final,
for all $M\in\P_{\Loc_m^\perp}$.
\sk

To prove the second claim, we have to show that,
for any $\FFF\in\FQFT_m^{W,\add}$, the associated multifunctor 
$\bbA_\FFF : \P_{\Loc_m^\perp} \to \Alg_{\mathsf{uAs}}(\TT)$ from
Construction \ref{constr:FQFTtoAQFT} sends every Cauchy morphism
$f:M\to N$ in $\P_{\Loc_m^\perp}$ to an isomorphism 
$\bbA_\FFF(f) : \bbA_\FFF(M)\to \bbA_\FFF(N)$ in 
$\Alg_{\mathsf{uAs}}(\TT)$. Using the observation in Remark 
\ref{rem:bbAsimplification}, this follows immediately from
the fact that all solid arrows in \eqref{eqn:SigmaMcolimit2}
are isomorphisms as a consequence of the time-slice axiom
for $\FFF\in \FQFT_{m}^{W,\add}$, and hence 
$\bbA_\FFF(f) : \bbA_\FFF(M)\to \bbA_\FFF(N)$
is an isomorphism too.
\end{proof}

The main result of this section is the following equivalence theorem.
\begin{theo}\label{theo:equivalence}
For every spacetime dimension $m\in\bbN$,
the two functors $\bbF:\AQFT_m^{W,\add}\to \FQFT_m^{W,\add}$ and $\bbA:
\FQFT_m^{W,\add}\to\AQFT_m^{W,\add}$ from Lemmas \ref{lem:bbFrestriction} and \ref{lem:bbArestriction}
are quasi-inverse to each other. Hence, they exhibit an equivalence
\begin{flalign}
\AQFT_m^{W,\add} \, \simeq \,\FQFT_m^{W,\add}
\end{flalign}
between the category $\AQFT_m^{W,\add}$ of additive AQFTs satisfying
the time-slice axiom (see Definitions \ref{def:AQFT} and \ref{def:AQFTadditivity})
and the category $\FQFT_m^{W,\add}$ of additive FQFTs satisfying the time-slice axiom
(see Definitions \ref{def:FFT} and \ref{def:FFTadditivity}).
\end{theo}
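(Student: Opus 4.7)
The plan is to construct natural isomorphisms $\epsilon : \bbA \circ \bbF \Rightarrow \id_{\AQFT_m^{W,\add}}$ and $\eta : \id_{\FQFT_m^{W,\add}} \Rightarrow \bbF \circ \bbA$ whose object-level components are essentially tautological, and then to check compatibility with the multifunctorial structures, which is where the real work lies. For $\AAA \in \AQFT_m^{W,\add}$ and $M \in \P_{\Loc_m^\perp}$, unwinding gives
\[
(\bbA \circ \bbF)_\AAA(M) \,=\, \colim_{\Sigma \in \mathbf{\Sigma}_M} \bbF_\AAA(M,\Sigma) \,=\, \colim_{\Sigma \in \mathbf{\Sigma}_M} \AAA(M),
\]
a colimit of a constant diagram on the non-empty filtered (hence connected) category $\mathbf{\Sigma}_M$; take $\epsilon_{\AAA,M}$ to be the canonical isomorphism onto $\AAA(M)$ characterized by $\epsilon_{\AAA,M} \circ \iota_\Sigma = \id_{\AAA(M)}$ for every $\Sigma$. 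Dually, for $\FFF \in \FQFT_m^{W,\add}$ and $(M,\Sigma) \in \tau(\LBop_m)$, the time-slice axiom for $\FFF$ sends every transition morphism in $\FFF\vert : \mathbf{\Sigma}_M \to \Alg_{\mathsf{uAs}}(\TT)$ (each represented by $[M,\id_M,\id_M]$) to an isomorphism, and together with filtration of $\mathbf{\Sigma}_M$ this forces each injection $\iota_\Sigma : \FFF(M,\Sigma) \to \bbA_\FFF(M) = (\bbF \circ \bbA)_\FFF(M,\Sigma)$ to be an isomorphism; set $\eta_{\FFF,(M,\Sigma)} := \iota_\Sigma$. Naturality in $\AAA$ and $\FFF$ is immediate from \eqref{eqn:bbAzeta} and \eqref{eqn:bbFzeta}.

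The counit check is the easier half. Given an $n$-ary operation $\und{f}: \und{M} \to N$ in $\P_{\Loc_m^\perp}$ and representative data $(\und{\Sigma},(\und{U},\und{S}),\Sigma')$ as in \eqref{eqn:bbAFFF3}, evaluating that defining diagram with $\FFF = \bbF_\AAA$ through \eqref{eqn:bbFAAA3} yields $\bbA_{\bbF_\AAA}(\und{f}) \circ \iota_{\und{\Sigma}} \circ \AAA(\iota_{\und{U}}^{\und{M}}) = \iota_{\Sigma'} \circ \AAA(\und{f}|_{\und{U}})$. Post-composing with $\epsilon_N$ kills $\iota_{\Sigma'}$ and leaves $\AAA(\und{f}|_{\und{U}}) = \AAA(\und{f}) \circ \AAA(\iota_{\und{U}}^{\und{M}}) = \AAA(\und{f}) \circ \epsilon_{\und{M}} \circ \iota_{\und{\Sigma}} \circ \AAA(\iota_{\und{U}}^{\und{M}})$, which is exactly the naturality square on each additivity summand; since those summands jointly cover the colimit $\bbA_{\bbF_\AAA}(\und{M})$, the square commutes and $\epsilon_\AAA$ is an AQFT morphism.

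The unit check is the main obstacle. One must show that $\eta_\FFF$ intertwines the action of an arbitrary $n$-ary operation $[N,\und{\iota_0},\iota_1] : (\und{M_0},\und{\Sigma_0}) \to (M_1,\Sigma_1)$ in $\tau(\LBop_m)$, where $\und{\iota_0}$ is not required to be Cauchy. By \eqref{eqn:bbFAAA3}, $\bbF_{\bbA_\FFF}([N,\und{\iota_0},\iota_1])$ decomposes as $\bbA_\FFF(V_1 \hookrightarrow M_1) \circ \bbA_\FFF(\iota_1)^{-1} \circ \bbA_\FFF(\und{\iota_0}) \circ \bbA_\FFF(\und{V_0} \hookrightarrow \und{M_0})^{-1}$, in which every factor except $\bbA_\FFF(\und{\iota_0})$ comes from a Cauchy morphism and is governed directly by Remark \ref{rem:bbAsimplification} together with \eqref{eqn:SigmaMcolimit2}. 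My plan is to make the coherent choice $\Sigma' := \iota_1(\Sigma_1) \subset N$ throughout the additivity definition \eqref{eqn:bbAFFF3} of $\bbA_\FFF(\und{\iota_0})$: this is a Cauchy surface of $N$ since $\iota_1$ is Cauchy, and the required past-containment conditions follow from \eqref{eqn:bordismsurfacesI}--\eqref{eqn:bordismsurfacesII} applied to the bordism $[N,\und{\iota_0},\iota_1]$ itself. With this choice in hand, the various bordism pieces become operadically composable in $\tau(\LBop_m)$ via \eqref{eqn:bordismcomposition}, and the equivalence relation on zig-zags, in particular the fact that shrinking $\und{V_0}$ to a relatively compact $\und{U}$ does not change the bordism class, identifies the assembled composite with the representative $[N,\und{\iota_0},\iota_1]$. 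Functoriality of $\FFF$ on $\tau(\LBop_m)$ then produces the equality $\bbF_{\bbA_\FFF}([N,\und{\iota_0},\iota_1]) \circ \iota_{\und{\Sigma_0}} = \iota_{\Sigma_1} \circ \FFF([N,\und{\iota_0},\iota_1])$ after restriction to each $(\und{U},\und{S})$-level, and additivity of $\FFF$ propagates this identity to the defining colimits, completing the naturality check and the equivalence.
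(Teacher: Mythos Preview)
Your proposal is correct and follows essentially the same strategy as the paper: the counit is the identification of the constant filtered colimit $\colim_{\Sigma\in\mathbf{\Sigma}_M}\AAA(M)$ with $\AAA(M)$, the unit is the colimit inclusion $\iota_\Sigma:\FFF(M,\Sigma)\to\bbA_\FFF(M)$ (an isomorphism by time-slice), and the decisive observation for the unit naturality square is the choice $\Sigma'=\iota_1(\Sigma_1)\subset N$, which is legitimate precisely because the bordism conditions \eqref{eqn:bordismsurfacesI}--\eqref{eqn:bordismsurfacesII} for $[N,\und{\iota_0},\iota_1]$ already place $\und{\iota_0}(\und{\Sigma_0})$ suitably in the past of $\iota_1(\Sigma_1)$.

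The one place where the paper is tidier: once you have chosen $\Sigma'=\iota_1(\Sigma_1)$, the bordism $[N,\und{\iota_0},\id_N]:(\und{V_0},\und{\Sigma_0})\to(N,\iota_1(\Sigma_1))$ is \emph{directly} valid in $\tau(\LBop_m)$, so Remark~\ref{rem:bbAsimplification} applies and the simplified square \eqref{eqn:bbAsimplification} computes $\bbA_\FFF(\und{\iota_0})\circ\iota_{\und{\Sigma_0}}$ without ever descending to the additivity layer $(\und{U},\und{S})\in\RC_{(\und{V_0},\und{\Sigma_0})}$. The paper then assembles the five-column diagram \eqref{eqn:bbFbbAmorphisms} and reads off that its bottom row is $\FFF([N,\und{\iota_0},\iota_1])$ by inverting the Cauchy isomorphisms (citing \cite[Lemma~3.4]{FFTvAQFT}). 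Your route through additivity is not wrong---precomposing both sides with the structure maps from each $\FFF(\und{U},\und{S})$ and invoking $\FFF(\und{M_0},\und{\Sigma_0})\cong\colim_{(\und{U},\und{S})}\FFF(\und{U},\und{S})$ does suffice---but it adds a layer that the key insight $\Sigma'=\iota_1(\Sigma_1)$ renders unnecessary, and your phrase ``shrinking $\und{V_0}$ to a relatively compact $\und{U}$'' is imprecise since $\und{U}\subseteq I^-_{\und{V_0}}(\und{\Sigma_0})$ does not contain $\und{\Sigma_0}$ and so is not literally a shrunken collar.
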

\begin{proof}
We consider first the composition $\bbA\circ \bbF : \AQFT_m^{W,\add}\to \AQFT_m^{W,\add}$.
Given any object $\AAA\in  \AQFT_m^{W,\add}$, the multifunctor
$(\bbA\circ \bbF)_{\AAA} : \P_{\Loc_m^\perp}\to \Alg_{\mathsf{uAs}}(\TT)$ 
can be described explicitly by using the definitions in 
Constructions \ref{constr:AQFTtoFQFT} and \ref{constr:FQFTtoAQFT}.
To an object $M\in\P_{\Loc_m^\perp}$, it assigns the algebra
\begin{subequations}
\begin{flalign}
(\bbA\circ \bbF)_{\AAA}(M) \,=\,
\colim\Big(\bbF_{\AAA}\vert : \mathbf{\Sigma}_M\to \Alg_{\mathsf{uAs}}(\TT)\Big)\,=\,
\AAA(M)\quad,
\end{flalign}
where in the last step we used that $\bbF_{\AAA}\vert$ is the constant functor
with value $\bbF_{\AAA}(M,\Sigma) = \AAA(M)$, for all $\Sigma\in \mathbf{\Sigma}_M$, 
hence the filtered colimit is given by this value. To an $n$-ary operation
$\und{f} : \und{M}\to N$ in $\P_{\Loc_m^\perp}$, this multifunctor assigns
\begin{flalign}
\begin{gathered}
\xymatrix@C=3em{
\ar@{=}[d](\bbA\circ \bbF)_{\AAA}(\und{M})\ar[rr]^-{(\bbA\circ \bbF)_{\AAA}(\und{f})}
~&&~(\bbA\circ \bbF)_{\AAA}(N)\ar@{=}[d]\\
\AAA(\und{M})\ar[rr]_-{\AAA(\und{f})}~&&~\AAA(N)
}
\end{gathered}\quad.
\end{flalign}
\end{subequations}
Hence, we find that $(\bbA\circ \bbF)_\AAA = \AAA$, for all objects $\AAA\in  \AQFT_m^{W,\add}$.
Given any morphism $\zeta: \AAA\Rightarrow\BBB$ in $\AQFT_m^{W,\add}$,
we compute the components of the multinatural transformation $(\bbA\circ \bbF)_\zeta$ 
by using Constructions \ref{constr:AQFTtoFQFT} and \ref{constr:FQFTtoAQFT} and find
\begin{flalign}
\big((\bbA\circ \bbF)_\zeta\big)_M \,=\, \colim\Big(\bbF_{\zeta}\vert :
\bbF_{\AAA}\vert\Rightarrow\bbF_{\BBB}\vert:\mathbf{\Sigma}_M\to \Alg_{\mathsf{uAs}}(\TT)\Big)
\,=\,\zeta_M\quad,
\end{flalign}
where in the last step we use that $(\bbF_{\zeta})_{(M,\Sigma)} = \zeta_M$ is constant
in $\Sigma\in \mathbf{\Sigma}_M$. This shows that 
$(\bbA\circ \bbF)_\zeta =\zeta$, for all morphisms $\zeta: \AAA\Rightarrow\BBB$ in $\AQFT_m^{W,\add}$.
Summing up, we obtain that the composition $\bbA\circ \bbF = \id$ is equal to the identity functor
on $\AQFT_m^{W,\add}$.
\sk

Consider now the composition $\bbF\circ\bbA : \FQFT_m^{W,\add}\to\FQFT_m^{W,\add}$. 
Given any object $\FFF\in \FQFT_m^{W,\add}$, the multifunctor
$(\bbF\circ\bbA)_\FFF : \tau(\LBop_m)\to\Alg_{\mathsf{uAs}}(\TT)$ assigns to
an object $(M,\Sigma)\in\tau(\LBop_m)$ the colimit
\begin{subequations}
\begin{flalign}
(\bbF\circ\bbA)_\FFF(M,\Sigma)\,=\, \bbA_\FFF(M)\,=\, 
\colim\Big(\FFF\vert:\mathbf{\Sigma}_M\to\Alg_{\mathsf{uAs}}(\TT)\Big)\quad.
\end{flalign}
As a consequence of the time-slice axiom for $\FFF$, 
the canonical inclusion morphism $\iota_\Sigma : \FFF(M,\Sigma)\to (\bbF\circ\bbA)_\FFF(M,\Sigma)$
associated with the given Cauchy surface $\Sigma\subset M$ is an isomorphism.
Hence, one can define morphisms out of these colimits by specifying only a single component.
Using the observation from Remark \ref{rem:bbAsimplification}, in particular \eqref{eqn:bbAsimplification},
we then find that the action of the multifunctor $(\bbF\circ\bbA)_\FFF$ on an 
$n$-ary operation $[N,\und{\iota_0},\iota_1]:(\und{M_0},\und{\Sigma_0})\to (M_1,\Sigma_1)$
in $\tau(\LBop_m)$ is given by the $\Alg_{\mathsf{uAs}}(\TT)$-morphism
\begin{flalign}\label{eqn:bbFbbAmorphisms}
\resizebox{.9\hsize}{!}{$
\begin{gathered}
\xymatrix@C=4em{
\ar@{=}[d](\bbF\circ\bbA)_\FFF(\und{M_0},\und{\Sigma_0}) \ar[rrrr]^-{(\bbF\circ\bbA)_\FFF([N,\und{\iota_0},\iota_1])}
&&&& (\bbF\circ\bbA)_\FFF(M_1,\Sigma_1)\ar@{=}[d]\\
\bbA_\FFF(\und{M_0}) & \ar[l]_-{\cong }\bbA_\FFF(\und{V_0})\ar[r]^-{\bbA_\FFF(\und{\iota_0})} & 
\bbA_\FFF(N) & \ar[l]_-{\bbA_\FFF(\iota_1)}^-{\cong}\bbA_\FFF(V_1) \ar[r]^-{\cong}& \bbA_\FFF(M_1)\\
\ar[u]_-{\cong}^-{\iota_{\und{\Sigma_0}}} \FFF(\und{M_0},\und{\Sigma_0}) & 
\ar[l]_-{\cong}^-{\FFF([\und{M_0},\iota_{\und{V_0}}^{\und{M_0}},\id_{\und{M_0}}])}
\ar[u]_-{\cong}^-{\iota_{\und{\Sigma_0}}} \FFF(\und{V_0},\und{\Sigma_0}) 
\ar[r]_-{\FFF([N,\und{\iota_0}\vert,\id_N])}& 
\ar[u]_-{\cong}^-{\iota_{\iota_1(\Sigma_1)}}\FFF(N,\iota_1(\Sigma_1)) & 
\ar[l]_-{\cong}^-{\FFF([N,\iota_1\vert,\id_N])}
\ar[u]_-{\cong}^-{\iota_{\Sigma_1}} \FFF(V_1,\Sigma_1) 
\ar[r]^\cong_-{\FFF([M_1,\iota_{V_1}^{M_1},\id_{M_1}])}& 
\ar[u]_-{\cong}^-{\iota_{\Sigma_1}}\FFF(M_1,\Sigma_1) 
}
\end{gathered}~~.
$}
\end{flalign}
\end{subequations}
The left-pointing isomorphisms in the bottom row can be inverted explicitly
by using \cite[Lemma 3.4]{FFTvAQFT}. This tells us the composite of the bottom row is the $\Alg_{\mathsf{uAs}}(\TT)$-morphism
$\FFF([N,\und{\iota_0},\iota_1]) : \FFF(\und{M_0},\und{\Sigma_0})\to \FFF(M_1,\Sigma_1)$
which is assigned by $\FFF$ to the given $n$-ary operation $[N,\und{\iota_0},\iota_1]$.
Note that the commutative diagrams in \eqref{eqn:bbFbbAmorphisms} verify that the component morphisms
$\iota_{\Sigma} : \FFF(M,\Sigma)\to (\bbF\circ\bbA)_\FFF(M,\Sigma)$ define
a multinatural isomorphism \smash{$\FFF\stackrel{\cong}{\Longrightarrow} (\bbF\circ\bbA)_\FFF$}, for all $\FFF
\in \FQFT_m^{W,\add}$. These multinatural isomorphisms are also natural with respect to morphisms 
$\zeta : \FFF\Rightarrow\GGG$ in $\FQFT_m^{W,\add}$, hence we obtain a natural 
isomorphism \smash{$\id \stackrel{\cong}{\Longrightarrow} \bbF\circ\bbA$}
between the identity functor on $\FQFT_m^{W,\add}$ and the composite $\bbF\circ\bbA$.
\end{proof}


\section{\label{sec:twistedFFT}On twisted globally hyperbolic Lorentzian functorial QFTs}
Let us recall that the main motivation behind our concept of globally hyperbolic Lorentzian FQFTs 
$\FFF : \tau(\LBop_m) \to \Alg_{\mathsf{uAs}}(\TT)$ assigning observable algebras, in contrast to state spaces, 
is rooted in the fact that, especially in Lorentzian geometric contexts, 
algebras are the primary objects of a QFT, while state spaces arise 
only as secondary objects when one studies the representation theory of these algebras.
It is worthwhile to note that a similar point of view has also been taken
in the broader FQFT literature, where a notion of FQFT assigning state spaces
can be introduced in terms of a twisted or relative FQFT with respect to another 
FQFT which assigns algebras (or their representation categories),
see e.g.\ \cite{StolzTeichner,FT} and also \cite[Section 2.3.4]{SatiSchreiber}. 
In this section we will briefly describe how this concept of twisted or relative FQFT adapts to our Lorentzian geometric
context and provide arguments why, in contrast to our algebra-valued FQFTs
from Definition \ref{def:FFT}, it will be difficult, if not impossible, 
to find physically relevant examples of such FQFTs assigning state spaces. 
\sk

Since observable algebras in our context are given by objects $A\in \Alg_{\mathsf{uAs}}(\TT)$, 
it is reasonable to describe state spaces in terms of representations of $A$, i.e.\ (say, left) $A$-modules
internal to the cocomplete closed symmetric monoidal category $\TT$. Such objects can be
accommodated naturally in our framework 
by extending the symmetric monoidal $1$-category $\Alg_{\mathsf{uAs}}(\TT)$ to the symmetric monoidal 
bicategory $\mathbf{Bimod}(\TT)$ whose objects are unital associative algebras $A$,
morphisms $A\to B$ are bimodules ${}_BS_{A}$ and $2$-morphisms
${}_BS_{A}\Rightarrow {}_BT_A$ are bimodule homomorphisms, with everything considered internally in $\TT$.
Indeed, a bimodule ${}_AS_I$ with $I\in \Alg_{\mathsf{uAs}}(\TT)$ the monoidal unit (which is also
an initial object) is precisely the same datum as a left $A$-module.
Using the canonical symmetric monoidal pseudo-functor
\begin{flalign}\label{eqn:AlgtoBimod}
\Alg_{\mathsf{uAs}}(\TT) ~\longrightarrow~\mathbf{Bimod}(\TT)
\end{flalign}
which acts as the identity on objects and sends an $\Alg_{\mathsf{uAs}}(\TT)$-morphism
$f:A\to B$ to its associated bimodule ${}_BB_{A}$, with right $A$-action defined through
$f:A\to B$, we can postcompose any globally hyperbolic Lorentzian FQFT 
$\FFF \in\FQFT_m$ with \eqref{eqn:AlgtoBimod} to obtain a 
FQFT $\FFF : \tau(\LBop_m) \to \mathbf{Bimod}(\TT)$ taking values in the symmetric 
monoidal bicategory $\mathbf{Bimod}(\TT)$. 
\begin{defi}\label{def:twistedFFT}
Let $\FFF \in\FQFT_m$ be a globally hyperbolic Lorentzian FQFT in the sense of Definition \ref{def:FFT}. 
An \textit{$\FFF$-twisted globally hyperbolic Lorentzian FQFT} $\mathfrak{S}$ is a lax 
transformation\footnote{Here the term lax transformation refers to the 
context of $\Cat$-enriched operads in the sense of \cite[Appendix A]{2AQFT}. 
These are concretely defined by dropping the invertibility requirement of the natural isomorphism data
of a pseudo-transformation as in \cite[Definition A.4]{2AQFT}.}
\begin{equation}
\begin{tikzcd}[column sep=large]
\tau(\LBop_m)
  \arrow[bend left=40]{r}[name=U,label=above:$\mathfrak{I}$]{}
  \arrow[bend right=40]{r}[name=D,label=below:$\FFF$]{} &
\mathbf{Bimod}(\TT)
  \arrow[shorten <=4pt,shorten >=4pt,Rightarrow,to path={(U) -- node[label=left:$\mathfrak{S}$] {} (D)}]{}
\end{tikzcd}
\end{equation}
from the constant multifunctor $\mathfrak{I}$
assigning the initial algebra to the pseudo-multifunctor 
given by postcomposing $\FFF \in\FQFT_m$ with \eqref{eqn:AlgtoBimod}.
\end{defi}

\begin{rem}
It is a straightforward exercise in higher category theory to unpack this definition,
which leads to the following explicit description of
an $\FFF$-twisted globally hyperbolic Lorentzian FQFT $\mathfrak{S}$:
\begin{itemize}
\item To each object $(M,\Sigma)\in \tau(\LBop_m)$ is assigned a
left $\FFF(M,\Sigma)$-module $\mathfrak{S}(M,\Sigma)$. We will denote the left module
structure by $l_{(M,\Sigma)}: \FFF(M,\Sigma)\otimes \mathfrak{S}(M,\Sigma)\to \mathfrak{S}(M,\Sigma)$.

\item To each bordism $[N,\und{\iota_0},\iota_1] : (\und{M_0},\und{\Sigma_0})\to (M_1,\Sigma_1)$
in $\tau(\LBop_m)$ is assigned a left module homomorphism
\begin{flalign}\label{eqn:statebordism}
\mathfrak{S}([N,\und{\iota_0},\iota_1])\,:\, \mathfrak{S}(\und{M_0},\und{\Sigma_0})\,:=\,\bigotimes_{i=1}^n
\mathfrak{S}(M_{0_i},\Sigma_{0_i})~\longrightarrow~\mathfrak{S}(M_1,\Sigma_1)
\end{flalign}
relative to the $\Alg_{\mathsf{uAs}}(\TT)$-morphism $\FFF([N,\und{\iota_0},\iota_1]) : \FFF(\und{M_0},\und{\Sigma_0})\to 
\FFF(M_1,\Sigma_1)$, i.e.\ the diagram
\begin{flalign}
\begin{gathered}
\xymatrix@C=4em{
\ar[d]_-{\FFF([N,\und{\iota_0},\iota_1])\otimes \mathfrak{S}([N,\und{\iota_0},\iota_1])}
\FFF(\und{M_0},\und{\Sigma_0})\otimes \mathfrak{S}(\und{M_0},\und{\Sigma_0}) \ar[r]^-{l_{(\und{M_0},\und{\Sigma_0})}}
~&~\mathfrak{S}(\und{M_0},\und{\Sigma_0})\ar[d]^-{\mathfrak{S}([N,\und{\iota_0},\iota_1])}\\
\FFF(M_1,\Sigma_1)\otimes \mathfrak{S}(M_1,\Sigma_1)\ar[r]_-{l_{(M_1,\Sigma_1)}}~&~\mathfrak{S}(M_1,\Sigma_1)
}
\end{gathered}
\end{flalign}
in $\TT$ commutes. In particular, to each $0$-to-$1$ bordism $\varnothing\to (M,\Sigma)$
is assigned a left module homomorphism $I \to \mathfrak{S}(M,\Sigma)$
relative to the algebra unit $I\to \FFF(M,\Sigma)$, which selects an
element of the left $\FFF(M,\Sigma)$-module $\mathfrak{S}(M,\Sigma)$.
\end{itemize}
These data have to satisfy the unitality, associativity and permutation equivariance
conditions of an algebra over the operad $\tau(\LBop_m)$. Explicitly, 
for every identity bordism we have that $\mathfrak{S}([M,\id_M,\id_M]) = \id$, 
for every operadically composable family of bordisms the diagram
\begin{flalign}\label{eqn:statecomposition}
\begin{gathered}
\xymatrix@C=5em@R=2em{
~&~ \mathfrak{S}(\und{M_1},\und{\Sigma_1}) \ar[rd]^-{~~~\mathfrak{S}([N_1,\und{\iota_{10}},\iota_{11}])}~&~\\
\mathfrak{S}(\und{\und{M_0}},\und{\und{\Sigma_0}}) \ar[ru]^-{\mathfrak{S}([\und{N_0},\und{\und{\iota_{00}}},\und{\iota_{01}}])~~~}
\ar[rr]_-{\mathfrak{S}([N_1,\und{\iota_{10}},\iota_{11}]\,[\und{N_0},\und{\und{\iota_{00}}},\und{\iota_{01}}])}~&~ ~&~\mathfrak{S}(M_2,\Sigma_2)
}
\end{gathered}
\end{flalign}
commutes, and for every bordism and permutation the diagram
\begin{flalign}
\begin{gathered}
\xymatrix@C=2em@R=2em{
~&~ \mathfrak{S}(M_1,\Sigma_1)~&~\\
\mathfrak{S}(\und{M_0},\und{\Sigma_0}) \ar[rr]_-{\cong}
\ar[ru]^-{\mathfrak{S}([N,\und{\iota_0},\iota_1])~~~}~&~ ~&~\mathfrak{S}(\und{M_0}\sigma,\und{\Sigma_0}\sigma)
\ar[lu]_-{~~~\mathfrak{S}([N,\und{\iota_0}\sigma,\iota_1])}
}
\end{gathered}
\end{flalign}
commutes, where the unlabeled isomorphism is given by the symmetric braiding on $\TT$.
\sk

We would like to highlight that, by forgetting the left module structures, one obtains a multifunctor
$\mathfrak{S}^{\mathrm{forget}}: \tau(\LBop_m)\to \TT$, i.e.\ a globally hyperbolic Lorentzian FQFT 
with values in $\TT$. This implies that
an $\FFF$-twisted globally hyperbolic Lorentzian FQFT $\mathfrak{S}$ is a genuinely richer concept
than an ordinary $\TT$-valued globally hyperbolic Lorentzian FQFT since it encodes
through the left module structures additional information about how the observables described 
by $\FFF$ act on the state spaces described by $\mathfrak{S}$.
\end{rem}

We will now argue why it will be difficult, if not impossible, to obtain physically relevant examples
of $\FFF$-twisted globally hyperbolic Lorentzian FQFTs. For presenting our arguments,
it will be convenient to assume that $\FFF\in\FQFT_m^{W,\add}$ is additive and satisfies the time-slice axiom,
which, by our Equivalence Theorem \ref{theo:equivalence}, implies that $\FFF\cong \mathbb{F}_\AAA$
can be represented through Construction \ref{constr:AQFTtoFQFT} by an additive AQFT satisfying the time-slice axiom 
$\AAA\in \AQFT_m^{W,\add}$. From this perspective, the specification of an
$\FFF$-twisted globally hyperbolic Lorentzian FQFT $\mathfrak{S}$
requires in particular a choice of representation $\mathfrak{S}(M,\Sigma)$
of the AQFT observable algebra $\AAA(M)$ and a distinguished `vacuum' state $I\to \mathfrak{S}(M,\Sigma)$, 
for every spacetime $M\in\Loc_m$ and every Cauchy surface $\Sigma\subset M$. At the level of simple 
examples, such as the free Klein-Gordon quantum field, it is well understood that
there are no distinguished representations and vacuum states
on general (non-stationary) spacetimes $M\in\Loc_m$, see e.g.\ \cite[Chapter 4]{Wald}, hence the selection
of these data will necessarily come with some ad-hoc choices. 
\sk

However, this creates problems when defining the structure maps \eqref{eqn:statebordism} of an 
$\FFF$-twisted globally hyperbolic Lorentzian FQFT $\mathfrak{S}$. As an illustrative example,
let us consider the Cauchy bordism $[N,\iota_{V_0}^N,\iota_{V_1}^N] : (V_0,\Sigma_0)\to (V_1,\Sigma_1)$
which is associated with a choice of spacetime $N\in\Loc_m$, two causally convex open time-slabs $V_0,V_1\subseteq N$
with $V_0\subseteq I^-_{N}(V_1)$ in the past of $V_1$, and two Cauchy surfaces $\Sigma_0\subset V_0$ and $\Sigma_{1}\subset V_1$.
A choice of structure map 
\begin{flalign}\label{eqn:timeevolutionstates}
\mathfrak{S}([N,\iota_{V_0}^N,\iota_{V_1}^{N}])\,:\, \mathfrak{S}(V_0,\Sigma_0)~\longrightarrow~ \mathfrak{S}(V_1,\Sigma_1)
\end{flalign}
associated with this bordism is then precisely an implementation of the 
time-evolution of quantum observables $\AAA([N,\iota_{V_0}^N,\iota_{V_1}^{N}]) : 
\AAA(V_0) \to \AAA(V_1)$ at the level of state spaces.
For simple examples such as the free Klein-Gordon quantum field, 
it is well understood that such an implementation is in general not possible
as a consequence of the existence of multiple inequivalent representations
of the observable algebras, see e.g.\ \cite[Chapter 4.4]{Wald}. 
It is worthwhile to highlight that, even in those cases in which an implementation 
\eqref{eqn:timeevolutionstates} exists, it will, in general, fail to preserve the distinguished
`vacuum' states and thereby violate the property \eqref{eqn:statecomposition} of a twisted FQFT.
To explain the origin of this phenomenon, let us assume that \eqref{eqn:timeevolutionstates}
preserves the distinguished `vacuum' states for a fixed choice of Cauchy 
bordism $[N,\iota_{V_0}^N,\iota_{V_1}^N] : (V_0,\Sigma_0)\to (V_1,\Sigma_1)$.
Deforming the geometry of $N\in\Loc_m$ in the intermediate region between $V_0$ and $V_1$, one can construct
a new spacetime $\tilde{N}\in\Loc_m$ and a new Cauchy bordism 
$[\tilde{N},\iota_{V_0}^{\tilde{N}},\iota_{V_1}^{\tilde{N}}] : (V_0,\Sigma_0)\to (V_1,\Sigma_1)$
with the same source and target. The structure map
$\mathfrak{S}([\tilde{N},\iota_{V_0}^{\tilde{N}},\iota_{V_1}^{\tilde{N}}]): 
\mathfrak{S}(V_0,\Sigma_0)\to \mathfrak{S}(V_1,\Sigma_1)$ associated with this new bordism,
if it exists, does not, in general, preserve the distinguished `vacuum' states as a consequence
of the physical effect called `particle creation in curved spacetimes', see e.g.\ \cite{Wald}.
Indeed, in general the `vacuum' state $I\to \mathfrak{S}(V_0,\Sigma_0)$ 
does not evolve under $\mathfrak{S}([\tilde{N},\iota_{V_0}^{\tilde{N}},\iota_{V_1}^{\tilde{N}}])$
to the `vacuum' state $I\to \mathfrak{S}(V_1,\Sigma_1)$, but rather to 
a different state with non-trivial particle content.
\sk

We hope that these physical and example-based arguments plausibly 
substantiate why we believe that it will be difficult, if not impossible, to obtain physically relevant 
examples of $\FFF$-twisted globally hyperbolic Lorentzian FQFTs $\mathfrak{S}$.
It may be possible to prove a precise and example-independent statement
supporting this claim by adapting the no-go result \cite[Theorem 6.13]{FVstate} 
for the existence of natural states in AQFT to our context. This is an 
interesting open problem which we will however not investigate in the present paper.


\section*{Acknowledgments}
We would like to thank Marco Benini for useful discussions
and Alastair Grant-Stuart for sharing technical notes
which helped us to establish the results in Appendix \ref{app:Lorentzian}.
We also would like to thank the reviewers for their useful comments
which helped us to improve our paper and encouraged us to add Section \ref{sec:twistedFFT}.
J.M.\ is funded by an EPSRC PhD scholarship (2742043) 
of the School of Mathematical Sciences at the University of Nottingham.
A.S.\ was supported by the Royal Society (UK) through a Royal Society University 
Research Fellowship (URF\textbackslash R\textbackslash 211015)
and Enhancement Grants (RF\textbackslash ERE\textbackslash 210053 and 
RF\textbackslash ERE\textbackslash 231077).


\appendix

\section{\label{app:pseudooperads}Basic theory of pseudo-operads in $\Grpd$}
In this appendix we introduce and develop an operadic generalization
of the concept of pseudo-categories \cite{PseudoCats} internal to the $2$-category
$\Grpd$ of groupoids, functors and natural transformations.
As we will explain in the main text, such pseudo-operads provide a suitable framework
to describe globally hyperbolic Lorentzian bordisms with spatially local features
and multiplicative structures.
They generalize the globally hyperbolic Lorentzian bordism pseudo-categories from
\cite{FFTvAQFT} which are based on the approach of Stolz and Teichner \cite{StolzTeichner}.
\begin{defi}\label{def:pseudooperad}
A \textit{(colored symmetric) pseudo-operad $\scrO$ in $\Grpd$} consists of the following data:
\begin{itemize}
\item[(i)] A groupoid of objects $\scrO_0$.

\item[(ii)] A sequence of spans of groupoids and functors
\begin{flalign}
\begin{gathered}
\xymatrix@C=1em@R=1.5em{
~&~\ar[dl]_-{t^n}\scrO_1^n \ar[dr]^-{s^n}~&~\\
\scrO_0 ~&~ ~&~\scrO_0^{\times n}
}
\end{gathered}\quad,
\end{flalign}
for all non-negative integers $n\in\bbN_0$, describing the $n$-ary operations with their source $s^n$ and target $t^n$.

\item[(iii)] A family of operadic composition functors
\begin{flalign}
\cmp\,:\, \scrO_1^{n}\times_{\scrO_{0}^{\times n}} \scrO_1^{\und{k}} ~\longrightarrow~\scrO_1^{\Sigma\und{k}}\quad,
\end{flalign}
for all positive integers $n\in\bbN$ and all $n$-tuples of non-negative integers
$\und{k} = (k_1,\dots,k_n)\in \bbN_0^n$, where 
$\scrO_1^{\und{k}}\coloneqq \prod_{i=1}^n\scrO_1^{k_i}$ denotes the product
in $\Grpd$ and $\Sigma\und{k} \coloneqq \sum_{i=1}^n k_i\in\bbN_0$. The domain of
these functors is the (strict) fiber product
\begin{flalign}
\begin{gathered}
\xymatrix@C=0.5em@R=1.8em{
~&~\ar@{-->}[dl]_-{p_1} \scrO_1^{n}\times_{\scrO_{0}^{\times n}} \scrO_1^{\und{k}} \ar@{-->}[dr]^-{p_2}~&~\\
\ar[dr]_-{s^n}\scrO_1^{n} ~&~ ~&~ \ar[dl]^-{t^{\und{k}}}\scrO_1^{\und{k}}\\
~&~ \scrO_{0}^{\times n}~&~
}
\end{gathered}
\end{flalign}
in $\Grpd$, where $t^{\und{k}}\coloneqq \prod_{i=1}^n t^{k_i}$. The  operadic composition functors are required to
be maps of spans, i.e.\ the diagrams
\begin{flalign}
\begin{gathered}
\xymatrix@C=0.5em@R=1.8em{
~&~\ar[dl]_-{t^n\,p_1} \scrO_1^{n}\times_{\scrO_{0}^{\times n}} \scrO_1^{\und{k}} \ar[dd]^-{\cmp} 
\ar[dr]^-{ s^{\und{k}} \,p_2}~&~\\
\scrO_0 ~&~ ~&~ \scrO_0^{\times \Sigma\und{k}}\\
~&~ \ar[lu]^-{t^{\Sigma\und{k}}}\scrO_1^{\Sigma\und{k}}\ar[ru]_-{s^{\Sigma\und{k}}}~&~
}
\end{gathered}
\end{flalign}
in $\Grpd$ commute strictly.

\item[(iv)] An operadic unit functor 
\begin{flalign}
u \,:\, \scrO_0~\longrightarrow~\scrO_1^1
\end{flalign}
which is required to be a map of spans, i.e.\ the diagram
\begin{flalign}
\begin{gathered}
\xymatrix@C=1em@R=1.2em{
~&~\ar@{=}[dl] \scrO_0 \ar[dd]^-{u}  \ar@{=}[dr]~&~\\
\scrO_0 ~&~ ~&~ \scrO_0\\
~&~ \ar[lu]^-{t^{1}}\scrO_1^{1}\ar[ru]_-{s^{1}}~&~
}
\end{gathered}
\end{flalign}
in $\Grpd$ commutes strictly.

\item[(v)] A sequence of right actions $\scrO_1^n : \mathbf{B}S_n^\op\to \Grpd$ of the permutation groups
$S_n$ on the groupoids of $n$-ary operations from item (ii), for all $n\in\bbN_0$. These permutation
actions must be compatible with the source and target functors in the sense that the diagrams
\begin{flalign}
\begin{gathered}
\xymatrix{
\ar@{=}[d]\scrO_0 ~&~\ar[l]_-{t^n}\ar[d]_-{\scrO_1^n(\sigma)} \scrO_1^n \ar[r]^-{s^n}~&~\scrO_0^{\times n} \ar[d]^-{\sigma^\ast}\\
\scrO_0 ~&~\ar[l]^-{t^n} \scrO_1^n \ar[r]_-{s^n}~&~\scrO_0^{\times n}
}
\end{gathered}
\end{flalign}
in $\Grpd$ commute strictly, for all $\sigma\in S_n$, where $\sigma^\ast$ denotes the right action
which is given by permuting the factors of an $n$-fold product. Furthermore,
the operadic composition functors are required to be equivariant with respect to the permutation actions,
i.e.\ the diagrams
\begin{subequations}
\begin{flalign}
\begin{gathered}
\xymatrix@C=3em{
\ar[d]_-{\scrO_1^n(\sigma)\times\sigma^\ast}\scrO_1^n \times_{\scrO_0^{\times n}} \scrO_1^{\und{k}} \ar[r]^-{\cmp}~&~\scrO_1^{\Sigma\und{k}}\ar[d]^-{\scrO_1^{\Sigma\und{k}}(\sigma\langle k_1,\dots,k_n\rangle)} ~&~\\
\scrO_1^n \times_{\scrO_0^{\times n}} \scrO_1^{\und{k}\sigma} \ar[r]_-{\cmp}~&~\scrO_1^{\Sigma\und{k}}
}
\end{gathered}
\end{flalign}
\begin{flalign}
\begin{gathered}
\xymatrix@C=3em{
\ar[d]_-{\id\times\scrO_1^{\und{k}}(\und{\sigma})}\scrO_1^n \times_{\scrO_0^{\times n}} \scrO_1^{\und{k}} \ar[r]^-{\cmp}~&~\scrO_1^{\Sigma\und{k}}\ar[d]^-{\scrO_1^{\Sigma\und{k}}(\sigma_1\oplus\cdots\oplus \sigma_n)} ~&~\\
\scrO_1^n \times_{\scrO_0^{\times n}} \scrO_1^{\und{k}} \ar[r]_-{\cmp}~&~\scrO_1^{\Sigma\und{k}}
}
\end{gathered}
\end{flalign}
\end{subequations}
in $\Grpd$ commute strictly, for all $\sigma\in S_n$ and 
$\und{\sigma} = (\sigma_1,\dots,\sigma_n)\in S_{\und{k}}$ 
with $\sigma_i\in S_{k_i}$, for $i=1,\dots,n$,
where $\sigma\langle k_1,\dots,k_n\rangle\in S_{\Sigma\und{k}}$ denotes the induced block permutation
and $\sigma_1\oplus\cdots\oplus\sigma_n\in S_{\Sigma\und{k}}$ denotes the induced sum permutation.

\item[(vi)] Associator natural isomorphisms filling the diagrams
\begin{flalign}
\begin{gathered}
\xymatrix@C=1.5em@R=1em{
\ar[dd]_-{\id\times\und{\cmp}}
\scrO_1^n \times_{\scrO_0^{\times n}} \scrO_1^{\und{k}} \times_{\scrO_0^{\times \Sigma\und{k}}} \scrO_1^{\und{\und{l}}} \ar[rr]^-{\cmp\times\id}~&~ ~&~
\scrO_1^{\Sigma\und{k}}  \times_{\scrO_0^{\times \Sigma\und{k}}} \scrO_1^{\und{\und{l}}}\ar[dd]^-{\cmp}
\ar@{}[lldd]^(.35){}="a"^(.65){}="b" \ar@{=>}_-{\mathsf{a}} "a";"b"\\
~&~ ~&~ \\
\scrO_1^n \times_{\scrO_0^{\times n}} \scrO_1^{\Sigma\und{\und{l}}}
\ar[rr]_-{\cmp} ~&~ ~&~ \scrO_1^{\Sigma\Sigma\und{\und{l}}}
}
\end{gathered}
\end{flalign}
in $\Grpd$, where $\und{\und{l}} = (\und{l}_1,\dots,\und{l}_n) = 
(l_{11},\dots,l_{1k_1},\dots,l_{n1},\dots ,l_{nk_n})
\in\bbN_0^{\Sigma\und{k}}$ denotes a double indexed tuple with summations 
$\Sigma \und{\und{l}} \coloneqq (\Sigma\und{l}_1,\dots,\Sigma\und{l}_n)\in\bbN_0^n$
and $\Sigma\Sigma\und{\und{l}}\coloneqq\sum_{i=1}^n\sum_{j=1}^{k_i}l_{ij}\in\bbN_0$,
as well as left and right unitor natural isomorphisms filling the diagrams
\begin{flalign}
\begin{gathered}
\xymatrix@C=2em@R=1em{
~&~ \ar[dl]_-{(u\,t^n,\id)~~} \ar@{=}[dd] \scrO_1^n \ar[dr]^-{~~(\id, u^{\times n}\,s^{n})}~&~\\
\ar[dr]_-{\cmp}\scrO_1^1 \times_{\scrO_0} \scrO_1^n 
\ar@{}[r]^(.5){}="a"^(.85){}="b" \ar@{=>}^-{\mathsf{l}} "a";"b" 
~&~ ~~~&~
\ar@{}[l]^(.55){}="a"^(.85){}="b" \ar@{=>}_-{\mathsf{r}} "a";"b"
\scrO_1^n\times_{\scrO_0^{\times n}} \big(\scrO_1^1\big)^{\times n} \ar[dl]^-{\cmp}\\
~&~ \scrO_1^n ~&~
}
\end{gathered}
\end{flalign}
in $\Grpd$. The natural isomorphisms $(\mathsf{a},\mathsf{l},\mathsf{r})$ 
are required to be globular, i.e.\ their images
under the source and target functors are identities in $\scrO_0$, and they must 
satisfy the typical triangle and pentagon axioms, see e.g.\ \cite[Definition 12.3.7]{Yau}.
\end{itemize}
\end{defi}

\begin{rem}\label{rem:pseudooperad}
Spelling out the data from Definition \ref{def:pseudooperad},
one finds that a pseudo-operad $\scrO$ consists of \textit{objects} $c\in\scrO_0$,
\textit{vertical morphisms} $(g : c\to c^\prime)\in \scrO_0$, (horizontal) \textit{$n$-ary operations}
$\psi\in \scrO_1^n$ and \textit{$2$-cells} $(\alpha : \psi\Rightarrow \psi^\prime)\in\scrO_1^n$ between
$n$-ary operations, for all $n\in\bbN_0$. The justification for interpreting
$\psi\in \scrO_1^n$ as an $n$-ary operation is given by the source and target functors,
which allow us to regard $\psi : \und{c}\hto d$ as an operation from an $n$-tuple
of objects $\und{c} =s^n(\psi)\in\scrO_0^{\times n}$ to a single object $d = t^n(\psi)\in\scrO_0$.
Applying the source and target functors to a $2$-cell 
$(\alpha : \psi\Rightarrow \psi^\prime)\in\scrO_1^n$ yields a square
\begin{flalign}\label{eqn:2celltmp}
\begin{gathered}
\xymatrix@R=0.25em@C=0.25em{
\und{c}^\prime  \ar[rr]|{\sla}^-{\psi^\prime}&& d^\prime\\
&{\scriptstyle\alpha}~\rotatebox[origin=c]{90}{$\Rightarrow$}~~&\\
\ar[uu]^-{\und{g}} \und{c} \ar[rr]|{\sla}_-{\psi} && d \ar[uu]_-{h}
}
\end{gathered}\quad,
\end{flalign}
where $(\und{g} = s^n(\alpha) : \und{c}\to \und{c}^\prime)\in\scrO_0^{\times n}$ is 
an $n$-tuple of vertical morphisms and $(h = t^n(\alpha): d\to d^\prime)\in\scrO_0$ is
a single vertical morphism. Notably, a feature of pseudo-operads which is not present
in pseudo-categories \cite{PseudoCats} are the permutation actions from item (v) 
of Definition \ref{def:pseudooperad}. In particular, acting with a permutation $\sigma\in S_n$ on
an $n$-ary operation $\psi : \und{c}\hto d$ gives an
$n$-ary operation $\scrO_1^n(\sigma)(\psi) : \und{c}\sigma\hto d$ from the permuted
tuple of objects $\und{c}\sigma \coloneqq (c_{\sigma(1)},\dots, c_{\sigma(n)})\in\scrO_0^{\times n}$. 
Acting with $\sigma\in S_n$ on a $2$-cell \eqref{eqn:2celltmp} gives a $2$-cell
\begin{flalign}
\begin{gathered}
\xymatrix@R=0.25em@C=0.25em{
\und{c}^\prime\sigma  \ar[rr]|{\sla}^-{\scrO_1^n(\sigma)(\psi^\prime)}&& d^\prime\\
&{\scriptstyle\scrO_1^n(\sigma)(\alpha)}~\rotatebox[origin=c]{90}{$\Rightarrow$}~~&\\
\ar[uu]^-{\und{g}\sigma} \und{c}\sigma \ar[rr]|{\sla}_-{\scrO_1^n(\sigma)(\psi)} && d \ar[uu]_-{h}
}
\end{gathered}
\end{flalign}
with $\und{g}\sigma \coloneqq (g_{\sigma(1)},\dots,g_{\sigma(n)})$ the permuted tuple of vertical morphisms.
\sk

The compositions in the groupoids $\scrO_0$ and $\scrO_1^n$
define, respectively, a vertical composition
$g^\prime\, g: c\to c^{\prime\prime}$ of vertical morphisms 
$g:c\to c^\prime$ and $g^\prime:c^\prime \to c^{\prime\prime}$,
and a vertical composition $\alpha^\prime\,\alpha : \psi\Rightarrow \psi^{\prime\prime}$ of $2$-cells
$\alpha : \psi\Rightarrow \psi^\prime$ and $\alpha^\prime : \psi^\prime\Rightarrow \psi^{\prime\prime}$
between $n$-ary operations, for all $n\in\bbN_0$. 
These vertical compositions are strictly associative and unital with respect to the 
identities $(\id_{c}:c\to c)\in \scrO_0$ and $(\id_\psi : \psi\Rightarrow \psi)\in \scrO_1^n$.
The operadic composition functors $\cmp$ define operadic compositions
$\psi\cmp \und{\phi}:\und{\und{a}}\hto d$ of $n$-ary operations
$\psi : \und{c} \hto d$ with $n$-tuples 
$\und{\phi}=(\phi_1,\dots,\phi_n): \und{\und{a}}\hto \und{c}$
of $k_i$-ary operations $\phi_i : \und{a}_i\hto c_i$, for $i=1,\dots,n$, 
as well as operadic compositions of (tuples of) $2$-cells
\begin{flalign}
\begin{gathered}
\xymatrix@R=0.25em@C=0.25em{
\und{\und{a}}^\prime  \ar[rr]|{\sla}^-{\und{\phi}^\prime}&& \und{c}^\prime \ar[rr]|{\sla}^-{\psi^\prime} && d^\prime\\
&{\scriptstyle\und{\beta}}~\rotatebox[origin=c]{90}{$\Rightarrow$}~~& &{\scriptstyle\alpha}~\rotatebox[origin=c]{90}{$\Rightarrow$}~~&\\
\ar[uu]^-{\und{\und{f}}} \und{\und{a}} \ar[rr]|{\sla}_-{\und{\phi}} && \und{c} \ar[uu]^-{\und{g}} \ar[rr]|{\sla}_-{\psi}&& d\ar[uu]_-{h}
}
\end{gathered}
~~~\stackrel{\cmp}{\longmapsto}~~~
\begin{gathered}
\xymatrix@R=0.25em@C=0.25em{
\und{\und{a}}^\prime  \ar[rr]|{\sla}^-{\psi^\prime\cmp \und{\phi}^\prime}&& d^\prime \\
&{\scriptstyle\alpha\cmp \und{\beta}}~\rotatebox[origin=c]{90}{$\Rightarrow$}~~~~&\\
\ar[uu]^-{\und{\und{f}}} \und{\und{a}} \ar[rr]|{\sla}_-{\psi\cmp \und{\phi}} && d\ar[uu]_-{h}
}
\end{gathered}\quad.
\end{flalign}
These operadic compositions are only weakly associative, with associator $\mathsf{a}$,
and weakly unital, with unitors $\mathsf{l}$ and $\mathsf{r}$,
with respect to the units obtained by the functor $u:\scrO_0\to \scrO_1^1$.
Note that the two compositions of $2$-cells satisfy the strict interchange law
\begin{flalign}
(\alpha^\prime\,\alpha)\cmp (\und{\beta}^\prime\,\und{\beta}) \,=\,
(\alpha^\prime \cmp \und{\beta}^\prime)~(\alpha\cmp \und{\beta})\quad,
\end{flalign}
as a consequence of the functoriality of $\cmp$.
\end{rem}

\begin{defi}\label{def:pseudomultifunctor}
A \textit{pseudo-multifunctor} $F : \scrO\to\scrP$ between two pseudo-operads
$\scrO$ and $\scrP$ in $\Grpd$ consists of the following data:
\begin{itemize}
\item[(i)] A functor $F_0 : \scrO_0\to \scrP_0$ between the groupoids of objects.

\item[(ii)] A sequence of functors $F_1^n : \scrO_1^n\to \scrP_1^n$
between the groupoids of $n$-ary operations, for all $n\in\bbN_0$, 
which are required to be maps of spans, i.e.\ the diagrams
\begin{flalign}
\begin{gathered}
\xymatrix{
\ar[d]_-{F_0}\scrO_0 ~&~ \ar[l]_-{t^n_{\scrO}}\scrO_1^n \ar[d]_-{F_1^n}\ar[r]^-{s^n_{\scrO}}~&~ \scrO_0^{\times n}\ar[d]^-{F_0^{\times n}}\\
\scrP_0 ~&~ \ar[l]^-{t^n_{\scrP}}\scrP_1^n \ar[r]_-{s^n_{\scrP}}~&~ \scrP_0^{\times n}
}
\end{gathered}
\end{flalign}
in $\Grpd$ commute strictly. These functors must further be equivariant with respect to 
the permutation actions, i.e.\ the diagrams
\begin{flalign}
\begin{gathered}
\xymatrix@C=3em{
\ar[d]_-{\scrO_1^n(\sigma)}\scrO_1^n \ar[r]^-{F_1^n}~&~\scrP_1^n \ar[d]^-{\scrP_1^n(\sigma)}\\
\scrO_1^n \ar[r]_-{F_1^n}~&~\scrP_1^n
}
\end{gathered}
\end{flalign}
in $\Grpd$ commute strictly, for all $\sigma\in S_n$.

\item[(iii)] Natural isomorphisms filling the diagrams
\begin{subequations}
\begin{flalign}
\begin{gathered}
\xymatrix@C=4em{
\ar[d]_-{\cmp_{\scrO}}\scrO_1^n \times_{\scrO_0^{\times n}} \scrO_{1}^{\und{k}} \ar[r]^-{F_1^n\times F_1^{\und{k}}}~&~ \scrP_1^n \times_{\scrP_0^{\times n}} \scrP_{1}^{\und{k}} \ar[d]^-{\cmp_{\scrP}}
\ar@{}[dl]^(.35){}="a"^(.65){}="b" \ar@{=>}_-{F^{\cmp}} "a";"b"\\
\scrO_1^{\Sigma\und{k}} \ar[r]_-{F_1^{\Sigma\und{k}}}~&~\scrP_1^{\Sigma\und{k}}
}
\end{gathered}
\end{flalign}
\begin{flalign}
\begin{gathered}
\xymatrix@C=3em{
\ar[d]_-{u_{\scrO}}\scrO_0 \ar[r]^-{F_0}~&~ \scrP_0\ar[d]^-{u_{\scrP}}
\ar@{}[dl]^(.35){}="a"^(.65){}="b" \ar@{=>}_-{F^{u}} "a";"b"\\
\scrO_1^1 \ar[r]_-{F_1^1}~&~ \scrP_1^1
}
\end{gathered}
\end{flalign}
\end{subequations}
in $\Grpd$. The natural isomorphisms $(F^{\cmp},F^{u})$ are required to be globular, i.e.\ their images
under the source and target functors are identities in $\scrP_0$, and they must 
satisfy analogous coherence axioms to those of a monoidal functor, see e.g.\ \cite[Definition 12.3.18]{Yau}.
\end{itemize}
\end{defi}

\begin{defi}\label{def:multitransformation}
A \textit{(vertical) multitransformation} $\zeta: F\Rightarrow G$ between two pseudo-multifunctors
$F,G: \scrO\to\scrP$ consists of the following data:
\begin{itemize}
\item[(i)] A natural transformation $\zeta_0 : F_0\Rightarrow G_0 : \scrO_0\to\scrP_0$.

\item[(ii)] A sequence of natural transformations $\zeta_1^n : F_1^n\Rightarrow G_1^n : \scrO_1^n\to\scrP_1^n$,
for all $n\in\bbN_0$, satisfying
\begin{flalign}
t^n_{\scrP} \, \zeta_1^n\,=\, \zeta_0\,t^{n}_{\scrO}\quad,\qquad 
s^n_{\scrP}\,\zeta_1^n\,=\ \zeta_0^{\times n}\, s^n_{\scrO}\quad.
\end{flalign}
\end{itemize}
This data must satisfy the following properties:
For all $n$-ary operations $(\psi :\und{c}\hto d)\in\scrO_1^n$ and
all $n$-tuples $\big(\und{\phi}=(\phi_1,\dots,\phi_n) : \und{\und{a}}\hto \und{c}\big)\in\scrO_1^{\und{k}}$ 
of $k_i$-ary operations $(\phi_i :\und{a}_i\hto c_i)\in\scrO_1^{k_i}$, for $i=1,\dots,n$, 
the compositions of $2$-cells 
\begin{subequations}
\begin{flalign}
\begin{gathered}
\xymatrix@R=0.5em@C=0.75em{
G_0(\und{\und{a}}) \ar[rr]|{\sla}^-{G_1^{\Sigma\und{k}}(\psi\cmp \und{\phi})} &&G_0(d)\\
&{\scriptstyle (\zeta_1^{\Sigma\und{k}})_{\psi\cmp \und{\phi}}}~\rotatebox[origin=c]{90}{$\Rightarrow$}_{~_{~}}~&\\
\ar[uu]^-{(\zeta_0)_{\und{\und{a}}}} 
F_0(\und{\und{a}})  \ar[rr]|{\sla}^-{F_1^{\Sigma\und{k}}(\psi\cmp \und{\phi})}&& F_0(d) \ar[uu]_-{(\zeta_0)_{d}}\\
& {\scriptstyle F^{\cmp}_{(\psi,\und{\phi})}}~\rotatebox[origin=c]{90}{$\Rightarrow$}~&\\
\ar@{=}[uu] 
F_0(\und{\und{a}}) \ar[r]|{\sla}_-{F_1^{\und{k}}(\und{\phi})} & F_0(\und{c}) \ar[r]|{\sla}_-{F_1^n(\psi)}& 
F_0(d)\ar@{=}[uu]
}
\end{gathered}
~~=~~
\begin{gathered}
\xymatrix@R=0.5em@C=0.25em{
G_0(\und{\und{a}}) \ar[rrrr]|{\sla}^-{G_1^{\Sigma\und{k}}(\psi\cmp\und{\phi})} && && G_0(d)\\
&& {\scriptstyle G^{\cmp}_{(\psi,\und{\phi})}}~\rotatebox[origin=c]{90}{$\Rightarrow$}~&&\\
\ar@{=}[uu]
G_0(\und{\und{a}}) \ar[rr]|{\sla}^-{G_1^{\und{k}}(\und{\phi})} && G_0(\und{c}) \ar[rr]|{\sla}^-{G_1^n(\psi)}& &
G_0(d)\ar@{=}[uu]\\
&{\scriptstyle (\zeta_1^{\und{k}})_{\und{\phi}}}~\rotatebox[origin=c]{90}{$\Rightarrow$}&  &{\scriptstyle (\zeta_1^n)_{\psi}}~\rotatebox[origin=c]{90}{$\Rightarrow$}&\\
\ar[uu]^-{(\zeta_0)_{\und{\und{a}}}}
F_0(\und{\und{a}}) \ar[rr]|{\sla}_-{F_1^{\und{k}}(\und{\phi})} && \ar[uu]^-{(\zeta_0)_{\und{c}}}F_0(\und{c}) \ar[rr]|{\sla}_-{F_1^{n}(\psi)}& & F_0(d)\ar[uu]_-{(\zeta_0)_{d}}
}
\end{gathered}
\end{flalign}
in $\scrP_1^{\Sigma\und{k}}$ coincide. Furthermore, for all objects $c\in\scrO_0$, 
the compositions of $2$-cells
\begin{flalign}
\begin{gathered}
\xymatrix@R=0.5em@C=0.25em{
G_0(c)\ar[rr]|{\sla}^-{G_1^1 u(c)}&&G_0(c)\\
&{\scriptstyle (\zeta_1^1)_{u(c)}}~\rotatebox[origin=c]{90}{$\Rightarrow$}&\\
\ar[uu]^-{(\zeta_0)_c} F_0(c)\ar[rr]|{\sla}^-{F_1^1u(c)}&& F_0(c)\ar[uu]_-{(\zeta_0)_c}\\
&{\scriptstyle F^u_{c}}~\rotatebox[origin=c]{90}{$\Rightarrow$}&\\
\ar@{=}[uu] F_0(c) \ar[rr]|{\sla}_-{uF_0(c)}&& F_0(c)\ar@{=}[uu] 
}
\end{gathered}
~~=~~
\begin{gathered}
\xymatrix@R=0.5em@C=0.25em{
G_0(c)  \ar[rr]|{\sla}^-{G^1_1 u(c)}&&G_0(c)\\
&{\scriptstyle G^u_{c}}~\rotatebox[origin=c]{90}{$\Rightarrow$}&\\
\ar@{=}[uu] G_0(c)  \ar[rr]|{\sla}^-{uG_0(c)}&&G_0(c) \ar@{=}[uu] \\
&{\scriptstyle u((\zeta_0)_{c})}~\rotatebox[origin=c]{90}{$\Rightarrow$}&\\
\ar[uu]^-{(\zeta_0)_c} F_0(c) \ar[rr]|{\sla}_-{uF_0(c)}&& F_0(c)\ar[uu]_-{(\zeta_0)_c}
}
\end{gathered}
\end{flalign}
\end{subequations}
in $\scrP_1^1$ coincide.
\end{defi}

Using similar compositions as in the case of
pseudo-categories \cite{PseudoCats}, see also \cite[Chapter~12.3]{Yau},
one obtains the following result.
\begin{propo}\label{prop:psMult}
Pseudo-operads (Definition \ref{def:pseudooperad}), pseudo-multifunctors 
(Definition \ref{def:pseudomultifunctor}) and multitransformations
(Definition \ref{def:multitransformation}) assemble into 
a strict $(2,1)$-category $\PsOp$.
\end{propo}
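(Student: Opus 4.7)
The plan is to construct the structural data of $\PsOp$ by direct extension of the corresponding constructions for pseudo-categories in \cite{PseudoCats} (see also \cite[Chapter~12.3]{Yau}), and then to verify the axioms of a strict $(2,1)$-category in parallel with that reference. First I would define the composition of two pseudo-multifunctors $F:\scrO\to\scrP$ and $G:\scrP\to\scrQ$ by setting $(GF)_0 \coloneqq G_0\,F_0$ and $(GF)_1^n \coloneqq G_1^n\,F_1^n$ on the groupoids of objects and of $n$-ary operations, with coherence natural isomorphisms $(GF)^{\cmp}$ and $(GF)^u$ constructed by horizontally pasting $F^{\cmp},G^{\cmp}$ and $F^u,G^u$ along the obvious squares of functors. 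The identity pseudo-multifunctor $\id_{\scrO}$ is defined with identity component functors and identity coherence $2$-cells. Strict associativity and unitality of this composition follow from the strictness of horizontal composition of natural transformations in $\Cat$ and from the fact that all span, equivariance and source/target compatibilities required of pseudo-multifunctors are imposed as strict equalities.

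Next I would define the two compositions of multitransformations. Vertical composition of $\zeta:F\Rightarrow G$ and $\eta:G\Rightarrow H$ is given component-wise by the vertical composites of natural transformations $(\eta\zeta)_0 \coloneqq \eta_0\,\zeta_0$ and $(\eta\zeta)_1^n \coloneqq \eta_1^n\,\zeta_1^n$; the two defining pasting equalities in Definition \ref{def:multitransformation} for the composite are obtained by stacking the squares for $\zeta$ on top of those for $\eta$ and using the middle-four interchange in $\Cat$. Horizontal composition (whiskering) of a multitransformation $\zeta:F\Rightarrow G:\scrO\to\scrP$ with pseudo-multifunctors $H:\scrP\to\scrQ$ on the left and $K:\scrN\to\scrO$ on the right is defined component-wise via whiskering of the underlying natural transformations, and the compatibility conditions with $H^{\cmp},H^u$ and $K^{\cmp},K^u$ amount again to pastings that reduce to those treated in \cite[Chapter~12.3]{Yau}.

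The remaining bulk of the work is a bookkeeping verification of the strict $2$-category axioms: the middle-four interchange law for the two compositions of multitransformations, compatibility with identity multitransformations, and strict associativity. Each of these reduces to the corresponding statement at the level of the component natural transformations $\zeta_0$ and $\zeta_1^n$, where the required identities hold in the strict $2$-category $\Cat$. To upgrade $\PsOp$ to a strict $(2,1)$-category, I would invoke the fact that $\scrP_0$ and every $\scrP_1^n$ are groupoids by the very definition of a pseudo-operad, hence any natural transformation $\zeta_0:F_0\Rightarrow G_0$ and $\zeta_1^n:F_1^n\Rightarrow G_1^n$ is pointwise invertible. The pointwise inverses $\zeta_0^{-1}$ and $(\zeta_1^n)^{-1}$ again satisfy the two equalities in Definition \ref{def:multitransformation} (by inverting the pasting diagrams, using invertibility of the globular coherence cells $F^{\cmp},F^u,G^{\cmp},G^u$), producing an inverse multitransformation $\zeta^{-1}:G\Rightarrow F$.

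The main technical obstacle will be the careful pasting verifications needed to check that $(GF)^{\cmp}$ and $(GF)^u$ satisfy the triangle and pentagon coherence axioms of a pseudo-multifunctor, and that whiskered multitransformations still obey the compatibility squares with respect to the composite coherence cells. However, these are entirely routine pasting diagram computations that transport verbatim from the pseudo-category case; the only genuinely new ingredient relative to \cite{PseudoCats,Yau} is the equivariance under the symmetric group actions from item~(v) of Definition \ref{def:pseudooperad}, and this is preserved under all of the above operations because the equivariance requirements are strict and horizontal/vertical composition of strict equalities remains strict.
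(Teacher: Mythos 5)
Your proposal is correct and follows essentially the same route as the paper, which offers no written proof at all but simply asserts the result ``using similar compositions as in the case of pseudo-categories \cite{PseudoCats}, see also \cite[Chapter~12.3]{Yau}''; your outline spells out exactly those standard constructions (strict composition of the component functors, pasted coherence cells, componentwise vertical composition and whiskering, interchange checked in $\Cat$) together with the correct observation that the groupoid-valued components force every multitransformation to be invertible, yielding the $(2,1)$-structure.
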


The concept of fibrant pseudo-categories from \cite[Definition 3.4]{Shulman}
admits an immediate generalization to our context of pseudo-operads.
Since every vertical morphism in a pseudo-operad as in Definition \ref{def:pseudooperad}
is invertible, the same observation as in \cite[Lemma 3.20]{Shulman} applies
to our context so that we do not have to introduce the concept of conjoints.
\begin{defi}\label{def:fibrant}
Let $\scrO\in\PsOp$ be a pseudo-operad.
\begin{itemize} 
\item[(a)] A \textit{companion} of a vertical morphism
$g : c\to c^\prime$ is a $1$-ary operation $\hat{g}: c\hto c^\prime$ 
together with $2$-cells
\begin{flalign}
\begin{gathered}
\xymatrix@R=0.25em@C=1em{
c^\prime \ar[rr]|{\sla}^-{u(c^\prime)} && c^\prime\\
&\rotatebox[origin=c]{90}{$\Rightarrow$}&\\
\ar[uu]^-{g}c \ar[rr]|{\sla}_-{\hat{g}} && c^\prime \ar@{=}[uu]
}
\end{gathered}
\qquad\quad\text{and}\qquad\quad
\begin{gathered}
\xymatrix@R=0.25em@C=1em{\\
c \ar[rr]|{\sla}^-{\hat{g}}&& c^\prime\\
&\rotatebox[origin=c]{90}{$\Rightarrow$}&\\
\ar@{=}[uu]c \ar[rr]|{\sla}_-{u(c)}&& c\ar[uu]_-{g}
}
\end{gathered}\quad,
\end{flalign}
such that
\begin{subequations}\label{eqn:companionidentities}
\begin{flalign}\label{eqn:companionidentities1}
\begin{gathered}
\xymatrix@R=0.25em@C=1em{
c^\prime \ar[rr]|{\sla}^-{u(c^\prime)} && c^\prime\\
&\rotatebox[origin=c]{90}{$\Rightarrow$}&\\
\ar[uu]^-{g}c \ar[rr]|{\sla}^-{\hat{g}}&& c^\prime\ar@{=}[uu]\\
&\rotatebox[origin=c]{90}{$\Rightarrow$}&\\
\ar@{=}[uu]
c \ar[rr]|{\sla}_-{u(c)}&& c\ar[uu]_-{g}
}
\end{gathered}
~~=~~
\begin{gathered}
\xymatrix@R=0.25em@C=1em{
c^\prime \ar[rr]|{\sla}^-{u(c^\prime)} && c^\prime\\
&{\scriptstyle u(g)}~\rotatebox[origin=c]{90}{$\Rightarrow$}&\\
\ar[uu]^-{g} c \ar[rr]|{\sla}_-{u(c)}&& c\ar[uu]_-{g}
}
\end{gathered}
\end{flalign}
and
\begin{flalign}\label{eqn:companionidentities2}
\begin{gathered}
\xymatrix@R=0.25em@C=1em{
c \ar[rrrr]|{\sla}^-{\hat{g}} && &&c^\prime\\
&& {\scriptstyle \mathsf{l}_{\hat{g}}}~\rotatebox[origin=c]{90}{$\Rightarrow$}~{\scriptstyle\cong} && \\
\ar@{=}[uu]c \ar[rr]|{\sla}^-{\hat{g}}&& c^\prime \ar[rr]|{\sla}^-{u(c^\prime)} && c^\prime\ar@{=}[uu]\\
&\rotatebox[origin=c]{90}{$\Rightarrow$}&   &\rotatebox[origin=c]{90}{$\Rightarrow$}&  \\
\ar@{=}[uu]c \ar[rr]|{\sla}_-{u(c)}&& c \ar[uu]^-{g} \ar[rr]|{\sla}_-{\hat{g}} && c^\prime\ar@{=}[uu]\\
&& {\scriptstyle \mathsf{r}_{\hat{g}}}~\rotatebox[origin=c]{-90}{$\Rightarrow$}~{\scriptstyle\cong} && \\
\ar@{=}[uu]c \ar[rrrr]|{\sla}_-{\hat{g}}&& &&c^\prime\ar@{=}[uu]
}
\end{gathered}
~~=~~
\begin{gathered}
\xymatrix@R=0.25em@C=1em{
c \ar[rr]|{\sla}^-{\hat{g}}&& c^\prime\\
& {\scriptstyle \id_{\hat{g}}}~\rotatebox[origin=c]{90}{$\Rightarrow$}&\\
\ar@{=}[uu]c \ar[rr]|{\sla}_-{\hat{g}}&& c^\prime\ar@{=}[uu]
}
\end{gathered}\quad.
\end{flalign}
\end{subequations}

\item[(b)] The pseudo-operad $\scrO$ is called \textit{fibrant}
if every vertical morphism has a companion.
We denote by $\PsOp^{\mathrm{fib}}\subseteq\PsOp$ the full $2$-subcategory
of fibrant pseudo-operads. 
\end{itemize}
\end{defi}

In the main part of our paper, we require constructions which allow
us to relate between pseudo-operads and ordinary ($\Set$-valued colored symmetric) operads.
The following result is a direct generalization of \cite[Theorem 2.15]{FFTvAQFT} from 
pseudo-categories to pseudo-operads.
\begin{theo}\label{theo:2adjunction}
The constructions presented below define a $2$-adjunction
\begin{flalign}\label{eqn:tauiotaadjunction}
\xymatrix@C=3.5em{
\tau ~:~ \PsOp^{\mathrm{fib}} \ar@<1.2ex>[r]_-{\perp}~~&~~\ar@<1.2ex>[l] \Op^{(2,1)} ~:~ \iota
}
\end{flalign}
between the $(2,1)$-category $\PsOp^{\mathrm{fib}}$ of fibrant pseudo-operads,
pseudo-multifunctors and multitransformations and the $(2,1)$-category $\Op^{(2,1)}$
of ordinary ($\Set$-valued colored symmetric) operads, multifunctors and multinatural isomorphisms.
\end{theo}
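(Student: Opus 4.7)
The plan is to construct the two $2$-functors $\iota$ and $\tau$ explicitly and then establish the $2$-adjunction by exhibiting a $2$-natural equivalence of hom-categories, following the template of the analogous $2$-adjunction between fibrant pseudo-categories and ordinary categories from \cite[Theorem 2.15]{FFTvAQFT}, generalized to accommodate the $n$-ary operations and permutation actions of our operadic setting.

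For the inclusion $\iota: \Op^{(2,1)} \to \PsOp^{\mathrm{fib}}$, I would send an ordinary operad $\P$ to the pseudo-operad $\iota(\P)$ whose object groupoid $\iota(\P)_0$ and operation groupoids $\iota(\P)_1^n$ are discrete on, respectively, the objects and $n$-ary operations of $\P$. All structural data (source, target, composition, unit, permutation actions) are inherited from $\P$ and are strictly associative, unital and equivariant, so the coherence $2$-cells $(\mathsf{a},\mathsf{l},\mathsf{r})$ are identities. Fibrancy is trivial: every identity vertical morphism has the identity $1$-ary operation as its companion, satisfying \eqref{eqn:companionidentities} by identities. Multifunctors and multinatural isomorphisms of $\P$ extend to strict pseudo-multifunctors and multitransformations of $\iota(\P)$ in the obvious way. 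For the truncation $\tau: \PsOp^{\mathrm{fib}} \to \Op^{(2,1)}$, I would send a fibrant pseudo-operad $\scrO$ to the operad $\tau(\scrO)$ with the same objects as $\scrO_0$ and with $n$-ary operations given by equivalence classes $[\psi]$ of $n$-ary operations $\psi:\und{c}\hto d$ in $\scrO_1^n$ under the relation $\psi\sim\psi'$ iff there exists a $2$-cell $\alpha:\psi\Rightarrow\psi'$ with identity vertical source and target components. Operadic composition, units and permutation actions descend cleanly from $\scrO$, and strict associativity and unitality in $\tau(\scrO)$ are witnessed precisely by the globular coherence $2$-cells $(\mathsf{a},\mathsf{l},\mathsf{r})$ of $\scrO$. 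A pseudo-multifunctor $F:\scrO\to\scrP$ is sent to the multifunctor $\tau(F)$ acting by $F_0$ on objects and $[\psi]\mapsto[F_1^n(\psi)]$ on operation classes, while a multitransformation $\zeta:F\Rightarrow G$ is sent to the multinatural isomorphism with components $\tau(\zeta)_c := [\widehat{(\zeta_0)_c}]$, obtained by picking a companion in $\scrP$ of the invertible vertical morphism $(\zeta_0)_c$; naturality and invertibility in $\tau(\scrP)$, as well as independence of the choice of companion up to identity in $\tau(\scrP)$, follow directly from the companion identities \eqref{eqn:companionidentities}.

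To establish the $2$-adjunction, I would produce a $2$-natural equivalence of hom-categories $\Op^{(2,1)}(\tau(\scrO),\P)\simeq \PsOp^{\mathrm{fib}}(\scrO,\iota(\P))$. Given a pseudo-multifunctor $F:\scrO\to\iota(\P)$, discreteness of $\iota(\P)_1^n$ forces $F^\cmp$ and $F^u$ to be identities (so $F$ is strict) and forces $F_1^n$ to identify $2$-cell-equivalent operations, while discreteness of $\iota(\P)_0$ forces $F_0$ to collapse each vertical morphism of $\scrO$ to an identity. Hence $F$ descends uniquely to a multifunctor $\tau(\scrO)\to \P$, providing one direction of the equivalence. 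The inverse direction precomposes a multifunctor $\tau(\scrO)\to \P$ with the canonical projection $\scrO\to \iota(\tau(\scrO))$, which will serve as the unit of the adjunction. The main obstacle will be the $2$-cell level of this correspondence: I have to check that multitransformations between two pseudo-multifunctors $\scrO\rightrightarrows\iota(\P)$ correspond bijectively and $2$-naturally to multinatural isomorphisms between the associated ordinary multifunctors $\tau(\scrO)\rightrightarrows\P$. This reduces, as in the pseudo-category case of \cite[Theorem 2.15]{FFTvAQFT}, to unpacking the datum of $\zeta_0$ via companions into the component data of a multinatural transformation in $\P$ and then showing that the axioms of Definition \ref{def:multitransformation} translate precisely to the multinaturality axioms; the companion identities \eqref{eqn:companionidentities}, together with the permutation equivariance of $\zeta_1^n$, do all the work. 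The $2$-naturality in both variables and the triangle identities for the unit and counit then follow by a routine diagrammatic verification.
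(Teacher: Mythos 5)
There is a genuine gap: your construction of $\iota$ is not the right adjoint. You take $\iota(\P)_0$ and $\iota(\P)_1^n$ to be \emph{discrete} groupoids, whereas the correct construction (the one the paper uses) takes $\iota(\P)_0$ to be the groupoid whose morphisms are the \emph{invertible $1$-ary operations} of $\P$, and $\iota(\P)_1^n$ to be the groupoid whose morphisms are \emph{commutative squares} of operations. With your discrete choice the claimed hom-equivalence $\Op^{(2,1)}(\tau(\scrO),\P)\simeq\PsOp^{\mathrm{fib}}(\scrO,\iota(\P))$ fails. Indeed, a pseudo-multifunctor $F:\scrO\to\iota_{\mathrm{disc}}(\P)$ must send every vertical morphism to an identity (forcing $F_0(c)=F_0(c^\prime)$ whenever $c,c^\prime$ are connected by a vertical morphism) and must send every $2$-cell, globular or not, to an identity; in particular the companion $2$-cells force $F_1^1(\hat g)=F_1^1(u(c))$. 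A multifunctor $G:\tau(\scrO)\to\P$ is subject to no such constraints: e.g.\ for $\scrO$ with one object $c$, vertical automorphism $g\neq\id_c$ and companion $\hat g$ with $[\hat g]\neq\oone_c$ in $\tau(\scrO)$, a $G$ sending $[\hat g]$ to a nontrivial automorphism in $\P$ cannot be isomorphic (multinatural isomorphisms act by conjugation) to anything in the image of your comparison, so essential surjectivity fails. Relatedly, your proposed unit, the ``canonical projection'' $\scrO\to\iota(\tau(\scrO))$, does not even exist for discrete $\iota$: a vertical morphism $g:c\to c^\prime$ between distinct objects has nowhere to go if the target groupoid of objects has only identities, and indeed the whole point of the paper's unit is that it sends $g$ to the class $[\hat g]$ of its companion, viewed as an invertible $1$-ary operation in $\iota\tau(\scrO)_0$ --- which requires the non-discrete $\iota$ and is precisely where fibrancy enters.

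The $2$-cell level breaks for the same reason: with a discrete target every multitransformation between pseudo-multifunctors $\scrO\rightrightarrows\iota_{\mathrm{disc}}(\P)$ has identity components, so the hom-category on the pseudo-operad side is discrete and cannot match the category of multinatural isomorphisms $\tau(\scrO)\rightrightarrows\P$; your remark that ``the companion identities do all the work'' has nothing to act on. Your construction of $\tau$ (objects of $\scrO_0$, operations as globular-$2$-cell classes, $\tau(\zeta)$ via companions with well-definedness from the companion identities) does agree with the paper, and the overall adjunction strategy is fine, but the proof only goes through after replacing $\iota$ by the groupoid-enriched version above; then $\tau\iota=\id$ strictly (identity counit), the unit is defined via companions, and the triangle identities reduce to $\eta_{\iota(\P)}=\id$ and $\tau(\eta_{\scrO})=\id$.
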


The inclusion $2$-functor $\iota: \Op^{(2,1)}\to \PsOp^{\mathrm{fib}}$
and the truncation $2$-functor $\tau: \PsOp^{\mathrm{fib}}\to \Op^{(2,1)}$
are given explicitly by the following constructions.
\begin{constr}\label{constr:iota}
The inclusion $2$-functor $\iota: \Op^{(2,1)}\to \PsOp^{\mathrm{fib}}$
is defined by the following assignments:
\begin{description}
\item[On objects:] To any ordinary operad $\O\in\Op^{(2,1)}$, we assign the fibrant pseudo-operad 
$\iota(\O)\in \PsOp^{\mathrm{fib}}$ which is defined by the following data as in Definition \ref{def:pseudooperad}:
\begin{itemize}
\item[(i)] The groupoid $\iota(\O)_0$ has as objects the objects of the operad $\O$
and as morphisms all \textit{invertible} $1$-ary operations in $\O$. We denote these morphisms vertically
\begin{flalign}
\begin{gathered}
\xymatrix{
c^\prime\\
\ar[u]^-{g}_-{\cong}c
}
\end{gathered}
\end{flalign}
and often suppress the symbol $\cong$ indicating that these are isomorphisms.

\item[(ii)] For each $n\in\bbN_0$, the groupoid $\iota(\O)_1^n$ has as objects 
the $n$-ary operations of the operad $\O$ and as morphisms all commutative squares
\begin{flalign}
\begin{gathered}
\xymatrix{
\und{c}^\prime \ar[r]^-{\psi^\prime} & d^\prime\\
\ar[u]^-{\und{g}}\und{c}\ar[r]_-{\psi}&d\ar[u]_-{h}
}
\end{gathered}
\end{flalign}
under operadic compositions. The source functor $s^n$ sends such square to the $n$-tuple
of vertical morphisms $\und{g} : \und{c}\to\und{c}^\prime$ and the target functor $t^n$
sends it to the single vertical morphism $h : d\to d^\prime$.

\item[(iii)] The operadic composition functors 
$\cmp : \iota(\O)_1^n\times_{\iota(\O)_0^{\times n}} \iota(\O)_1^{\und{k}}\to \iota(\O)_1^{\Sigma\und{k}}$ 
are defined by operadic composition in the operad $\O$ (denoted below by juxtaposition) 
\begin{flalign}
\begin{gathered}
\xymatrix{
\und{\und{a}}^\prime \ar[r]^-{\und{\phi}^\prime} & \und{c}^\prime \ar[r]^-{\psi^\prime} & d^\prime\\
\ar[u]^-{\und{\und{f}}}\und{\und{a}}\ar[r]_-{\und{\phi} }&\und{c}\ar[u]^-{\und{g}} \ar[r]_-{\psi}& d\ar[u]_-{h}
}
\end{gathered}
~~\stackrel{\cmp}{\longmapsto}~~
\begin{gathered}
\xymatrix@C=3em{
\und{\und{a}}^\prime \ar[r]^-{\psi^\prime\,\und{\phi}^\prime} & d^\prime\\
\ar[u]^-{\und{\und{f}}}\und{\und{a}}\ar[r]_-{\psi\, \und{\phi}} & d\ar[u]_-{h}
}
\end{gathered} \quad .
\end{flalign}

\item[(iv)] The operadic unit functor $u : \iota(\O)_0\to \iota(\O)_1^1$ assigns the units 
$\oone$ of the operad $\O$
\begin{flalign}
\begin{gathered}
\xymatrix{
c^\prime\\
\ar[u]^-{g}c
}
\end{gathered}
~~\stackrel{u}{\longmapsto}~~
\begin{gathered}
\xymatrix{
c^\prime \ar[r]^-{\oone_{c^\prime}}& c^\prime\\
\ar[u]^-{g}c \ar[r]_-{\oone_c}& c\ar[u]_-{g}
}
\end{gathered} \quad .
\end{flalign}

\item[(v)] For each $n\in\bbN_0$, the permutation action
$\iota(\O)_1^n: \mathbf{B} S_n^\op\to\Grpd$ is defined by the permutation
action of the operad $\O$
\begin{flalign}
\begin{gathered}
\xymatrix{
\und{c}^\prime \ar[r]^-{\psi^\prime} & d^\prime\\
\ar[u]^-{\und{g}}\und{c}\ar[r]_-{\psi}&d\ar[u]_-{h}
}
\end{gathered}
~~\stackrel{\iota(\O)_1^n(\sigma)}{\longmapsto}~~
\begin{gathered}
\xymatrix@C=3.5em{
\und{c}^\prime\sigma \ar[r]^-{\O(\sigma)(\psi^\prime)} & d^\prime\\
\ar[u]^-{\und{g}\sigma}\und{c}\sigma\ar[r]_-{\O(\sigma)(\psi)}&d\ar[u]_-{h}
}
\end{gathered}\quad,
\end{flalign}
for all $\sigma\in S_n$.

\item[(vi)] The associator $\mathsf{a}$ and the unitors $\mathsf{l}$ and $\mathsf{r}$ 
are trivial, i.e.\ they consist of the identity natural isomorphisms.
\end{itemize}

Note that the pseudo-operad $\iota(\O)$ is indeed fibrant in the sense of Definition \ref{def:fibrant}.
The companion of a vertical morphism $g : c\to c^\prime$ is the
$1$-ary operation $\hat{g} \coloneqq g : c\to c^\prime$.

\item[On morphisms:] To any ordinary multifunctor $F : \O\to \P$ in $\Op^{(2,1)}$, 
we assign the pseudo-multifunctor $\iota(F): \iota(\O) \to \iota(\P)$ in $\PsOp^{\mathrm{fib}}$ 
which is defined by the following data as in Definition \ref{def:pseudomultifunctor}:
\begin{itemize}
\item[(i)] The functor $\iota(F)_0 : \iota(\O)_0\to\iota(\P)_0$ is given by restricting
the multifunctor $F:\O\to \P$ to the wide subgroupoids of invertible $1$-ary operations.

\item[(ii)] For each $n\in\bbN_0$, the functor $\iota(F)_1^n : \iota(\O)_1^n\to\iota(\P)_1^n$
is defined in terms of the multifunctor $F:\O\to \P$ by
\begin{flalign}
\begin{gathered}
\xymatrix{
\und{c}^\prime \ar[r]^-{\psi^\prime} & d^\prime\\
\ar[u]^-{\und{g}}\und{c}\ar[r]_-{\psi}&d\ar[u]_-{h}
}
\end{gathered}
~~\stackrel{\iota(F)_1^n}{\longmapsto}~~
\begin{gathered}
\xymatrix@C=3.5em{
F(\und{c}^\prime) \ar[r]^-{F(\psi^\prime)} & F(d^\prime) \\
\ar[u]^-{F(\und{g})}F(\und{c}) \ar[r]_-{F(\psi)}&F(d)\ar[u]_-{F(h)}
}
\end{gathered}\quad.
\end{flalign}

\item[(iii)] The natural isomorphisms $\iota(F)^{\cmp}$ and $\iota(F)^u$ are the identities.
\end{itemize}

\item[On $2$-morphisms:] To any ordinary multinatural isomorphism 
$\zeta : F \Rightarrow G : \O\to \P$ in $\Op^{(2,1)}$, 
we assign the multitransformation $\iota(\zeta): \iota(F)\Rightarrow \iota(G): 
\iota(\O) \to \iota(\P)$ in $\PsOp^{\mathrm{fib}}$ 
which is defined by the following data as in Definition \ref{def:multitransformation}:
\begin{itemize}
\item[(i)] The natural transformation
$\iota(\zeta)_0 : \iota(F)_0 \Rightarrow \iota(G)_0: \iota(\O)_0 \to \iota(\P)_0$ is defined by the components
\begin{flalign}
\iota(\zeta)_0 \,\coloneqq\, \left\{
\begin{gathered}
\xymatrix@R=0.5em@C=0.25em{
G(c)\\
~\\
\ar[uu]^-{\zeta_c} F(c)
}
\end{gathered}~~:~~c\in \O
\right\}\quad.
\end{flalign}

\item[(ii)] For each $n\in\bbN_0$, the natural transformation
$\iota(\zeta)_1^n : \iota(F)_1^n \Rightarrow \iota(G)_1^n: \iota(\O)_1^n \to \iota(\P)_1^n$ 
is defined by the components
\begin{flalign}
\iota(\zeta)_1^n \,\coloneqq\, \left\{
\begin{gathered}
\xymatrix@R=0.5em@C=0.5em{
G(\und{c})\ar[rr]^-{G(\psi)}&&G(d)\\
&~&\\
\ar[uu]^-{\zeta_{\und{c}}} F(\und{c}) \ar[rr]_-{F(\psi)}&& F(d)\ar[uu]_-{\zeta_{d}}
}
\end{gathered}~~:~~\text{$n$-ary }\big(\psi: \und{c}\to d\big)\in \O
\right\}\quad.
\end{flalign}
\end{itemize}
\end{description}
The assignment $\iota: \Op^{(2,1)}\to \PsOp^{\mathrm{fib}}$ defined above is strictly $2$-functorial.
\end{constr}

\begin{constr}\label{constr:tau}
The truncation $2$-functor $\tau: \PsOp^{\mathrm{fib}}\to \Op^{(2,1)}$
is defined by the following assignments:
\begin{description}
\item[On objects:] To any fibrant pseudo-operad $\scrO\in\PsOp^{\mathrm{fib}}$, we assign
the ordinary operad $\tau(\scrO)\in\Op^{(2,1)}$ which is defined by the following data:
\begin{itemize}
\item[(i)] The objects of the operad $\tau(\scrO)$ are the objects of the groupoid $\scrO_0$.

\item[(ii)] The $n$-ary operations of the operad $\tau(\scrO)$ are equivalence classes 
$[\psi : \und{c}\hto d]\in\scrO_1^n/_{\sim}$ of the $n$-ary operations in $\scrO_1^n$
under the following equivalence relation: Two $n$-ary operations
$(\psi:\und{c}\hto d),(\psi^\prime: \und{c}\hto d)\in\scrO_1^n$ with the same
source and target are equivalent if there exists a globular $2$-cell
\begin{flalign}
\begin{gathered}
\xymatrix@R=0.25em@C=1em{
\und{c} \ar[rr]|{\sla}^-{\psi^\prime}&& d\\
&~\rotatebox[origin=c]{90}{$\Rightarrow$}&\\
\ar@{=}[uu]\und{c} \ar[rr]|{\sla}_-{\psi}&& d\ar@{=}[uu]
}
\end{gathered}\quad,
\end{flalign}
which is automatically an isomorphism because $\scrO_1^n$ is a groupoid.

\item[(iii)] Operadic composition in $\tau(\scrO)$ is defined
by operadic composition $[\psi]\,[\und{\phi}]\coloneqq \big[\psi\cmp\und{\phi}\big]$
of any choice of representatives in the pseudo-operad $\scrO$ and the operadic
units in $\tau(\scrO)$ are given by $\oone_c\coloneqq[u(c)]$. Associativity
and unitality of compositions in $\tau(\scrO)$ hold true strictly
because the associator $\mathsf{a}$ and the unitors $\mathsf{l}$ and $\mathsf{r}$
in Definition \ref{def:pseudooperad} are by hypothesis globular, hence they are trivial
at the level of equivalence classes.

\item[(iv)] The permutation actions in $\tau(\scrO)$ are defined in terms
of the permutation actions in $\scrO$ by
$\tau(\scrO)(\sigma)[\psi : \und{c}\hto d] \coloneqq \big[\scrO_1^n(\sigma)(\psi) : \und{c}\sigma\hto d\big]$,
for all $\sigma\in S_n$.
\end{itemize}

\item[On morphisms:] To any pseudo-multifunctor $F : \scrO\to \scrP$ in $\PsOp^{\mathrm{fib}}$, 
we assign the ordinary multifunctor in $\Op^{(2,1)}$ which is defined by 
\begin{flalign}
\tau(F)\,:\, \tau(\scrO) ~&\longrightarrow~ \tau(\scrP)\quad,\\
\nn \scrO_0\ni c~&\longmapsto~F_0(c)\in\scrP_0\quad,\\
\nn \scrO_1^n/_{\sim} \ni \big[\psi:\und{c}\hto d\big]~&\longmapsto~\big[F_1^n(\psi): F_0(\und{c})\hto F_0(d)\big]\in \scrP_1^n/_{\sim}\quad.
\end{flalign}

\item[On $2$-morphisms:] To any  multitransformation
$\zeta : F \Rightarrow G : \scrO\to \scrP$ in $\PsOp^{\mathrm{fib}}$, 
we assign the ordinary multinatural isomorphism $\tau(\zeta): \tau(F)\Rightarrow \tau(G): 
\tau(\scrO) \to \tau(\scrP)$ in $\Op^{(2,1)}$ which is defined by the components
\begin{flalign}
\tau(\zeta)\,\coloneqq\, \left\{\big[(\hat{\zeta}_0)_c : F_0(c)\hto G_0(c) \big]~~:~~ c\in \scrO_0\right\}
\end{flalign}
that are obtained by any choice of companions (see Definition \ref{def:fibrant})
for the components $(\zeta_0)_c : F_0(c)\to G_0(c)$ of $\zeta_0$. 
By \cite[Lemma 3.8]{Shulman}, different choices of companions define the 
same equivalence class, hence the components of $\tau(\zeta)$ are well-defined.
To prove that these components are multinatural, i.e.\ 
$[G_1^n(\psi)]\,[(\hat{\zeta}_0)_{\und{c}}] = [(\hat{\zeta}_0)_{d}] \, [F_1^n(\psi)]$ 
for all $n$-ary operations $(\psi : \und{c}\hto d)\in \scrO_1^n$, 
we compose the $2$-cell component $(\zeta_1^n)_\psi$ of the natural transformation 
$\zeta_1^n : F_1^n\Rightarrow G_1^n : \scrO_1^n \to \scrP_1^n$ 
with the $2$-cells for the companions from Definition~\ref{def:fibrant} 
according to
\begin{flalign}
\begin{gathered}
\xymatrix@R=0.25em@C=1em{
F_0(\und{c}) \ar[rr]|{\sla}^-{(\hat{\zeta}_0)_{\und{c}}} && G_0(\und{c}) \ar[rr]|{\sla}^-{G_1^n(\psi)}&& 
G_0(d)  \ar[rr]|{\sla}^-{u(G_0(d))}&& G_0(d)\\
&\rotatebox[origin=c]{90}{$\Rightarrow$}& &{\scriptstyle (\zeta_1^n)_\psi}~\rotatebox[origin=c]{90}{$\Rightarrow$}& &\rotatebox[origin=c]{90}{$\Rightarrow$}& \\
\ar@{=}[uu] F_0(\und{c}) \ar[rr]|{\sla}_-{u(F_0(\und{c}))} && F_0(\und{c}) \ar[uu]^-{(\zeta_0)_{\und{c}}}
\ar[rr]|{\sla}_-{F_1^n(\psi)}&&  \ar[uu]_-{(\zeta_0)_{d}} F_0(d)\ar[rr]|{\sla}_-{(\hat{\zeta}_0)_{d}} && G_0(d)\ar@{=}[uu] 
}
\end{gathered}\qquad.
\end{flalign}
These globular $2$-cells exhibit the naturality of $\tau(\zeta)$
by passing to equivalence classes.
\end{description}
The assignment $\tau:  \PsOp^{\mathrm{fib}}\to \Op^{(2,1)}$ defined above is strictly $2$-functorial.
\end{constr}

\begin{proof}[Proof of Theorem \ref{theo:2adjunction}]
We have to exhibit a unit $\eta: \id_{\PsOp^{\mathrm{fib}}}\Rightarrow\iota\,\tau$ and a counit
$\epsilon: \tau\,\iota\Rightarrow\id_{\Op^{(2,1)}} $ for the $2$-adjunction \eqref{eqn:tauiotaadjunction}. 
Using the explicit expressions from Constructions~\ref{constr:iota} and~\ref{constr:tau},
one directly checks that $\tau\,\iota = \id_{\Op^{(2,1)}}$, so we will choose for the counit
$\epsilon\coloneqq \mathrm{Id}$ the identity $2$-natural transformation. In order to define
the unit, consider for each $\scrO\in\PsOp^{\mathrm{fib}}$ 
the component $\eta_{\scrO} : \scrO \Rightarrow\iota\,\tau(\scrO)$
pseudo-multifunctor which is defined
by the following data as in Definition \ref{def:pseudomultifunctor}:
\begin{itemize}
\item[(i)] The functor
\begin{flalign}
 (\eta_{\scrO})_0 \,: \scrO_0~&\longrightarrow~\iota\tau(\scrO)_0\quad,\\
\nn c~&\longmapsto~ c\quad,\\
\nn (g:c\to c^\prime)~&\longmapsto~[\hat{g} : c\hto c^\prime] \,=:\, \big([\hat{g}] : c\to c^\prime\big)\quad.
\end{flalign}

\item[(ii)] For each $n\in\bbN_0$, the functor
\begin{flalign}
 (\eta_{\scrO})_1^n \,: \scrO_1^n~&\longrightarrow~\iota\tau(\scrO)_1^n\quad,\\
\nn (\psi:\und{c} \hto d)~&\longmapsto~ [\psi :\und{c}\hto d]\,=:\, \big([\psi] : \und{c} \to d\big)\quad,\\
\nn \begin{gathered}
\xymatrix@R=0.25em@C=0.25em{
\und{c}^\prime \ar[rr]|{\sla}^-{\psi^\prime}&& d^\prime\\
&{\scriptstyle \alpha}~\rotatebox[origin=c]{90}{$\Rightarrow$}&\\
\ar[uu]^-{\und{g}} \und{c}  \ar[rr]|{\sla}_-{\psi}  && d\ar[uu]_-{h} 
}
\end{gathered}
~~&\longmapsto~~
\begin{gathered}
\xymatrix@R=0.75em@C=1em{
\und{c}^\prime \ar[rr]^-{[\psi^\prime]}&& d^\prime\\
&& \\
\ar[uu]^-{[\und{\hat{g}}]} \und{c}  \ar[rr]_-{[\psi]}  && d\ar[uu]_-{[\hat{h}]} 
}
\end{gathered}\quad.
\end{flalign}
Note that commutativity of the square in $\tau(\scrO)$ follows by composing the $2$-cell
$\alpha$ with the $2$-cells for the companions from Definition~\ref{def:fibrant} 
according to
\begin{flalign}
\begin{gathered}
\xymatrix@R=0.25em@C=0.5em{
\und{c} \ar[rr]|{\sla}^-{\und{\hat{g}}}&& \und{c}^\prime \ar[rr]|{\sla}^-{\psi^\prime}&& d^\prime \ar[rr]|{\sla}^-{u(d^\prime)}&& d^\prime \\
&~\rotatebox[origin=c]{90}{$\Rightarrow$}~&  &{\scriptstyle \alpha}~\rotatebox[origin=c]{90}{$\Rightarrow$}&  &~\rotatebox[origin=c]{90}{$\Rightarrow$}~&\\
\ar@{=}[uu] \und{c} \ar[rr]|{\sla}_-{u(\und{c})}&& \ar[uu]^-{\und{g}} \und{c}  \ar[rr]|{\sla}_-{\psi} 
 && d\ar[uu]_-{h} 
\ar[rr]|{\sla}_-{\hat{h}}&& d^\prime\ar@{=}[uu]
}
\end{gathered}
\end{flalign}
and then passing to equivalence classes.
\item[(iii)] The natural isomorphisms $(\eta_{\scrO})^{\cmp}$ and $(\eta_{\scrO})^u$ are the identities.
\end{itemize}
One checks that these components define a $2$-natural transformation
$\eta: \id_{\PsOp^{\mathrm{fib}}}\Rightarrow\iota\,\tau$. 
\sk

It remains to verify the triangle identities for the unit and counit. 
Using $\epsilon = \mathrm{Id}$, these identities
reduce to verifying that $\eta_{\iota(\O)}=\id$, for all ordinary operads $\O\in\Op^{(2,1)}$,
and that $\tau(\eta_{\scrO}) = \id$, for all fibrant pseudo-operads $\scrO\in\PsOp^{\mathrm{fib}}$.
These are elementary checks which follow from the explicit formulas above.
\end{proof}


\section{\label{app:Lorentzian}Lorentzian geometric details}
In this appendix we collect some technical results of a 
Lorentzian geometric nature which are required in this work.
We refer the reader to \cite{ONeill,BGP,Minguzzi} for the relevant terminology
and background in Lorentzian geometry. However, let us recall the following
standard notations which will be used in the proofs below: 
Given any time-oriented Lorentzian manifold $M$ and two points $p,q\in M$,
one writes $p\ll q$ if there exists a future-pointing timelike curve 
from $p$ to $q$ and $p<q$ if there exists a future-pointing causal curve from $p$ to $q$.
The symbol $p\leq q$ means that either $p=q$ or $p<q$.
\begin{lem}\label{lem:interior}
Consider any object $M\in\Loc_m$ and any causally convex subset $A\subseteq M$.
Then the interior $\mathrm{int}_M(A)\subseteq M$ is a causally convex open subset.
\end{lem}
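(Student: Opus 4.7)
Openness of $\mathrm{int}_M(A)$ is immediate, so the content is causal convexity. Fix $p,q\in\mathrm{int}_M(A)$ and a future-pointing causal curve $\gamma$ in $M$ from $p$ to $q$; for an arbitrary $r$ on $\gamma$ I need to produce an open neighborhood of $r$ contained in $A$. The plan is to sandwich $r$ between two timelike-related points of $A$ and use causal convexity of $A$ on the chronological diamond they span.

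Since $p,q\in\mathrm{int}_M(A)$, choose open neighborhoods $U_p\ni p$ and $U_q\ni q$ with $U_p,U_q\subseteq A$. Using that every point of a time-oriented Lorentzian manifold lies in the closure of its own chronological past and future (which can be seen from a convex normal neighborhood), I pick $p'\in U_p\cap I^-_M(p)$ and $q'\in U_q\cap I^+_M(q)$. In particular $p',q'\in A$ and $p'\ll p\leq r\leq q\ll q'$. Applying the standard push-up property of chronology over causality in a Lorentzian manifold then gives $p'\ll r\ll q'$, so $r\in I^+_M(p')\cap I^-_M(q')$, which is open.

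It remains to show $I^+_M(p')\cap I^-_M(q')\subseteq A$. For any $s$ in this intersection, choose future-pointing timelike curves from $p'$ to $s$ and from $s$ to $q'$; their concatenation is a future-pointing causal (in fact timelike) curve from $p'$ to $q'$ in $M$ passing through $s$. Since $p',q'\in A$ and $A$ is causally convex, this curve lies in $A$, hence $s\in A$. Thus $I^+_M(p')\cap I^-_M(q')$ is an open neighborhood of $r$ contained in $A$, proving $r\in\mathrm{int}_M(A)$.

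The argument is essentially mechanical once the idea of pushing $p$ slightly into the past and $q$ slightly into the future inside $A$ is in place; the only mild subtlety is invoking the push-up property to upgrade $p'\ll p\leq r\leq q\ll q'$ to $p'\ll r\ll q'$, which is standard in Lorentzian geometry and needs no additional structure beyond time-orientation. I do not foresee any real obstacle.
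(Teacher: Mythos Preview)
Your proof is correct and follows essentially the same approach as the paper: push the endpoints slightly into the past/future within $A$, invoke the push-up property to get $p'\ll r\ll q'$, and then use the chronological diamond $I^+_M(p')\cap I^-_M(q')$ together with causal convexity of $A$ as the desired open neighborhood. If anything your version is a bit more streamlined, since the paper takes an extra detour through an auxiliary timelike curve $\delta$ and a smaller sub-diamond, whereas you observe directly that the full diamond $I^+_M(p')\cap I^-_M(q')$ already lies in $A$.
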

\begin{proof}
Given any future-pointing causal curve $\gamma:[0,1]\to M$ with endpoints 
$\gamma(0),\gamma(1)\in  \mathrm{int}_M(A)$, causal convexity of 
$A\subseteq M$ implies that $\gamma(s)\in  A$, for all $s\in[0,1]$.
We will now exhibit, for each $s\in (0,1)$, an open neighborhood $U$ of $\gamma(s)$
which is contained in $A$. This implies that $\gamma(s)\in  \mathrm{int}_M(A)$
lies in the interior, for all $s\in[0,1]$, and hence $\mathrm{int}_M(A)\subseteq M$ is causally convex.
\sk

Since $\gamma(0),\gamma(1)\in  \mathrm{int}_M(A)$
are contained in an open subset, there exists by \cite[Chapter 14, Lemma 3]{ONeill} two points $p_-,p_+\in A$
with $p_-\ll \gamma(0)$ and $\gamma(1) \ll p_+$. 
Fixing any $s\in(0,1)$ and using \cite[Theorem 2.24]{Minguzzi}, we deduce from
$p_- \ll \gamma(0)\leq \gamma(s)\leq \gamma(1)\ll p_+$ that $p_-\ll \gamma(s) \ll p_+$.
This implies that there exists a future-pointing \textit{timelike} curve
$\delta : [0,1]\to M\,,~t\mapsto \delta(t)$ with $\delta(0) = \gamma(0)$,
$\delta(1) = \gamma(1)$ and $\delta(\lambda) = \gamma(s)$, for some $\lambda\in (0,1)$.
By using the causal convexity of $A\subseteq M$ once more,
it follows that $\delta(t)\in A$, for all $t\in[0,1]$.
Choosing any $\lambda_-,\lambda_+\in(0,1)$ such that $\lambda_-<\lambda<\lambda_+$,
the subset $U\coloneqq I^{+}_M\big(\delta(\lambda_-)\big)\cap I^{-}_M\big(\delta(\lambda_+)\big)\subseteq M$
is non-empty, because $\delta(\lambda)\in U$ is by construction a point, 
and it is open by \cite[Chapter 14, Lemma 3]{ONeill}.
From causal convexity of $A\subseteq M$
and $\delta(\lambda_-),\delta(\lambda_+)\in A$, it follows that $U\subseteq A$,
hence $U$ provides an open neighborhood of $\gamma(s)=\delta(\lambda)$ which is contained in $A$.
\end{proof}

\begin{lem}\label{lem:IandJ}
Consider any object $M\in\Loc_m$ and any subset $A\subseteq M$. Then
\begin{subequations}
\begin{flalign}\label{eqn:IMA}
I_M^\pm(A)\,\subseteq\, M
\end{flalign}
and 
\begin{flalign}\label{eqn:JMA}
M\setminus \mathrm{cl}_M\big(J^\pm_M(A)\big)\,\subseteq\,M
\end{flalign}
\end{subequations}
are causally convex open subsets.
\end{lem}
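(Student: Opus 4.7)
The plan is to handle the chronological statement \eqref{eqn:IMA} and the causal statement \eqref{eqn:JMA} separately, treating in each case only the $+$ variant since the $-$ variant is entirely symmetric. Openness of $I^+_M(A)$ is a classical property of chronological futures (cf.\ \cite[Chapter 14, Lemma 3]{ONeill}), and openness of $M\setminus\mathrm{cl}_M\big(J^+_M(A)\big)$ is immediate from it being the complement of a closed subset. The content of the lemma therefore lies in establishing causal convexity in both cases.

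For $I^+_M(A)$, I would reduce to the standard push-up property of the causal relation (see e.g.\ \cite[Theorem 2.24]{Minguzzi}): given a future-pointing causal curve $\gamma:[0,1]\to M$ with endpoints in $I^+_M(A)$ and any $s\in(0,1)$, the hypothesis $\gamma(0)\in I^+_M(A)$ yields some $a\in A$ with $a\ll\gamma(0)$, and combining this with $\gamma(0)\leq\gamma(s)$ gives $a\ll\gamma(s)$, whence $\gamma(s)\in I^+_M(A)$.

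For $M\setminus\mathrm{cl}_M\big(J^+_M(A)\big)$, the key step is to show that $\mathrm{cl}_M\big(J^+_M(A)\big)$ is \emph{future-stable}, i.e.\ $q\in\mathrm{cl}_M\big(J^+_M(A)\big)$ and $q\leq r$ imply $r\in\mathrm{cl}_M\big(J^+_M(A)\big)$. I would prove this by taking an arbitrary open neighbourhood $V$ of $r$, selecting $r'\in V$ with $r\ll r'$ (which exists because $I^+_M(r)$ accumulates at $r$ and is open), and then using openness of $I^-_M(r')$ to force any approximating sequence $q_n\in J^+_M(A)$ with $q_n\to q$ to eventually satisfy $q_n\ll r'$; future-stability of $J^+_M(A)$ itself (a direct consequence of transitivity of $\leq$) then yields $r'\in J^+_M(A)\cap V$, establishing $r\in\mathrm{cl}_M\big(J^+_M(A)\big)$. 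Causal convexity of $M\setminus\mathrm{cl}_M\big(J^+_M(A)\big)$ follows at once: an interior point of a future-pointing causal curve lying in $\mathrm{cl}_M\big(J^+_M(A)\big)$ would drag the upper endpoint into this set by future-stability, contradicting the assumption on the endpoints. The only mild obstacle I anticipate is organising this future-stability argument cleanly, in particular justifying the existence of $r'\in V$ with $r\ll r'$ and linking it to the approximating sequence via openness of $I^-_M(r')$; everything else is a direct application of the standard machinery of Lorentzian causality.
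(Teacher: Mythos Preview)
Your proof is correct, but it takes a different route from the paper's. The paper first reduces both statements to a single common step via Lemma~\ref{lem:interior} (the interior of a causally convex subset is causally convex): writing $I^\pm_M(A)=\mathrm{int}_M\big(J^\pm_M(A)\big)$ and $M\setminus\mathrm{cl}_M\big(J^\pm_M(A)\big)=\mathrm{int}_M\big(M\setminus J^\pm_M(A)\big)$, it suffices to check that $J^\pm_M(A)$ and $M\setminus J^\pm_M(A)$ are causally convex, and both of these follow immediately from transitivity of $\leq$ without any approximation or closure arguments. Your approach instead handles $I^+_M(A)$ directly via push-up (which is essentially immediate) and then proves that $\mathrm{cl}_M\big(J^+_M(A)\big)$ is future-stable by an approximation argument involving the openness of $I^-_M(r')$ and a point $r'\gg r$ in a given neighbourhood. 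Your argument is self-contained and does not rely on the auxiliary Lemma~\ref{lem:interior}; the paper's argument is more modular and avoids the closure/approximation step entirely, at the cost of isolating that general lemma first. Both are valid and of comparable length.
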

\begin{proof}
We begin by observing that
\begin{flalign}
I_M^\pm(A)\,=\,\mathrm{int}_M\big(J_M^\pm(A)\big)\quad,\qquad
M\setminus \mathrm{cl}_M\big(J^\pm_M(A)\big) 
\,=\, \mathrm{int}_{M}\Big(M\setminus J^\pm_M(A)\Big)\quad,
\end{flalign}
where the first equality follows from \cite[Chapter 14, Lemma 6]{ONeill}.
Using Lemma \ref{lem:interior}, it therefore suffices to show that
$J_M^\pm(A)\subseteq M$ and $M\setminus J^\pm_M(A)\subseteq M$ are causally convex.
This is evident for $J_M^\pm(A) \subseteq M$, by definition of the causal future/past,
so it remains to show that $M\setminus J^\pm_M(A)\subseteq M$ is causally convex.
It suffices to consider the case of removing the causal future $M\setminus J^+_M(A)\subseteq M$
since the other case then follows by reversing the time-orientation.
\sk

Let $\gamma:[0,1]\to  M$ be a future-pointing causal curve 
with endpoint $\gamma(1)\in M\setminus J^+_M(A)$, 
i.e.\ $\gamma(1)\not\in J^+_M(A)$. Then $\gamma(s)\not\in J^+_M(A)$, for all $s\leq 1$,
since otherwise there exists $p\in A$ with $p\leq \gamma(s)$,
which implies that $p\leq \gamma(1)$ as a consequence of $\gamma(s)\leq \gamma(1)$, a contradiction with
$\gamma(1)\not\in J^+_M(A)$. Hence, $\gamma(s)\in M\setminus J^+_M(A)$, for all $s\in[0,1]$, which shows
causal convexity of the subset $M\setminus J^+_M(A)\subseteq M$.
\end{proof}

\begin{lem}\label{lem:surfaces}
Consider any positive integer $n\in\bbN$, any object $M\in \Loc_m$ 
and any family of Cauchy surfaces $\Sigma_i\subset M$,
for $i=1,\dots,n$. Then there exists another
Cauchy surface $\Sigma\subset M$ of $M$ 
which is contained in the common chronological future/past 
of this family of Cauchy surfaces, i.e.\ $\Sigma\subset \bigcap_{i=1}^n I^\pm_M(\Sigma_i)$.
This implies that the subset inclusion $\bigcap_{i=1}^n I^\pm_M(\Sigma_i)\subseteq M$ is Cauchy.
\end{lem}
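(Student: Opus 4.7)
The plan is to associate to each Cauchy surface $\Sigma_i$ a Cauchy temporal function $t_i:M\to\bbR$ with $\Sigma_i = t_i^{-1}(0)$ via the Bernal--S\'anchez theorem \cite{BernalSanchez}, and then to combine these functions via a pointwise minimum to build a continuous Cauchy time function whose level sets will provide the desired Cauchy surface. It suffices to treat the ``$+$'' case, since the ``$-$'' case follows by reversing the time-orientation.

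The first ingredient I would record is the identification
\begin{flalign}
I^+_M(\Sigma_i)\,=\,\{p\in M \,:\, t_i(p) > 0\}\quad .
\end{flalign}
The inclusion ``$\subseteq$'' is immediate from strict monotonicity of $t_i$ along future-directed causal curves, while ``$\supseteq$'' follows by running an inextendible past-directed timelike curve through any $p$ with $t_i(p) > 0$ down to its unique intersection with $\Sigma_i = t_i^{-1}(0)$, which exists because $t_i$ is Cauchy. The central step is then to verify that the pointwise minimum $T := \min_{i=1}^n t_i : M\to\bbR$ is a Cauchy time function, i.e.\ continuous, strictly increasing along every future-directed causal curve, and surjective onto $\bbR$ along every inextendible future-directed timelike curve. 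Strict monotonicity at $p < q$ follows from $t_i(q) > t_i(p)\geq T(p)$ for every $i$, which forces $T(q) > T(p)$, since a hypothetical equality $t_{i_0}(q) = T(p)$ would contradict $t_{i_0}(q) > t_{i_0}(p) \geq T(p)$. Surjectivity holds because each $t_i$ individually diverges to $\pm\infty$ at the two ends of an inextendible timelike curve, and hence so does their minimum.

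It will then be routine to finish: $\Sigma := T^{-1}(\{1\})$ is achronal (by strict monotonicity along timelike curves) and is met by every inextendible timelike curve (by continuity, surjectivity, and strict monotonicity), hence is a Cauchy surface of $M$. Moreover every $p\in\Sigma$ satisfies $t_i(p)\geq T(p) = 1 > 0$ for every $i$, so $\Sigma\subset \bigcap_{i=1}^n I^+_M(\Sigma_i)$ via the identification above. The closing claim that the inclusion $\bigcap_{i=1}^n I^\pm_M(\Sigma_i)\hookrightarrow M$ is Cauchy is then immediate, since this subset is open (intersection of opens) and causally convex (combining Lemma \ref{lem:IandJ} with stability of causal convexity under intersection), and contains the Cauchy surface just constructed. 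The only subtlety I anticipate is the min-of-temporal-functions step: because $T = \min_i t_i$ is only continuous, not smooth, one cannot invoke a direct gradient argument, and the uniform-strict-inequality calculation sketched above must be carried out by hand to preserve strict monotonicity along causal curves.
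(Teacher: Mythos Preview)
Your argument is correct and takes a genuinely different route from the paper. The paper argues more abstractly: it observes that $\bigcap_i I^+_M(\Sigma_i)$ is causally convex and open, hence itself globally hyperbolic, picks \emph{any} Cauchy surface $\Sigma$ of this smaller spacetime, and then checks directly that $\Sigma$ is also a Cauchy surface of $M$ by restricting an inextendible timelike curve in $M$ to the portion past all the intersection parameters $s_i$ with the $\Sigma_i$. Your approach is more constructive, producing an explicit $\Sigma = T^{-1}(1)$ via the pointwise minimum of temporal functions; the paper's is more elementary, avoiding the Bernal--S\'anchez machinery entirely and working purely at the level of curves. One caveat worth flagging: the statement that a \emph{prescribed} Cauchy surface arises as the zero level set of a Cauchy temporal function is in \cite{BernalSanchez2} rather than \cite{BernalSanchez}, and it requires the Cauchy surface to be smooth and spacelike, whereas the lemma as stated allows arbitrary (topological) Cauchy surfaces. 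This is easily patched---for instance by first replacing each $\Sigma_i$ with a smooth spacelike Cauchy surface lying in $I^+_M(\Sigma_i)$, which exists by the same reference---but the paper's curve-based argument sidesteps the issue altogether.
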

\begin{proof}
It suffices to consider the case of the chronological future $I^+_M$
since the other case follows by reversing the time-orientation.
Since $\bigcap_{i=1}^n I^+_M(\Sigma_i)\subseteq M$ is a finite intersection of
causally convex open subsets, it is causally convex and open, hence globally hyperbolic. 
We choose any Cauchy surface $\Sigma\subset \bigcap_{i=1}^n I^+_M(\Sigma_i)$
and show that this defines a Cauchy surface $\Sigma \subset M$ of $M$.
Given any inextensible future-pointing timelike curve $\gamma : \bbR\to M$, 
it intersects the $i$-th Cauchy surface $\Sigma_i\subset M$ exactly once, say at $s_i\in \bbR$, 
for each $i=1,\dots,n$. The restriction
$\gamma\vert : \big(\max\{s_i\}_{i=1,\dots,n},\infty\big) \to \bigcap_{i=1}^n I^+_M(\Sigma_i) $ 
is an inextensible timelike curve in $\bigcap_{i=1}^n I^+_M(\Sigma_i)$, 
so it intersects the Cauchy surface $\Sigma\subset \bigcap_{i=1}^n I^+_M(\Sigma_i)$
exactly once. This implies that any inextensible future-pointing timelike curve
$\gamma : \bbR\to M$ intersects $\Sigma\subset M$ exactly once,
hence $\Sigma\subset M$ is a Cauchy surface of $M$.
\end{proof}

\begin{propo}\label{prop:bordismgluingregions}
Let $(N,\und{\iota_{0}},\iota_1) : (\und{M_0},\und{\Sigma_0})\hto (M_1,\Sigma_1)$
be any $n$-ary operation in the globally hyperbolic Lorentzian bordism
pseudo-operad $\LBop_m$ from Section \ref{sec:LBord}, represented by the zig-zags
\begin{flalign}
\xymatrix{
\und{M_0} & \ar[l]_-{\subseteq} \und{V_0}  \ar[r]^-{\und{\iota_0}} & N &\ar[l]_-{\iota_1} V_1 \ar[r]^-{\subseteq} & M_1   
}
\end{flalign}
of operations in the operad $\P_{\Loc_m^\perp}$ from Definition \ref{def:PLoc}.
\begin{itemize}
\item[(a)] Let $U_{i}\subseteq V_{0_i}\subseteq M_{0_i}$ be any family of
causally convex open subsets which contain the Cauchy surfaces $\Sigma_{0_i}\subset U_{i}$,
for all $i=1,\dots,n$. Then 
\begin{flalign}\label{eqn:unionTMP}
\Bigg(N \setminus \mathrm{cl}_{N}\bigg(\bigcup_{i=1}^n J^-_{N}\big(\iota_{0_i}(\Sigma_{0_i})\big)\bigg)\Bigg) \cup 
\bigcup_{i=1}^n \iota_{0_i}(U_{i})\,\subseteq\,N
\end{flalign}
is a causally convex open subset which contains the images 
$\bigcup_{i=1}^n \iota_{0_i}(\Sigma_{0_i}) \cup \iota_1(\Sigma_1)
\subset N$ of the Cauchy surfaces.

\item[(b)] Let $U\subseteq V_1\subseteq M_1$ be any causally convex open subset which 
contains the Cauchy surface $\Sigma_1\subset U$. Then
\begin{flalign}
J_{N}^-\big(\iota_{1}(U)\big)\,\subseteq\, N
\end{flalign}
is a causally convex open subset which contains the images 
$\bigcup_{i=1}^n \iota_{0_i}(\Sigma_{0_i}) \cup \iota_1(\Sigma_1)
\subset N$ of the Cauchy surfaces.
\end{itemize}
\end{propo}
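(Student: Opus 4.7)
For \textbf{part (b)}, the strategy will be to compute $J^-_N(\iota_1(U))$ explicitly as $I^-_N(\iota_1(\Sigma_1))\cup\iota_1(U)$, exhibiting it as a union of two causally convex open subsets. The key input is that $\iota_1(\Sigma_1)$ is a Cauchy surface of $N$ (because $\iota_1$ is a Cauchy morphism and $\Sigma_1$ is a Cauchy surface of $V_1$; cf.\ \cite[Lemma~3.4]{FFTvAQFT}), which partitions $N$ into past, present and future. The inclusion $\supseteq$ is immediate, and for the reverse I would argue that a point $p\in J^-_N(\iota_1(U))\cap I^+_N(\iota_1(\Sigma_1))$ sits between some $r\in\iota_1(\Sigma_1)\subseteq\iota_1(U)$ (with $r\ll p$) and some $q\in\iota_1(U)$ (with $p\leq q$), hence lies in $\iota_1(U)$ by causal convexity of the latter. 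Openness, causal convexity (via Lemma~\ref{lem:IandJ} plus a short case analysis using acausality of $\iota_1(\Sigma_1)$), and containment of the Cauchy surface images (via \eqref{eqn:bordismsurfacesI}/\eqref{eqn:bordismsurfacesII}) should then all follow directly from this decomposition.

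For \textbf{part (a)}, I would set $A := N\setminus\mathrm{cl}_N\bigl(\bigcup_i J^-_N(\iota_{0_i}(\Sigma_{0_i}))\bigr)$ and $B_i := \iota_{0_i}(U_i)$. Openness of $A\cup\bigcup_i B_i$ is immediate; $A$ is causally convex by Lemma~\ref{lem:IandJ}, and each $B_i$ is causally convex in $N$ as the image of a causally convex open in $V_{0_i}$ under the $\Loc_m$-morphism $\iota_{0_i}$. To prove causal convexity of the union, I would perform a case analysis on the endpoints of a future-directed causal curve $\gamma$: endpoints in two distinct $B_i,B_j$ with $i\neq j$ are ruled out by the causal disjointness $\iota_{0_i}\perp\iota_{0_j}$, and the remaining mixed cases would rely on the downward-closure of $J^-_N(S)$ (for $S:=\bigcup_i\iota_{0_i}(\Sigma_{0_i})$), which makes $A$ upward-stable in the causal order, combined with causal convexity of $\iota_{0_i}(V_{0_i})$. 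Containment of $\iota_{0_i}(\Sigma_{0_i})\subseteq B_i$ holds by hypothesis; for $\iota_1(\Sigma_1)$, the Cauchy case ($n=1$, $\iota_0$ Cauchy) is handled by observing that $\iota_0(\Sigma_0)$ is itself a Cauchy surface of $N$, so $A$ reduces to $I^+_N(\iota_0(\Sigma_0))$, and condition \eqref{eqn:bordismsurfacesI} together with acausality of $\iota_1(\Sigma_1)$ gives $\iota_1(\Sigma_1)\setminus\iota_0(\Sigma_0)\subseteq A$ while $\iota_1(\Sigma_1)\cap\iota_0(\Sigma_0)\subseteq\iota_0(U)=B$.

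The subtle step is the non-Cauchy case of containment, where I must show $\iota_1(\Sigma_1)\cap\mathrm{cl}_N(J^-_N(S))=\emptyset$ using \eqref{eqn:bordismsurfacesII}. My plan is to argue by contradiction: suppose $p\in\iota_1(\Sigma_1)$ lies in this closure, so $p_n\to p$ with $p_n\leq q_n\in S$. For every $q\ll p$, one has $q\ll p_n\leq q_n$ eventually, so a diagonal argument will produce a sequence $q_k\to p$ from below with $q_k\ll q_{n_k}$; if $q_{n_k}$ admits a convergent subsequence $q_{n_{k_j}}\to q^*$, closed causality yields $q^*\geq p$, while $q^*\in\mathrm{cl}_N(S)\subset I^-_N(\iota_1(\Sigma_1))$ yields $p\ll r^*$ for some $r^*\in\iota_1(\Sigma_1)$, contradicting acausality. \textbf{The main obstacle} will be ruling out that $q_{n_k}$ escapes to infinity. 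I expect this to follow from the fact that the causal diamonds $J^+(p_n)\cap J^-(\iota_1(\Sigma_1))$ must shrink to the single point $J^+(p)\cap J^-(\iota_1(\Sigma_1))=\{p\}$ as $p_n\to p$ (again by acausality of $\iota_1(\Sigma_1)$), which I anticipate formalizing via a Cauchy time function $t$ on $N$ that traps the $q_n$ in a thin slab between $t^{-1}(t(p_n))$ and $t^{-1}(T)=\iota_1(\Sigma_1)$, exploiting the local compactness of causal diamonds in globally hyperbolic spacetimes.
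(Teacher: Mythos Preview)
Your proposal is correct, but it takes a considerably more hands-on route than the paper at almost every step.

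For part (b), the paper's argument is a one-liner: since $\iota_1(U)\subseteq N$ is open, the standard identity $J^-_N(O)=I^-_N(O)$ for open subsets $O$ (this is \cite[Chapter~14, Corollary~1]{ONeill}) gives $J^-_N(\iota_1(U))=I^-_N(\iota_1(U))$, which is causally convex open by Lemma~\ref{lem:IandJ}. Your decomposition $J^-_N(\iota_1(U))=I^-_N(\iota_1(\Sigma_1))\cup\iota_1(U)$ is correct and perhaps more illuminating geometrically, but it requires you to prove causal convexity of that union by a separate case analysis, whereas the paper gets it for free.

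For the causal convexity in part (a), the paper avoids your case analysis entirely by invoking \cite[Lemma~B.1]{HKstack}: it exhibits the Cauchy inclusion $\bigcup_i\iota_{0_i}(I^+_{U_i}(\Sigma_{0_i}))\subseteq\bigcup_i\iota_{0_i}(U_i)$, and that lemma then guarantees that the union of the two causally convex opens is again causally convex. Your direct approach via the future-set property of $A$ is viable, but the mixed case ``$\gamma(0)\in B_i\setminus A$, $\gamma(1)\in A$'' will need a bit more than you have written: you must show that the portion of $\gamma$ lying in $\mathrm{cl}_N(J^-_N(S))$ stays inside $B_i$, and for this it helps to compute $\mathrm{cl}_N(J^-_N(S))\cap\iota_{0_i}(V_{0_i})=\iota_{0_i}(J^-_{V_{0_i}}(\Sigma_{0_i}))$ (using causal disjointness and that $\Sigma_{0_i}$ is Cauchy in $V_{0_i}$), then argue inside $V_{0_i}$.

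For the containment $\iota_1(\Sigma_1)\subseteq A$ in the non-Cauchy case, you are right that this needs an argument; the paper is terse here. Your plan is on target but can be simplified substantially: once you have $p_n\to p$ with $p_n\leq q_n\in S\subset J^-_N(\iota_1(\Sigma_1))$, pick a compact neighbourhood $K\ni p$, so that eventually $q_n\in J^+_N(K)\cap J^-_N(\iota_1(\Sigma_1))$. This set is compact because $J^+_N(K)\cap\iota_1(\Sigma_1)$ is compact (a standard fact for compact $K$ and Cauchy surfaces in globally hyperbolic spacetimes) and then $J^+_N(K)\cap J^-_N(\iota_1(\Sigma_1))\subseteq J^+_N(K)\cap J^-_N\big(J^+_N(K)\cap\iota_1(\Sigma_1)\big)$. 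A convergent subsequence $q_{n_k}\to q^*\in\mathrm{cl}_N(S)\subset I^-_N(\iota_1(\Sigma_1))$ together with closedness of $\leq$ gives $p\leq q^*\ll\iota_1(\Sigma_1)$, contradicting achronality. No time functions or diagonal arguments are needed.
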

\begin{proof}
Let us start with the simpler item (b). Since $U\subseteq V_1\subseteq M_1$
is causally convex open, we have that $\iota_{1}(U)\subseteq N$ is causally convex open.
Then \cite[Chapter 14, Corollary 1]{ONeill} implies that 
$J_{N}^-\big(\iota_{1}(U)\big) = I_{N}^-\big(\iota_{1}(U)\big)\subseteq N$, which is a
causally convex open subset by Lemma \ref{lem:IandJ}. The statement
about the images of the Cauchy surfaces follows from the hypothesis that
$\Sigma_1\subset U$ and the conditions in \eqref{eqn:bordismsurfacesI} and \eqref{eqn:bordismsurfacesII}.
\sk

To show item (a), let us first observe that 
\begin{flalign}
N \setminus \mathrm{cl}_{N}\bigg(\bigcup_{i=1}^n J^-_{N}\big(\iota_{0_i}(\Sigma_{0_i})\big)\bigg)\,=\,
N \setminus \mathrm{cl}_{N}\bigg(J^-_{N}\bigg(\bigcup_{i=1}^n \iota_{0_i}(\Sigma_{0_i})\bigg)\bigg)\,\subseteq \,N
\end{flalign}
is causally convex open by Lemma \ref{lem:IandJ} and that 
$\bigcup_{i=1}^n \iota_{0_i}(U_{i})\subseteq N$ is causally convex open
because each $\iota_{0_i}(U_{i})\subseteq N$ is causally convex open and
$\iota_{0_i}(U_{i})\perp \iota_{0_j}(U_{j})$ are causally disjoint in $N$, for all $i\neq j$.
To show that also the union \eqref{eqn:unionTMP} of these two subsets
is causally convex open, consider the causally convex open subset
$\bigcup_{i=1}^n \iota_{0_i}\big(I^+_{U_{i}}(\Sigma_{0_i})\big)\subseteq N$
which is contained in the intersection of the two subsets.
Since the inclusion $\bigcup_{i=1}^n \iota_{0_i}\big(I^+_{U_{i}}(\Sigma_{0_i})\big)\subseteq 
\bigcup_{i=1}^n \iota_{0_i}(U_{i})$ is Cauchy, it then follows from the argument
in \cite[Lemma B.1]{HKstack} that \eqref{eqn:unionTMP} is causally convex open.
The statement about the images of the Cauchy surfaces follows from the hypothesis that
$\Sigma_{0_i}\subset U_i$, for all $i=1,\dots,n$,
and the conditions in \eqref{eqn:bordismsurfacesI} and \eqref{eqn:bordismsurfacesII}.
\end{proof}

\begin{propo}\label{prop:pushout} 
The pushout $N_0^- \sqcup_{V_{01}\cap V_{10}}^{~} N_1^+$ 
in \eqref{eqn:bordismcomposition} exists as an object in $\Loc_m$.
\end{propo}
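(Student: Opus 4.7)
The plan is to first construct the pushout $P \coloneqq N_0^- \sqcup_{V_{01}\cap V_{10}}^{~} N_1^+$ as the set-theoretic quotient of the disjoint union $N_0^- \sqcup N_1^+$ by the equivalence relation identifying $\iota_{01}(x)\in N_0^-$ with $\iota_{10}(x)\in N_1^+$ for all $x\in V_{01}\cap V_{10}$, and then to equip it with the structure of an object in $\Loc_m$. By Proposition \ref{prop:bordismgluingregions}, the subsets $N_0^-$ and $N_1^+$ are causally convex and open in their ambient spacetimes, and the two inclusions $V_{01}\cap V_{10} \hookrightarrow N_0^-$ and $V_{01}\cap V_{10} \hookrightarrow N_1^+$ are isomorphisms onto causally convex open subsets. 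Standard gluing for manifolds then shows that $P$ is a Hausdorff, second-countable, smooth manifold of dimension $m$; moreover, since the Cauchy morphisms $\iota_{01}$ and $\iota_{10}$ are orientation-preserving and time-orientation-preserving isometries by definition of $\Loc_m$, the Lorentzian metric, orientation and time-orientation of $N_0^-$ and $N_1^+$ agree on the gluing overlap and so descend to $P$.

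The main obstacle is verifying global hyperbolicity of $P$. The strategy is to exhibit a Cauchy surface: the conditions \eqref{eqn:bordismsurfacesI} and \eqref{eqn:bordismsurfacesII}, together with the definition \eqref{eqn:overhangingremoved1} of $N_1^+$ (cf.\ Remark \ref{rem:CauchyvsnonCauchy}), guarantee that $\iota_{11}(\Sigma_2)\subset N_1^+$, so $\Sigma_2$ descends via the canonical map $N_1^+\to P$ to a smooth spacelike hypersurface $\widetilde{\Sigma}_2\subset P$. I would then show that $\widetilde{\Sigma}_2$ is a Cauchy surface of $P$ by verifying that every inextensible timelike curve $\gamma$ in $P$ meets $\widetilde{\Sigma}_2$ exactly once. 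For curves lying entirely in the image of $N_1^+$ this follows by viewing $\gamma$ as an inextensible timelike curve in the causally convex open subset $N_1^+\subseteq N_1$ together with the fact that $\Sigma_2$ is a Cauchy surface of the globally hyperbolic ambient $N_1$, in the spirit of Lemma \ref{lem:surfaces}. For curves that pass through some $N_{0_i}^-$, the defining formula $N_{0_i}^- = J_{N_{0_i}}^-\big(\iota_{01_i}(V_{01_i}\cap V_{10_i})\big)$ forces the curve, once it crosses into the gluing collar, to continue into $N_1^+$ without returning; the previous case then applies. The asymmetric past-trimming in \eqref{eqn:overhangingremoved2} and the chronological condition \eqref{eqn:bordismsurfacesII} are precisely what is needed to rule out pathological re-entries and to ensure uniqueness of intersection.

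Finally, the universal property is straightforward: any cocone $(N_0^- \to R, \, N_1^+ \to R)$ in $\Loc_m$ whose legs agree on $V_{01}\cap V_{10}$ induces, from the set-theoretic universal property of the pushout, a unique map $P\to R$, which is automatically smooth, orientation-preserving, time-orientation-preserving and isometric since the two legs are. Causal convexity of its image in $R$ is inherited from the causal convexity of the images of the two legs, together with the structure of the trimmed collars in \eqref{eqn:overhangingremoved}. The hard part is therefore the Cauchy surface argument outlined above; once that is in place, the rest of the proof assembles from the gluing analysis and Proposition \ref{prop:bordismgluingregions}.
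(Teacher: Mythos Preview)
Your overall plan matches the paper's approach---construct the underlying smooth object first, transfer the Lorentzian structure, and then verify global hyperbolicity by exhibiting $\iota_{11}(\Sigma_2)$ as a Cauchy surface---but there is a genuine gap in the first step. You write that ``standard gluing for manifolds then shows that $P$ is a Hausdorff, second-countable, smooth manifold,'' but this is false in general: gluing two manifolds along open embeddings always produces a locally Euclidean, second-countable space with a smooth atlas, yet \emph{Hausdorffness can fail} (the line with two origins is the standard counterexample). This is precisely the delicate point the paper isolates. The paper invokes the criterion of \cite[Lemma~2.23]{StolzTeichner}: the pushout $X\sqcup_U Y$ of open embeddings $X\stackrel{f}{\leftarrow}U\stackrel{g}{\to}Y$ is a (Hausdorff) manifold if and only if the image of $(f,g):U\to X\times Y$ is closed, which the paper reformulates as requiring that $\partial_X(f(U))\times\partial_Y(g(U))$ contain no boundary points of $(f,g)(U)$. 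The actual verification then uses the specific form of the trimmed collars \eqref{eqn:overhangingremoved}: the boundary of $\iota_{10}(V_{01}\cap V_{10})$ in $N_1^+$ lies in the ``future'' part $N_1\setminus\mathrm{cl}_{N_1}\big(\bigcup_i J^-_{N_1}(\iota_{10_i}(\Sigma_{1_i}))\big)$, while the boundary of $\iota_{01}(V_{01}\cap V_{10})$ in $N_0^-$ lies in the chronological past $I^-_{N_0^-}\big(\iota_{01}(V_{01}\cap V_{10})\big)$, and these have disjoint preimages in $V_{01}\cap V_{10}$ (separated by the Cauchy surfaces of the overlap). Without this argument your proof is incomplete.

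Your global hyperbolicity sketch is essentially the paper's, though the paper states it more tersely. Your discussion of the universal property in $\Loc_m$ goes beyond what the paper proves: the paper only verifies that the manifold pushout carries the structure of an object in $\Loc_m$, and does not check that arbitrary cocones in $\Loc_m$ factor through it with causally convex image. For the purposes of Section~\ref{sec:LBord} only the explicit object and the two canonical maps $\iota_-,\iota_+$ are needed, so this extra discussion is not required (and your claim that causal convexity of the image in $R$ is ``inherited'' would itself need justification).
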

\begin{proof}
First, we have to show that this pushout is a manifold, which we will do by verifying
the criterion from \cite[Lemma 2.23]{StolzTeichner}. This criterion states that for 
a cospan $X \stackrel{f}{\leftarrow} U \stackrel{g}{\to} Y$ of open embeddings
of manifolds, the pushout $X\sqcup_{U}Y$ of topological spaces is canonically a 
manifold if and only if the image of the map $(f,g) : U\to X\times Y$ is a closed subset, i.e.\
$(f,g)(U)\subseteq X\times Y$ contains all its boundary points. 
Note that the set of boundary points $\partial_{X\times Y} (f,g)(U)\subseteq X\times Y$
must be contained in the closed subset $\mathrm{cl}_X(f(U))\times \mathrm{cl}_Y(g(U))\subseteq X\times Y$.
Recalling that $f$ and $g$ are open embeddings, one
checks by using suitable open neighborhoods of points in $U$
that the set of boundary points in $f(U)\times g(U)$ is precisely the image of
$(f,g) : U\to X\times Y$ and that there do not exist boundary points in $\partial_X(f(U))\times g(U)$
and in $f(U)\times \partial_Y(g(U))$.
Hence, the criterion in \cite[Lemma 2.23]{StolzTeichner}
for the pushout $X\sqcup_{U}Y$ to be a manifold is equivalent to verifying
that the subset $\partial_X(f(U))\times \partial_Y(g(U))\subseteq X\times Y$ does not contain any boundary 
points of the image $(f,g)(U)\subseteq X\times Y$.
\sk

Using this criterion, we can now verify that $N_0^- \sqcup_{V_{01}\cap V_{10}}^{~} N_1^+$ 
is a manifold. By construction of the subsets $N_1^+\subseteq N_1$ and
$N_0^-\subseteq N_0$ in \eqref{eqn:overhangingremoved}, it follows that the sets of boundary points
\begin{subequations}
\begin{flalign}
\partial_{N_1^+}\Big(\iota_{10}\big(V_{01}\cap V_{10}\big)\Big) \,\subseteq\, 
N_1 \setminus \mathrm{cl}_{N_1}\bigg(\bigcup_{i=1}^n J^-_{N_1}\big(\iota_{10_i}(\Sigma_{1_i})\big)\bigg)\,\subseteq \,N_1^+
\end{flalign}
and 
\begin{flalign}
\partial_{N_0^-}\Big(\iota_{01}\big(V_{01}\cap V_{10}\big)\Big) \,\subseteq\, 
I^-_{N_0^-}\Big(\iota_{01}\big(V_{01}\cap V_{10}\big)\Big)\,\subseteq\, N_0^-
\end{flalign}
\end{subequations}
are contained in open subsets whose preimages
under $N_0^-\stackrel{\iota_{01}}{\leftarrow} V_{01}\cap V_{10} 
\stackrel{\iota_{10}}{\to} N_1^+$ in  $V_{01}\cap V_{10}$ are disjoint. 
(For a pictorial visualization see \eqref{eqn:bordismremoved} and note that the Cauchy surfaces 
of the gray regions separate the two preimages.) 
This implies that $\partial_{N_0^-}\big(\iota_{01}\big(V_{01}\cap V_{10}\big)\big) \times 
\partial_{N_1^+}\big(\iota_{10}\big(V_{01}\cap V_{10}\big)\big)$ does not contain
any boundary points of the image $(\iota_{01},\iota_{10})\big(V_{01}\cap V_{10}\big)\subseteq N_0^-\times N_1^+$
and hence $N_0^- \sqcup_{V_{01}\cap V_{10}}^{~} N_1^+$  is a manifold.
\sk

Since the maps in the pushout are $\Loc_m$-morphisms, the manifold
$N_0^- \sqcup_{V_{01}\cap V_{10}}^{~} N_1^+$ can be endowed with an
orientation, a time-orientation and a Lorentzian metric which are canonically induced from
the ones of the objects $N_0^-\in\Loc_m$ and $N_1^+\in\Loc_m$. Global hyperbolicity
follows from the observation that the subset $\iota_+\iota_{11}(\Sigma_{2})\subseteq
N_0^- \sqcup_{V_{01}\cap V_{10}}^{~} N_1^+$ which is obtained from the Cauchy surface
$\iota_{11}(\Sigma_{2})\subset N_1^+$ is met exactly once by every inextensible
future-pointing timelike curve, hence it defines a Cauchy surface for $N_0^- \sqcup_{V_{01}\cap V_{10}}^{~} N_1^+$.
\end{proof}

\begin{propo}\label{prop:RCMSigmafiltered}
For each object $(M,\Sigma)\in\tau(\LBop_m)$, the 
category $\RC_{(M,\Sigma)}$ from Definition~\ref{def:RCMSigma} is filtered.
\end{propo}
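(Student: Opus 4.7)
Since the category $\RC_{(M,\Sigma)}$ has at most one morphism between any two objects, filteredness reduces to showing that $\RC_{(M,\Sigma)}$ is non-empty and that any two objects admit a common upper bound. Non-emptiness follows from a standard construction: by Lemma~\ref{lem:IandJ} the chronological past $I^-_M(\Sigma)\subseteq M$ is a causally convex open subset and hence carries the structure of an object of $\Loc_m$, so for any point $p\in I^-_M(\Sigma)$ one can find a sufficiently small relatively compact causally convex globally hyperbolic open neighborhood $U\subseteq I^-_M(\Sigma)$ of $p$, together with an arbitrary Cauchy surface $S\subset U$, producing an object $(U,S)\in\RC_{(M,\Sigma)}$.

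For the existence of upper bounds, given $(U_1,S_1),(U_2,S_2)\in\RC_{(M,\Sigma)}$, the plan is to first enlarge the collars and then pick a late-enough Cauchy surface. Applying the filteredness of the AQFT-side category $\RC_{I^-_M(\Sigma)}$ from \cite[Lemma~2.10]{FAvsAQFT} (together with standard Lorentzian neighborhood constructions to absorb closures if needed), we obtain a causally convex open $U'\subseteq I^-_M(\Sigma)$, relatively compact in $I^-_M(\Sigma)$, with $\mathrm{cl}_{I^-_M(\Sigma)}(U_1)\cup\mathrm{cl}_{I^-_M(\Sigma)}(U_2)\subseteq U'$. Since $U_i$ is relatively compact in $I^-_M(\Sigma)$ with closure contained in $U'$, it follows that $\mathrm{cl}_{U'}(S_i)\subseteq\mathrm{cl}_{U'}(U_i)=\mathrm{cl}_{I^-_M(\Sigma)}(U_i)$ is compact for $i=1,2$. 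Picking a Cauchy temporal function $t:U'\to\bbR$ on the globally hyperbolic $U'\in\Loc_m$ and setting $T_0\coloneqq \max_{p\in \mathrm{cl}_{U'}(S_1)\cup\mathrm{cl}_{U'}(S_2)}t(p)$, the level set $S'\coloneqq t^{-1}(T_0+1)\subset U'$ is a smooth spacelike Cauchy surface of $U'$, and every $p$ in the compact set above satisfies $t(p)<t(S')$, hence lies in $I^-_{U'}(S')$ (via the partition $I^-_{U'}(S')\sqcup S'\sqcup I^+_{U'}(S')=U'$ induced by $S'$). This yields $\mathrm{cl}_{U'}(S_i)\subset I^-_{U'}(S')$ for $i=1,2$, and hence morphisms $(U_i,S_i)\to(U',S')$ in $\RC_{(M,\Sigma)}$, since this condition implies both cases of Definition~\ref{def:RCMSigma} (the Cauchy-case requirement $S_i\subset J^-_{U'}(S')$ follows from $S_i\subseteq \mathrm{cl}_{U'}(S_i)\subset I^-_{U'}(S')\subseteq J^-_{U'}(S')$).

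The main technical point is not conceptual but a piece of Lorentzian bookkeeping: ensuring that the enlargement $U'$ genuinely contains the closures $\mathrm{cl}_{I^-_M(\Sigma)}(U_i)$ and not merely the opens $U_i$ themselves, since this is what makes $\mathrm{cl}_{U'}(S_i)$ compact and the time-function argument applicable. This is handled either by directly strengthening the filteredness statement for $\RC_{I^-_M(\Sigma)}$ (its proof readily delivers an $U'$ containing any prescribed compact subset of $I^-_M(\Sigma)$) or, equivalently, by invoking the standard fact that any compact subset of a globally hyperbolic spacetime sits inside a relatively compact causally convex globally hyperbolic open.
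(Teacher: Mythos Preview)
Your argument is correct and follows the same overall shape as the paper's proof (first produce a large relatively compact causally convex open $U'$ containing the closures $\mathrm{cl}_{I^-_M(\Sigma)}(U_i)$, then choose a Cauchy surface of $U'$ strictly to the future of the compact sets $\mathrm{cl}_{U'}(S_i)$), but the two proofs differ in the technical tools used at each step.

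For the enlargement, the paper builds $U'$ explicitly as the causally convex hull $J^+_{\Sigma^-}(\bigcup_i V_i)\cap J^-_{\Sigma^-}(\bigcup_i V_i)$ of a finite relatively compact cover of the compact set $\mathrm{cl}_{\Sigma^-}(U_1)\cup\mathrm{cl}_{\Sigma^-}(U_2)$, citing \cite[Lemma B.4]{HKstack} for relative compactness. This is precisely the ``standard fact'' you invoke in your final paragraph, so your outsourcing is harmless but the paper is more self-contained here. For the late Cauchy surface, the paper extends each compact achronal $\mathrm{cl}_{\Sigma^-}(S_i)\subset U'$ to a Cauchy surface of $U'$ via \cite[Proposition~3.6]{BernalSanchez2} and then applies Lemma~\ref{lem:surfaces} to pass to a common chronological future; your temporal-function argument is a cleaner and more direct alternative that avoids both of these intermediate steps. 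Either route works; yours is slightly more economical once $U'$ is in hand, while the paper's is more uniform with the techniques used elsewhere in Appendix~\ref{app:Lorentzian}.
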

\begin{proof}
Since $\RC_{(M,\Sigma)}$ is a thin category, it is filtered if and only if
it is non-empty and directed. In this proof we will use the notation
$\Sigma^- \coloneqq I^-_M(\Sigma)\subseteq M$ for the chronological past of $\Sigma$.
\sk

To show that the category $\RC_{(M,\Sigma)}$
is non-empty, we have to construct an object $(U,S)\in \RC_{(M,\Sigma)}$. 
Choosing any point $p\in \Sigma^-$ and any second point 
$q\in I^-_{\Sigma^-}(p)\subseteq \Sigma^-$
in the chronological past of $p$, the intersection $U \coloneqq I^-_{\Sigma^-}(p)\cap 
I^+_{\Sigma^-}(q) \subseteq \Sigma^-$ is non-empty and
it is a relatively compact causally convex open subset by \cite[Lemma A.5.12]{BGP}.
Choosing any Cauchy surface $S\subset U$ defines an object $(U,S)\in \RC_{(M,\Sigma)}$. 
\sk

To show that $\RC_{(M,\Sigma)}$ is directed, we have to construct for any two objects 
$(U_1,S_1),(U_2,S_2)\in \RC_{(M,\Sigma)}$
a third object $(U,S)\in\RC_{(M,\Sigma)}$ and two morphisms $(U_1,S_1)\to (U,S)\leftarrow (U_2,S_2)$.
By Definition \ref{def:RCMSigma}, the subset
\begin{flalign}
\cl_{\Sigma^-}(U_1)\cup \cl_{\Sigma^-}(U_2)\,\subseteq\, \Sigma^-
\end{flalign}
is compact, hence it admits a finite cover $\{V_i\subseteq \Sigma^-:i=1,\dots,n\}$
by relatively compact open subsets $V_i\subseteq \Sigma^-$. We define
\begin{flalign}\label{eqn:Uhull}
U\,\coloneqq\,J^+_{\Sigma^-}\bigg(\bigcup_{i=1}^n V_i\bigg)\cap J^-_{\Sigma^-}\bigg(\bigcup_{i=1}^n V_i\bigg)\,\subseteq\, \Sigma^-
\end{flalign}
to be the causally convex hull of the union of this relatively compact open cover, 
which by \cite[Chapter 14, Corollary 1]{ONeill} 
and \cite[Lemma B.4]{HKstack} is a relatively compact causally convex open subset.
By construction, we have that both $\cl_{\Sigma^-}(S_1)\subset U$ and 
$\cl_{\Sigma^-}(S_2)\subset U$ are achronal compact subsets, hence
they extend by \cite[Proposition 3.6]{BernalSanchez2} to two Cauchy surfaces
$\Sigma_1\subset U$ and $\Sigma_2\subset U$. Using Lemma \ref{lem:surfaces},
we can find another Cauchy surface $S\subset U$ which lies in the chronological
future $S\subset I_U^+(\Sigma_1)\cap I_U^+(\Sigma_2)$ of these two Cauchy surfaces. 
This defines an object $(U,S)\in\RC_{(M,\Sigma)}$. The morphism
$(U_1,S_1)\to (U,S)$ exists by the following argument: The subset inclusion $U_1\subseteq U$
holds true by construction \eqref{eqn:Uhull}
and $\mathrm{cl}_{U}(S_1)= \mathrm{cl}_{\Sigma^-}(S_1)\subset I^-_U(S)$
is a consequence of our particular choice of Cauchy surface $S\subset U$.
By the same argument one shows that the morphism $(U_2,S_2)\to (U,S)$ exists.
\end{proof}

\begin{lem}\label{lem:finality}
The forgetful functor
$\mathrm{forget}: \int_{\mathbf{\Sigma}_M}\!\RC\to \QQ_M$ 
from the proof of Lemma \ref{lem:bbArestriction} is final.
\end{lem}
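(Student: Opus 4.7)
The plan is to verify the standard criterion for finality: for every object $(U,S)\in \QQ_M$, the comma category $(U,S)\downarrow \mathrm{forget}$ is non-empty and connected. Since both $\int_{\mathbf{\Sigma}_M}\RC$ and $\QQ_M$ are thin (all hom-sets have at most one element), the comma category is also thin, and connectedness reduces to showing that any two of its objects admit a common upper bound.

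For non-emptiness, fix $(U,S)\in \QQ_M$. Since $U\in\RC_M$ is relatively compact in $M$, the closure $\cl_M(U)$ is compact. Using a Cauchy temporal function on the globally hyperbolic spacetime $M$, one selects a Cauchy surface $\Sigma\subset M$ whose chronological past $I^-_M(\Sigma)$ contains $\cl_M(U)$. Then $\cl_{I^-_M(\Sigma)}(U)=\cl_M(U)$ is compact and $S\subset U$ remains a Cauchy surface of $U$, so $(U,S)\in\RC_{(M,\Sigma)}$. The pair $(\Sigma,(U,S))$ together with the identity $\QQ_M$-morphism on $(U,S)$ furnishes an object of $(U,S)\downarrow \mathrm{forget}$.

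For connectedness, let $(\Sigma_i,(U_i',S_i'),\alpha_i)$ for $i=1,2$ be two objects of the comma category. Filtered-ness of $\mathbf{\Sigma}_M$ (Definition \ref{def:SigmaMcategory}, together with Lemma \ref{lem:surfaces}) produces a Cauchy surface $\Sigma_3\subset M$ with $\Sigma_1,\Sigma_2\subset J^-_M(\Sigma_3)$. Since $I^-_M(\Sigma_i)\subseteq I^-_M(\Sigma_3)$ and the compact set $\cl_M(U_i')$ still lies in $I^-_M(\Sigma_3)$, one checks that $(U_i',S_i')\in \RC_{(M,\Sigma_3)}$ for $i=1,2$. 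Proposition \ref{prop:RCMSigmafiltered} applied to the filtered category $\RC_{(M,\Sigma_3)}$ then yields a common upper bound $(U_3',S_3')\in\RC_{(M,\Sigma_3)}$ of $(U_1',S_1')$ and $(U_2',S_2')$, which assembles into morphisms $(\Sigma_i,(U_i',S_i'))\to (\Sigma_3,(U_3',S_3'))$ in $\int_{\mathbf{\Sigma}_M}\RC$. Because the comma category is thin, the triangles over $(U,S)$ commute automatically, so the two given objects of the comma category are connected via a common target.

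The main technical point, though elementary, is the verification that an object of $\RC_{(M,\Sigma_i)}$ remains an object of the larger $\RC_{(M,\Sigma_3)}$ when $\Sigma_i\to\Sigma_3$ in $\mathbf{\Sigma}_M$; this rests on the observation that relative compactness of $U_i'$ is preserved when the ambient chronological past is enlarged but still contains the (unchanged) compact closure $\cl_M(U_i')$. All remaining steps are direct applications of the filtered-ness results already established for $\mathbf{\Sigma}_M$ and $\RC_{(M,\Sigma)}$.
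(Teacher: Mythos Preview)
Your proposal is correct and follows essentially the same approach as the paper's proof: both verify non-emptiness of the comma category by choosing a Cauchy surface $\Sigma$ with $\cl_M(U)\subset I^-_M(\Sigma)$, and both verify connectedness by first pushing to a common later Cauchy surface $\Sigma_3$ (via Lemma~\ref{lem:surfaces}) and then applying filteredness of $\RC_{(M,\Sigma_3)}$ (Proposition~\ref{prop:RCMSigmafiltered}) to obtain a common upper bound. Your explicit remark that objects of $\RC_{(M,\Sigma_i)}$ remain objects of $\RC_{(M,\Sigma_3)}$ spells out what the paper leaves implicit in its phrase ``obvious inclusion functor''.
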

\begin{proof}
Given any object $(\tilde{U},\tilde{S})\in \QQ_M$, we have to show that
the comma category
\begin{flalign}
(\tilde{U},\tilde{S})/\mathrm{forget}\,=\,\begin{cases}
\text{$\mathsf{Obj}$:} & ~(\Sigma,(U,S))\in \int_{\mathbf{\Sigma}_M}\!\RC\text{ such that }
(\tilde{U},\tilde{S})\to (U,S) \text{ in } \QQ_M\\[4pt]
\text{$\mathsf{Mor}$:} & ~(\Sigma,(U,S))\to (\Sigma^\prime,(U^\prime,S^\prime)) \text{ in } \int_{\mathbf{\Sigma}_M}\!\RC
\end{cases}
\end{flalign}
is non-empty and connected. (Note that all the categories involved are thin, i.e.\ there exists
at most one morphism between any two objects. Hence,
the morphisms in the comma category satisfy automatically the required commutative triangles.)
To verify non-emptyness, we use that $\tilde{U}\subseteq M$ is relatively compact,
hence there exists a Cauchy surface $\tilde{\Sigma}\subset M$ with 
$\cl_M(\tilde{U})\subseteq I^-_{M}(\tilde{\Sigma})$. 
(See e.g.\ \cite[Proposition A.5.13]{BGP} for a proof of this claim.) 
This defines an object 
$(\tilde{\Sigma},(\tilde{U},\tilde{S}))\in (\tilde{U},\tilde{S})/\mathrm{forget}$.
To verify connectedness, consider any two objects $(\Sigma,(U,S)),(\Sigma^\prime,(U^\prime,S^\prime))
\in (\tilde{U},\tilde{S})/\mathrm{forget}$. By Lemma~\ref{lem:surfaces}, there
exists a later Cauchy surface $\Sigma^{\prime\prime}\subset M$ such that
$\Sigma\subset I^-_M(\Sigma^{\prime\prime})$ and $\Sigma^\prime\subset I^-_M(\Sigma^{\prime\prime})$.
Then
\begin{flalign}
\xymatrix{
(\Sigma,(U,S)) \ar[r] & (\Sigma^{\prime\prime},(U,S))\ar[r] & 
(\Sigma^{\prime\prime},(U^{\prime\prime},S^{\prime\prime})) & 
\ar[l] (\Sigma^{\prime\prime},(U^\prime,S^\prime)) & \ar[l] (\Sigma^{\prime},(U^\prime,S^\prime))
}
\end{flalign}
defines a sequence of morphisms in $(\tilde{U},\tilde{S})/\mathrm{forget}$
which connects the two objects, where the object 
$(U^{\prime\prime},S^{\prime\prime})\in \RC_{(M,\Sigma^{\prime\prime})}$ in the middle
exists by filteredness of $\RC_{(M,\Sigma^{\prime\prime})}$, see Proposition \ref{prop:RCMSigmafiltered}.
\end{proof}


%
%



\begin{thebibliography}{10}

\bibitem[Ati88]{Atiyah}
M.~Atiyah,
``Topological quantum field theories,'' 
Inst.\ Hautes {\'E}tudes Sci.\ Publ.\ Math.\ \textbf{68}, 175--186 (1988).


\bibitem[BGP07]{BGP}
C.~B\"ar, N.~Ginoux and F.~Pf\"affle, 
{\it Wave equations on Lorentzian manifolds and quantization},
Eur.\ Math.\ Soc., Z\"urich (2007) 
[arXiv:0806.1036 [math.DG]].


\bibitem[BCGS26]{BCGS}
M.~Benini, V.~Carmona, A.~Grant-Stuart and A.~Schenkel,
``On the equivalence of AQFTs and prefactorization algebras,''
Lett.\ Math.\ Phys.\ \textbf{116}, 13 (2026)
[arXiv:2412.07318 [math-ph]].


\bibitem[BGS24]{HKstack}
M.~Benini, A.~Grant-Stuart and A.~Schenkel,
``Haag-Kastler stacks,''
\textit{to appear in Communications in Contemporary Mathematics}
[arXiv:2404.14510 [math-ph]].


\bibitem[BMS24]{BMS}
M.~Benini, G.~Musante and A.~Schenkel,
``Quantization of Lorentzian free BV theories: factorization algebra vs algebraic quantum field theory,''
Lett.\ Math.\ Phys.\ \textbf{114}, no.\ 1, 36 (2024)
[arXiv:2212.02546 [math-ph]].


\bibitem[BPS20]{FAvsAQFT}
M.~Benini, M.~Perin and A.~Schenkel, 
``Model-independent comparison between factorization algebras 
and algebraic quantum field theory on Lorentzian manifolds,'' 
Commun.\ Math.\ Phys.\ \textbf{377}, 971--997 (2020)
[arXiv:1903.03396 [math-ph]]. 


\bibitem[BPSW21]{2AQFT}
M.~Benini, M.~Perin, A.~Schenkel and L.~Woike,
``Categorification of algebraic quantum field theories,''
Lett.\ Math.\ Phys.\ \textbf{111}, no.\ 2, 35 (2021)
[arXiv:2003.13713 [math-ph]].


\bibitem[BS25]{BSchapter}
M.~Benini and A.~Schenkel,
``Operads, homotopy theory and higher categories in algebraic quantum field theory,'' 
in: R.~Szabo and M.~Bojowald (eds.), 
{\it Encyclopedia of Mathematical Physics}, 
Second Edition, Volume 5, 556--568 (2025)
[arXiv:2305.03372 [math-ph]].


\bibitem[BSW21]{AQFToperad}
M.~Benini, A.~Schenkel and L.~Woike,
``Operads for algebraic quantum field theory,''
Commun.\ Contemp.\ Math.\ \textbf{23}, no.\ 2, 2050007 (2021)
[arXiv:1709.08657 [math-ph]].


\bibitem[BS05]{BernalSanchez}
A.~N.~Bernal and M.~Sanchez,
``Smoothness of time functions and the metric splitting of globally hyperbolic space-times,''
Commun.\ Math.\ Phys.\ \textbf{257}, 43--50 (2005)
[arXiv:gr-qc/0401112 [gr-qc]].


\bibitem[BS06]{BernalSanchez2}
A.~N.~Bernal and M.~Sanchez,
``Further results on the smoothability of Cauchy hypersurfaces and Cauchy time functions,''
Lett.\ Math.\ Phys.\ \textbf{77}, 183--197 (2006)
[arXiv:gr-qc/0512095 [gr-qc]].


\bibitem[BFV03]{BFV}
R.~Brunetti, K.~Fredenhagen and R.~Verch,
``The generally covariant locality principle: A new paradigm for local quantum field theory,''
Commun.\ Math.\ Phys.\ \textbf{237}, 31--68 (2003)
[arXiv:math-ph/0112041].


\bibitem[BMS25]{FFTvAQFT}
S.~Bunk, J.~MacManus and A.~Schenkel,
``Lorentzian bordisms in algebraic quantum field theory,''
Lett.\ Math.\ Phys.\ \textbf{115}, no.\ 1, 16 (2025)
[arXiv:2308.01026 [math-ph]].


\bibitem[CG17]{CG1}
K.~Costello and O.~Gwilliam,
{\it Factorization algebras in quantum field theory: Volume 1},
New Mathematical Monographs \textbf{31}, 
Cambridge University Press, Cambridge (2017).


\bibitem[CG21]{CG2}
K.~Costello and O.~Gwilliam,
{\it Factorization algebras in quantum field theory: Volume 2},
New Mathematical Monographs \textbf{41}, 
Cambridge University Press, Cambridge (2021).


\bibitem[EM09]{Mandell}
A.~D.~Elmendorf and M.~A.~Mandell,
``Permutative categories, multicategories and algebraic $K$-theory,'' 
Algebr.\ Geom.\ Topol.\ \textbf{9}, no.\ 4, 2391--2441 (2009)
[arXiv:0710.0082 [math.KT]].


\bibitem[FR20]{FewsterRejzner}
C.~J.~Fewster and K.~Rejzner,
``Algebraic quantum field theory -- an introduction,''
in: F.~Finster, D.~Giulini, J.~Kleiner and J.~Tolksdorf (eds.),
{\it Progress and visions in quantum theory in view of gravity}, 
Birkh\"auser, Cham (2020)
[arXiv:1904.04051 [hep-th]].


\bibitem[FV12]{FVstate}
C.~J.~Fewster and R.~Verch,
``Dynamical locality and covariance: What makes a physical theory the same in all spacetimes?,''
Annales Henri Poincar{\'e} \textbf{13}, 1613--1674 (2012)
[arXiv:1106.4785 [math-ph]].


\bibitem[FV15]{FV}
C.~J.~Fewster and R.~Verch, 
``Algebraic quantum field theory in curved spacetimes,'' 
in: R.~Brunetti, C.~Dappiaggi, K.~Fredenhagen and J.~Yngvason (eds.), 
{\it Advances in algebraic quantum field theory},
Springer Verlag, Heidelberg (2015) 
[arXiv:1504.00586 [math-ph]].


\bibitem[FT14]{FT}
D.~S.~Freed and C.~Teleman,
``Relative quantum field theory,''
Commun.\ Math.\ Phys.\ \textbf{326}, 459--476 (2014)
doi:10.1007/s00220-013-1880-1
[arXiv:1212.1692 [hep-th]].


\bibitem[GR20]{GR1}
O.~Gwilliam and K.~Rejzner,
``Relating nets and factorization algebras of observables: Free field theories,''
Commun.\ Math.\ Phys.\ \textbf{373}, no.\ 1, 107--174 (2020)
[arXiv:1711.06674 [math-ph]].


\bibitem[GR22]{GR2}
O.~Gwilliam and K.~Rejzner,
``The observables of a perturbative algebraic quantum field theory form a factorization algebra,''
arXiv:2212.08175 [math-ph].


\bibitem[HK64]{HaagKastler}
R.~Haag and D.~Kastler, 
``An algebraic approach to quantum field theory,'' 
J.\ Math.\ Phys.\ \textbf{5}, 848 (1964).


\bibitem[JF21]{JohnsonFreyd}
T.~Johnson-Freyd, 
``Heisenberg-picture quantum field theory,'' 
in: A.~Alekseev, E.~Frenkel, M.~Rosso, B.~Webster and M.~Yakimov (eds.), 
{\it Representation theory, mathematical physics, and integrable systems}, 
Progress in Mathematics \textbf{340}, Birkh\"auser, Cham (2021)
[arXiv:1508.05908 [math-ph]].


\bibitem[JY21]{Yau}
N.~Johnson and D.~Yau,
{\it $2$-dimensional categories},
Oxford University Press, Oxford (2021)
[arXiv:2002.06055 [math.CT]].


\bibitem[MF06]{PseudoCats}
N.~Martins-Ferreira,
``Pseudo-categories,''
J.\ Homotopy Relat.\ Struct.\ \textbf{1}, no.\ 1, 47--78 (2006)
[arXiv:math/0604549 [math.CT]].


\bibitem[Min19]{Minguzzi}
E.~Minguzzi,
``Lorentzian causality theory,''
Living Rev.\ Relativ.\ \textbf{22}, 3 (2019).


\bibitem[ONe83]{ONeill}
B.~O'Neill, 
{\it Semi-Riemannian geometry}, 
Academic Press, New York (1983). 


\bibitem[SS25]{SatiSchreiber}
H.~Sati and U.~Schreiber,
``Complete topological quantization of higher gauge fields,''
arXiv:2512.12431 [hep-th].


\bibitem[Sch14]{Scheimbauer}
C.~I.~Scheimbauer,
{\it Factorization homology as a fully extended topological field theory},
PhD thesis, ETH Z\"urich (2014).
\url{https://doi.org/10.3929/ethz-a-010399715}


\bibitem[Sch09]{Schreiber}
U.~Schreiber, 
``AQFT from $n$-functorial QFT,'' 
Commun.\ Math.\ Phys.\ \textbf{291}, 357--401 (2009)
[arXiv:0806.1079 [math.CT]].


\bibitem[Seg04]{Segal}
G.~B.~Segal,
``The definition of conformal field theory,'' 
in: U.~Tillmann (ed.),
{\it Topology, Geometry and Quantum Field Theory},
London Math.\ Soc.\ Lecture Note Ser., Vol.\ \textbf{308}, 
Cambridge University Press, Cambridge (2004).


\bibitem[Shu10]{Shulman}
M.~A.~Shulman,
``Constructing symmetric monoidal bicategories,''
arXiv:1004.0993 [math.CT].


\bibitem[ST11]{StolzTeichner}
S.~Stolz and P.~Teichner,
``Supersymmetric field theories and generalized cohomology,'' 
in: H.~Sati and U.~Schreiber (eds.),
{\it Mathematical foundations of quantum field theory and perturbative string theory}, 
Proc.\ Sympos.\ Pure Math., Vol.\ \textbf{83}, 
Amer.\ Math.\ Soc., Providence, RI (2011)
[arXiv:1108.0189 [math.AT]].


\bibitem[Wal95]{Wald}
R.~M.~Wald, 
{\it Quantum field theory in curved space-time and black hole thermodynamics},
Chicago Lectures in Physics, University of Chicago Press, Chicago, IL (1995).


\bibitem[Wit88]{Witten}
E.~Witten, 
``Topological quantum field theory,''
Comm.\ Math.\ Phys.\ \textbf{117}, 353--386 (1988).


\end{thebibliography}
\end{document}